%% file: main.tex
\documentclass[a4paper,USenglish,thm-restate,cleveref]{lipics-v2021}
\usepackage{graphicx} 
\usepackage{amsmath,amsthm}
\usepackage{xcolor}
\usepackage{algorithmicx}
\usepackage{algorithm} 
\usepackage{algpseudocode}
\usepackage{xifthen}
\usepackage{graphicx}
\usepackage{wrapfig}
\usepackage{dblfloatfix}
\usepackage{xspace}
\usepackage{comment}
\usepackage{url}

\AddToHook{cmd/appendix/before}{ 
    \crefalias{section}{appendix}
    \crefalias{subsection}{appendix}
}

\newcommand{\ayaz}[1]{{\color{cyan} \textbf{Ayaz: }#1}}
\newcommand{\marc}[1]{{\color{purple} \textbf{Marc: }#1}}

\algnewcommand{\BlueComment}[1]{\textcolor{blue}{\hfill\(\triangleright\) #1}}

\newcommand{\tssCombine}[2]{\mathsf{Combine}^{#1}(#2)\xspace}
\newcommand{\tssVerify}[2]{\mathsf{CombinedVerify}^{#1}(#2)\xspace}
\newcommand{\tssShareSign}[3]{\mathsf{ShareSign}_{#1}^{#2}(#3)\xspace}
\newcommand{\tssShareVerify}[3]{\mathsf{ShareVerify}_{#1}^{#2}(#3)\xspace}

\newcommand{\extValid}[1]{\mathsf{externallyValid}(#1)\xspace}

\author{Marc Dufay}{ETH Z\"{u}rich, Switzerland}{mdufay@ethz.ch}{}{}
\author{Muhammad Ayaz Dzulfikar\footnote{Work was partially done while at National University of Singapore.}}{École Polytechnique Fédérale de
Lausanne, Switzerland}{muhammad.dzulfikar@epfl.ch}{}{}
\author{Seth Gilbert}{National University of Singapore, Singapore}{seth.gilbert@nus.edu.sg}{}{}

\authorrunning{M. Dufay, M.\,A. Dzulfikar, and S. Gilbert}

\Copyright{Marc Dufay, Muhammad A. Dzulfikar, and Seth Gilbert}

\ccsdesc[500]{Theory of computation~Distributed computing models}
\ccsdesc[500]{Theory of computation~Distributed algorithms}
\ccsdesc[500]{Security and privacy~Distributed systems security}

\keywords{Byzantine agreement, Algorithm with predictions, Synchronous network}

\title{Predictions Can Only Help! Communication Efficient Byzantine Agreement with Predictions 
}

\date{\today}

\nolinenumbers

\begin{document}
\thispagestyle{empty}

\begin{titlepage}
\thispagestyle{empty}

\maketitle
\input{sections/abstract}

\end{titlepage}

\input{sections/introduction}
\input{sections/related_work}
\input{sections/preliminaries}
\input{sections/technical_overview}

\input{sections/using_prediction}
\input{sections/unauthenticated}
\input{sections/authenticated}
\input{sections/conclusion}

\bibliographystyle{plainurl}
\bibliography{references}

\appendix
\input{appendix/missing_proof_using_class_pred}
\input{appendix/strong_to_external}

\end{document}

%% file: sections/abstract.tex
\begin{abstract}
    In Byzantine agreement with predictions each process begins with an input value and some (unreliable) prediction bits. Recently, it has been shown that with \emph{classification predictions}---where the predictions predict each process to be honest or faulty---Byzantine agreement can be completed more quickly than without predictions, circumventing the traditional $\Omega(f)$ round lower bound. However, existing algorithms either handle limited prediction errors or send too many messages. Moreover, they all exchange $\Omega(n^3)$ bits---enough to allow the processes to approximately agree on the classifications. In fact, it almost seemed necessary to share a significant number of prediction bits if one wanted to tolerate a high number of incorrect predictions.

    In this paper, we show that this high level of communication is not inherent to a round-efficient protocol with predictions. We provide an unauthenticated algorithm with near-optimal $\tilde{\mathcal{O}}(n^2)$ communication complexity and optimal resilience $t < n/3$. Furthermore, with authentication, we give an algorithm with optimal $\mathcal{O}(n^2\kappa)$ communication complexity (where $\kappa$ is a security parameter) and near-optimal resilience $t < (\frac{1}{2} - \epsilon)n$ for any constant $\epsilon > 0$. All of our results have optimal round complexity for any number of errors in the predictions. 
    
\end{abstract}

%% file: sections/introduction.tex
\section{Introduction}

\subparagraph{Byzantine agreement}
Byzantine agreement is a central problem in distributed computing. It appears at the heart of many distributed protocols, like state machine replication~\cite{ Abd-El-MalekGGRW05, castro1999pbft}, blockchain protocols~\cite{buchman2016tendermint, crain2018dbft, gilad2017algorand}, and many others. In Byzantine agreement, there are $n$ processes that want to agree on a decision. Here, up to $t$ processes may be Byzantine, i.e., deviate arbitrarily. 
It is known that, in general, at most $t < n/3$ faults can be tolerated; and in the synchronous, authenticated setting, at most $t < n/2$ faults can be tolerated. 
Despite its importance, Byzantine agreement is inherently expensive. It is well-known that (synchronous) deterministic Byzantine agreement requires $\Omega(f)$ rounds \cite{dolev1990earlystopping} and exchanges $\Omega(n + f^2)$ bits \cite{dolev1985bounds}, where $f \le t$ is the actual number of faults during the execution. (Here, $t$ is a parameter of the algorithm, while $f$ is unknown and set by the adversary.)

\subparagraph{Predictions.}
To cope with malicious attacks, real-world distributed applications often employ network monitoring tools that can flag suspicious processes. These tools might be based, for example, on AI, such as Darktrace~\cite{darktrace} and VectraAI~\cite{vectraai}. Intuitively, if we can correctly detect faulty processes, then we should be able to execute protocols more efficiently, leading to faster agreement (or faster block production on a blockchain, etc.). 

Alas, these tools typically do not provide perfect detection of malicious users: sometimes an honest user is incorrectly flagged as malicious, and sometimes a malicious user goes undetected.  Therefore, one might imagine designing an algorithm with the following property: (i) when the detection mechanism is accurate, the algorithm will perform better, and (ii) when it is not, the algorithm will do no worse than a protocol with no such detection mechanism.

\subparagraph{Faster agreement with predictions}
This problem was explored by Ben-David, Dzulfikar, Ellen, and Gilbert~\cite{bendavid2025predictions}; following the paradigm of \emph{algorithms with predictions}, they study Byzantine agreement with \emph{classification predictions}. In this problem, along with their inputs, each process also receives an $n$-bit string denoting an (unreliable) prediction indicating whether each other process is honest or Byzantine. A key parameter for performance is the number of incorrect bits in the predictions, $B \le n^2$. They show that predictions do not help in reducing the communication complexity (number of bits exchanged), i.e., the $\Omega(n + f^2)$ lower bound still holds. However, classification predictions do help in reducing the round complexity to $\mathcal{O}(\min\{B/n, f\})$ rounds. For example, when the predictions are mostly correct, e.g., $B \in \mathcal{O}(1)$, then processes can decide in $\mathcal{O}(1)$ rounds (which is a significant improvement on the $f+1$ rounds otherwise required). Furthermore, when the predictions are quite wrong, the algorithm still performs as well as existing algorithms. Moreover, they also showed that this is asymptotically tight by giving a matching $\Omega(\min\{B/n, f\})$ lower bound.

However, these new algorithms came with several caveats. First, in the unauthenticated setting, their algorithms can only handle a limited number of prediction errors. Moreover, all of their algorithms exchange $\Omega(n^3)$ bits.  This was because all the proposed protocols required all processes to share their predictions, and this was needed to approximately agree on the classification predictions. Alas, this global reconciliation is expensive!  Hence, it is natural to ask: \emph{can we design algorithms that achieve optimal $\mathcal{O}(\min\{B/n, f\})$ round complexity for any $B$, and do so with optimal communication complexity?}

\subparagraph{Our contributions} 
In this paper, we show that cubic communication complexity is not inherent, but simply a consequence of the global reconciliation used in earlier algorithms.  We present several communication-efficient algorithms, unauthenticated and authenticated, that decide in $\mathcal{O}(\min\{B/n, f\})$ rounds for any number of prediction errors $B$. First, in the unauthenticated setting, we achieve optimal round complexity for any number of prediction errors, while exchanging a near-optimal $\tilde{\mathcal{O}}(n^2)$ bits. Second, taking a different approach with the help of cryptography, we achieve $\mathcal{O}(n^2\kappa)$ communication complexity (where $\kappa$ is a security parameter).


Our first result is built upon the \emph{guess-and-double} structure from \cite{bendavid2025predictions}. In a nutshell, by assigning processes into \emph{groups} and doing \emph{leader election on groups} to choose a committee, we build a Byzantine agreement protocol whose round complexity scales with the prediction errors. To break the cubic barrier, we show that group leader election can be achieved by sending only $\mathcal{O}(\log n)$ bits to parties of the given group. Even though some groups may fail to elect an honest leader, we prove that only a small proportion of the groups may fail, which is sufficient. This yields a resilience of $t < n/3$, and by plugging it into the aforementioned structure, we obtain the following:
\begin{restatable}{theorem}{unauthsubcubic}
    \label{thm:unauth_subcubic_bits}
    There is an unauthenticated algorithm for Byzantine agreement that tolerates up to $t < n/3$ faulty processes. Given classification predictions with $B$ error bits, the algorithm terminates in $\mathcal{O}(\min\{B/n, f\})$ rounds. Furthermore, the algorithm exchanges $\mathcal{O}(n^2 \log (\min\{B/n, f\}))$ messages and $\mathcal{O}(n^2 \log (n) \log (\min\{B/n, f\}))$ bits.
\end{restatable}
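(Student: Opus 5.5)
We prove the theorem by instantiating the guess-and-double framework of \cite{bendavid2025predictions} with a cheaper group leader election. The algorithm runs in phases $k = 1, 2, \dots$. In phase $k$ we guess that the relevant error parameter is about $2^k$. We partition the $n$ processes into $\Theta(2^k)$ groups of size about $n/2^k$. Each group elects a leader using the classification predictions of its members, and the elected leaders form a committee of size $\Theta(2^k)$. The committee runs a classic early-stopping Byzantine agreement (e.g.\ a Dolev--Strong/phase-king style protocol) for $\mathcal{O}(2^k)$ rounds and disseminates its candidate value to all processes. A graded-consensus step then follows. Validity and agreement are preserved across phases, and once some phase succeeds, everyone decides and later phases only adopt the locked value.

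\textbf{Step 1: round complexity.} We argue that the phase $k^\star$ with $2^{k^\star} = \Theta(\min\{B/n, f\})$ succeeds. If $2^k \gtrsim f$, a majority of the committee is honest for trivial reasons. Otherwise, a group fails only if its members' predictions contain about $\Omega(n/2^k)$ errors restricted to that group. Summing over groups and using the bound $B$, at most $B/(n/2^k) \cdot \mathcal{O}(1)$ groups fail, which is a small constant fraction of the $\Theta(2^k)$ groups once $2^k \gtrsim B/n$. Since phase $k$ costs $\mathcal{O}(2^k)$ rounds, the geometric sum gives $\mathcal{O}(\min\{B/n, f\})$ rounds in total.

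\textbf{Step 2: communication.} Each phase must cost only $\mathcal{O}(n^2 \log n)$ bits. Then the $\mathcal{O}(\log \min\{B/n, f\})$ phases give the claimed bound, and the message count follows the same way with $\mathcal{O}(n^2)$ messages per phase. The costs break down as follows.
\begin{itemize}
\item Group election: each process sends $\mathcal{O}(\log n)$ bits (a candidate identifier) to every member of its group, using at most $\mathcal{O}(n)$ messages per process.
\item Committee agreement: with $\Theta(2^k)$ members and $\mathcal{O}(2^k)$ rounds, this costs $\mathcal{O}(2^{3k}) \le \mathcal{O}(n^2 \cdot 2^k)$. To stay within the budget we cap $2^k \le$ a suitable threshold and fall back to a standard $\mathcal{O}(f)$-round, $\mathcal{O}(n^2)$-message early-stopping protocol when the guess would exceed it. Equivalently, we check that a phase-king committee of size $c$ sends $\mathcal{O}(c^2)$ messages per round over $\mathcal{O}(c)$ rounds, which I would verify fits within $\mathcal{O}(n^2)$.
\item Dissemination and graded consensus: these are all-to-all with constant-size values, hence $\mathcal{O}(n^2)$.
\end{itemize}

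\textbf{Main obstacle.} The hard part is the counting argument for the new lightweight group election. Because processes exchange only $\mathcal{O}(\log n)$ bits, group members may end up with inconsistent leader views. I would have each honest member propose the lowest-ID process in the group that it predicts to be honest, and let the leader be a value proposed by more than two thirds of the group. I would then show that whenever the group's prediction errors are fewer than a constant fraction of $(n/2^k)^2$, the honest majority of the group agrees on an honest leader. A group with no agreed leader contributes no committee member, and the counting in Step 1 must then tolerate $t < n/3$ overall. Combining the two bounds, with Markov-style counting over groups showing that at most a $1/3$ fraction of groups are bad, gives the resilience $t < n/3$.
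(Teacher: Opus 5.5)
Your proposal has genuine gaps. It is missing both ideas that carry the paper's proof.

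\textbf{The group election and the counting in Step~1.} The count does not work. If a failed group needs only $\Omega(n/2^k)$ errors, then up to $\mathcal{O}(2^k B/n)$ of the $\Theta(2^k)$ groups can fail. That is a $\Theta(B/n)$ fraction, independent of $k$, so it is not small unless $B=\mathcal{O}(n)$. To get only $\mathcal{O}(B/n)$ failures, which is a small fraction once $2^k \ge B/n$, every failed group must absorb $\Omega(n)$ errors regardless of its size.

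An election that consults only the group's own $\Theta(n/2^k)$ members cannot ensure this. Under your rule, about a third of the honest members each mispredicting one low-ID Byzantine member as honest already stops any candidate from exceeding two thirds of the proposals. Even your claimed $\Omega((n/2^k)^2)$ threshold would give $\mathcal{O}(B\,2^{2k}/n^2)$ failures. That is a small fraction only if $2^k \ll n^2/B$, and together with $2^k \ge B/n$ this forces $B \ll n^{1.5}$, which is exactly the restriction of \cite{bendavid2025predictions}.

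The paper instead lets all $n$ processes vote on every group:
\begin{itemize}
\item Each process sends the $10$ smallest members of $G$ it predicts honest to every member of $G$. This costs $\mathcal{O}(n^2\log n)$ bits per phase, since each process lies in $\mathcal{O}(1)$ groups.
\item Each member of $G$ picks the smallest candidate with more than $n/2$ votes and broadcasts it.
\item Everyone outputs the majority choice of $G$.
\end{itemize}
In a resilient group, failure forces at least $n/10$ honest processes that predict the smallest honest member as honest to have spent all $10$ votes on Byzantine processes. That is at least $n$ errors (\Cref{lemma:unauth-misprediction}), so at most $DB/n < m/6$ groups fail.

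This majority-of-$G$ step requires most groups to have an honest majority, and a plain partition cannot give that for $t<n/3$. Faults are placed adversarially, so no Markov-style averaging is available. With $f$ close to $n/3$, the adversary can make about $2m/3$ groups Byzantine-majority. It can also give almost all groups a Byzantine third, which blocks your $>2/3$ rule. This is why the paper uses the extractor-based grouping of \Cref{lemma:exist_resilient}. There each process lies in at most $D=\mathcal{O}(1)$ groups, and fewer than $m/12$ groups have a Byzantine majority.

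\textbf{The committee agreement in Step~2.} The committee protocol cannot be phase king. A committee of size $c$ doing $\mathcal{O}(c)$ rounds of all-to-all exchange costs $\Theta(c^3)$ bits. For $c=\Theta(n)$ that is $\Theta(n^3)$, so it does not fit within $\mathcal{O}(n^2)$. Capping $2^k$ near $n^{2/3}$ and falling back to the $\mathcal{O}(f)$-round protocol breaks the round bound whenever $n^{2/3} \ll B/n \ll f$.

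What you need is a protocol whose \emph{total} communication is $\mathcal{O}(c^2)$ regardless of its $\mathcal{O}(c)$ rounds. The paper uses the bit-efficient agreement of Berman, Garay, and Perry (\Cref{lemma:black_box_ba}), in which each process sends $\mathcal{O}(c)$ bits overall. The implicit committee simulates it with $\mathcal{O}(\log n)$-bit sender/receiver tags. This gives $\mathcal{O}(n^2)$ messages and $\mathcal{O}(n^2\log n)$ bits per phase. Dolev--Strong is not an option in any case, since it requires signatures.
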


Finally, we give a more efficient authenticated algorithm. Cryptography significantly simplifies the leader election process, allowing processes to attach succinct proofs of leader legitimacy, which enables election with a higher level of resilience.  This results in near-optimal communication and resilience simultaneously, with $t < (\frac{1}{2}-\epsilon)n$:
%

\begin{restatable}{theorem}{authquad}
    \label{thm:auth_quad_bits}
    There is an authenticated algorithm for Byzantine agreement that tolerates up to $t < (\frac{1}{2}-\epsilon)n$ faulty processes for any constant $0 < \epsilon < \frac{1}{2}$. Given classification predictions with $B$ error bits, the algorithm terminates in $\mathcal{O}(\min\{B/n, f\})$ rounds. Furthermore, the algorithm exchanges $\mathcal{O}(n^2)$ messages and $\mathcal{O}(n^2\kappa)$ bits.
\end{restatable}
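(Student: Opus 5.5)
We would prove the theorem by instantiating the guess-and-double framework of \cite{bendavid2025predictions}, replacing its group leader election (which needs global reconciliation of predictions) with a cryptographic one. The algorithm runs in phases $k = 1, 2, 4, \dots$. In phase $k$, processes attempt a Byzantine agreement among a small committee of size roughly $k$, chosen from the predictions, for $\mathcal{O}(k)$ rounds. In parallel, a classical $\mathcal{O}(f)$-round fallback (e.g., an early-stopping, communication-efficient authenticated protocol with $\mathcal{O}(n^2)$ messages) guarantees the $\min\{\cdot, f\}$ term. Consistency across phases is kept by graded consensus (or a similar adopt/commit gadget) before and after each phase. Once some process commits, every honest process therefore adopts the committed value in later phases. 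The first step is to specify these gadgets with threshold signatures ($\mathsf{ShareSign}$, $\mathsf{Combine}$, $\mathsf{CombinedVerify}$), so that each all-to-all step costs $\mathcal{O}(n^2)$ messages of $\mathcal{O}(\kappa)$ bits. Relays of certificates are also restricted to $\mathcal{O}(n)$ messages per leader.

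The second step is committee election. Partition the processes into $k$ groups of size $n/k$. Each process sends its $n/k$ prediction bits about group $g$ only to the members of $g$, which costs $\mathcal{O}(n)$ bits per process per phase. Within a group, a process that is predicted honest by at least $(\frac12 - \epsilon)$-fraction-exceeding-$t$-share members (i.e., enough to guarantee at least one honest endorser) can collect signature shares of endorsement and combine them into a succinct certificate of legitimacy. The group's leader is then the lowest-ID certified candidate. We must show that a group fails to produce an honest certified leader only if it absorbs $\Omega((n/k)^2 \cdot \epsilon)$ prediction errors, or if it contains a large fraction of Byzantine processes. By counting, once $k = \Omega(B/n)$, a constant fraction of the groups (by the $\epsilon$-slack in $t < (\frac12-\epsilon)n$) succeeds, so the committee of group leaders has an honest majority.

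The third step handles certificate ambiguity. A Byzantine group may produce several certified leaders. Each honest process selects, per group, the lowest-ID certificate it has seen, and certificates are echoed once so that the views differ in a controlled way. The committee BA then runs with certificates attached, so that only certified members' messages count. We expect this combination of the group counting and certificate equivocation to be the main obstacle. The difficulty is that equivocating Byzantine groups could inflate the committee, so the argument must show that honest-majority-among-distinct-groups still suffices for committee BA (e.g., a Dolev--Strong-style protocol among the leaders, which tolerates any number of faults inside the committee as long as outputs are disseminated with certificates to all $n$ processes). The parameter $\epsilon$ enters precisely through the fraction of groups that must succeed.

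Finally, we would account for cost. Phase $k$ uses $\mathcal{O}(k)$ rounds and $\mathcal{O}(n^2)$ messages if the per-phase communication is made geometric, for instance by having committee members only broadcast and charging $\mathcal{O}(nk)$ per committee round plus $\mathcal{O}(n^2/\log)$ amortized dissemination. Summing over the doubling phases up to $k = \Theta(\min\{B/n, f\})$, rounds total $\mathcal{O}(\min\{B/n, f\})$ and messages total $\mathcal{O}(n^2)$, each of $\mathcal{O}(\kappa)$ bits. If the per-phase cost cannot be made geometric, the fallback is to instead run the prediction-independent expensive parts only once, gated by the graded consensus.
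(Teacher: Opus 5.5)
There is a genuine gap, and it sits exactly where this theorem is stronger than the unauthenticated one: the $\mathcal{O}(n^2)$ message bound.

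\textbf{Message complexity.} Every gadget you run per phase costs $\Theta(n^2)$ messages per phase. This covers the all-to-all graded consensus before and after each phase, re-collecting endorsement shares for the new grouping, and the truncated fallback BA with $\mathcal{O}(n^2)$ messages. Over $\Theta(\log\min\{B/n,f\})$ phases this gives $\mathcal{O}(n^2\log\min\{B/n,f\})$ messages, which is the unauthenticated bound. Your last paragraph is internally inconsistent: it says ``phase $k$ uses $\mathcal{O}(n^2)$ messages'' but then ``messages total $\mathcal{O}(n^2)$''. The phrase ``$\mathcal{O}(n^2/\log)$ amortized dissemination'' is not a mechanism.

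The paper makes the sum geometric by charging only $\mathcal{O}(n)$ messages per view.
\begin{itemize}
\item A process's endorsement certificate ($\lceil (n+1)/2\rceil$ threshold shares from processes predicting it honest) does not depend on the grouping. So votes are collected once on $G=\Pi$, costing $\mathcal{O}(n^2\kappa)$ bits, and every later group election is purely local.
\item There is no per-phase graded consensus. Strong unanimity is reduced once to external validity, and a single leader-based validated BA runs across all phases with commit certificates carried over. Each view costs $\mathcal{O}(n)$ messages: a leader proof needs $n-t$ shares, so fewer than $1/(2\epsilon)$ processes can act as leader in a view, and forwarding uses a constant-degree expander.
\item The $\mathcal{O}(f)$ fallback is not a separate protocol. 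It is the round-robin prefix $p_1,\dots,p_{\hat k+1}$ of the same leader sequence.
\end{itemize}
The total is then $\mathcal{O}(n\cdot\min\{B/n,f\})+\mathcal{O}(n^2)=\mathcal{O}(n^2)$ messages.

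\textbf{The committee step.} You name it as the main obstacle but give no argument for it, and it is unnecessary. When honest processes disagree on who occupies a slot, Dolev--Strong signature chains are checked against different keys at different honest processes, so its relay argument does not carry over as is. Moreover, BA via $k$ parallel Dolev--Strong broadcasts costs $\Theta(k^3)$ messages, which is $\omega(n^2)$ once $k=\Theta(n)$.

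The missing idea is that no honest-majority committee is needed. The validated BA stays safe even when honest processes have different leaders in a view: proposals are forwarded and acknowledged only with a valid leader proof, and the expander argument makes commit certificates unique per view. It terminates as soon as \emph{one} view has the same honest leader at every honest process. That needs only a single good group, which a plain partition into $m=\Theta(\hat k)$ groups provides once $c_1k<m<c_2n$.

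\textbf{Certification threshold.} Your parenthetical suggests a threshold just large enough to guarantee one honest endorser, i.e.\ $t+1$. That threshold is too weak. With $f=t$, a single wrong honest bit certifies a Byzantine process, so $\mathcal{O}(1)$ errors can spoil a group and $k=\Theta(B/n)$ groups no longer suffice. You need a majority of all $n$ processes. Then spoiling a group requires a misclassified member, i.e.\ $\Omega(\epsilon n)$ wrong bits (rather than your $\Omega((n/k)^2\epsilon)$), which yields at most $\mathcal{O}(B/(\epsilon n))$ bad groups.
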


A summary of our work, compared with existing results, can be seen at \Cref{table:complexities_summary}.

\begin{table}[h]
\centering
\begin{tabular}{ |p{1.8cm}|p{1.6cm}|p{2cm}|p{1.5cm}|p{1.5cm}|p{3cm}|}
 \hline\hline
 Algorithm & Resilience & Max. Errors & Message & Bit & Cryptography\\
 \hline\hline
 \\[-1em]
 \cite{bendavid2025predictions} & $n/3$ & $\mathcal{O}(n^{1.5})$ & $\tilde{\mathcal{O}}(n^2)$ & $\mathcal{O}(n^3)$ & None\\[1pt]
 \hline
 \\[-1em]
 \textbf{This - 1} & $n/3$ & $\mathcal{O}(n^{2})$ & $\tilde{\mathcal{O}}(n^2)$ & $\tilde{\mathcal{O}}(n^2)$ & None \\[1pt]
 \hline
 \\[-1em]
 \cite{bendavid2025predictions} & $(\frac{1}{2}-\epsilon)n$ & $\mathcal{O}(n^{2})$ & $\tilde{\mathcal{O}}(n^3)$ & $\tilde{\mathcal{O}}(n^4\kappa)$ & Threshold Signature\\[1pt]
 \hline
 \\[-1em]
 \textbf{This - 2} & $(\frac{1}{2}-\epsilon)n$ & $\mathcal{O}(n^{2})$ & $\mathcal{O}(n^2)$ & $\mathcal{O}(n^2\kappa)$ & Threshold Signature \\[1pt]
 \hline\hline
 \end{tabular}
    \caption[Comparison of Byzantine Agreement with Classification Predictions.]{Comparison of Byzantine agreement with classification predictions. All algorithms terminate in the optimal $\mathcal{O}(\min\{B/n, f\})$ rounds when the number of prediction errors $B$ is within their respective maximum errors. $\epsilon \in \Omega(1)$ is a small positive constant. $\kappa$ denotes a security parameter. We use $\tilde{\mathcal{O}}(\cdot)$ to hide $\text{polylog}(n)$ factors. While the authenticated algorithm from \cite{bendavid2025predictions} only uses PKI, here we present the bit complexity when it is implemented with threshold signatures for fairer comparison.}
\label{table:complexities_summary}
\end{table}

\subparagraph{Paper Organization}
We discuss further related work at \Cref{section:related_work}. Next, we 
describe the system model and preliminaries at \Cref{section:preliminaries}. We present the high-level overview of our techniques at \Cref{section:technical_overview}. We then discuss how the classification predictions will be used at \Cref{section:using_predictions}. \Cref{section:unauthenticated} is dedicated for our unauthenticated algorithms, and \Cref{section:authenticated} is dedicated for our authenticated algorithm. Finally, we conclude at \Cref{section:conclusion}.

%% file: sections/related_work.tex
\section{Related Work}\label{section:related_work}

Since the seminal work by Lamport, Pease, and Shostak \cite{LamportSP82, PeaseSL80}, Byzantine agreement has been widely studied. It is known that any deterministic protocol needs $\Omega(t)$ rounds \cite{dolev1983authenticated, FischerL82} and $\Omega(n + t^2)$ messages \cite{dolev1985bounds}.

One way to circumvent these lower bounds is by focusing on the actual number of failures during the execution, $f$. In particular, we can design \emph{early-stopping} protocols that terminate in $\mathcal{O}(f)$ rounds \cite{lenzen2022phase}, which is also asymptotically tight \cite{dolev1990earlystopping}.

Another promising way to circumvent existing limitations is by using (unreliable) predictions \cite{mitzenmacher2020algorithmspredictions, mitzenmacher22predictionscomm}. The goal is to perform better when the predictions are accurate (\emph{consistent}) and still perform as well as algorithms without predictions when the predictions are inaccurate (\emph{robust}). In the sequential setting, this resulted in several exciting results, including ski rental~\cite{purohit18onlineviaml}, load balancing~\cite{0001X21}, P2P network \cite{dallot2025laslinlearningaugmentedpeertopeernetwork}, and online graph algorithms~\cite{AzarPT22, choo24onlinebip}. In the distributed setting, there have been several interesting results for graph algorithms \cite{boyar2025distributedgraphalgorithmspredictions}, contention resolution \cite{gilbert2021contentionpred}, and, most relevant to our work, Byzantine agreement \cite{bendavid2025predictions, dallot2026resilientbyzantineagreementpredictions}.\footnote{See \url{https://algorithms-with-predictions.github.io/} for recent works in this topic.}

In their work \cite{bendavid2025predictions}, Ben-David, Dzulfikar, Ellen, and Gilbert study Byzantine agreement where, apart from its input, each process also receives (potentially different) prediction bits. Surprisingly, they show that \emph{any} form of predictions cannot reduce the communication cost -- Byzantine agreement still exchanges $\Omega(n + f^2)$ messages. They then focus on a form of predictions they call \emph{classification predictions}, where each process receives an $n$-bit string predicting whether a process is honest or faulty. They show that with $B$ wrong prediction bits, Byzantine agreement can be done in $\mathcal{O}(\min\{B/n, f\})$ rounds, which is much faster when the predictions are quite accurate. Moreover, they also show that this is asymptotically tight. They gave two algorithms with trade-offs: an unauthenticated algorithm that can tolerate $B \le n^{1.5}$ wrong prediction bits (and otherwise, terminates in $\mathcal{O}(f)$ rounds), and an authenticated algorithm that can tolerate any number of wrong prediction bits, but sends more messages. Furthermore, all of their algorithms exchange $\Omega(n^3)$ bits to approximately agree on the classification of each process.
In this work, we show that those limitations can be avoided.

An area that is related to our work is \emph{accountability}. In some sense, the classification predictions can be seen as unreliable accountability. There have been exciting works on implementing accountable Byzantine agreement~\cite{civit2025scalable,civit21polygraph,civit22abc,CivitGGGKMS22,ShengWNKV21} -- that is, on detecting malicious processes. Some works also study how accountability can be used to do Byzantine agreement and broadcast more efficiently \cite{civit2024dare, civit2025repeated, wan2023amortizedbb}. They leverage different notions of accountability to achieve communication-optimal Byzantine agreement, in the single-shot and multi-shot setting.

Another line of work that is relevant is \emph{failure detectors}. Initiated by Chandra and Toueg~\cite{chandra96unreliabledetector}, the failure detector is a module that is eventually perfect, updating its detections over time until it correctly identifies every faulty process. While initially stated for crash failures, the work by Malkhi and Reiter later addresses Byzantine failures as well~\cite{malkhi97intrusiondetection}. These works focus on implementing the failure detectors and finding the ``weakest'' failure detectors needed to solve Byzantine agreement~\cite{chandra1996weakest}. In contrast, our work focuses on using classification predictions (unreliable detection) to do agreement more efficiently.

%% file: sections/preliminaries.tex
\section{Preliminaries}\label{section:preliminaries}

\subparagraph{Processes}
We consider a system of $n$ processes $\Pi = \{p_1, p_2, \dots, p_n\}$. Up to $t$ processes may deviate from the prescribed protocol. We denote the processes that follow the protocol as \emph{honest} or \emph{correct}, and those who do not as \emph{Byzantine} or \emph{faulty}. We denote $f$ with $0 \le f \le t$ as the actual number of Byzantine processes in an execution.

\subparagraph{Network}
We consider a \emph{synchronous} network, where an algorithm is executed in a round-to-round manner: in each round, a process sends its messages, receives messages, and updates its local state. Furthermore, each process can communicate directly with any other process.

\subparagraph{Cryptography}
For our authenticated protocols, we use a \emph{public-key infrastructure} (PKI). Each process has a pair of a public key and a private key. By default, each message will be signed by its sender. We also use a $(k, n)$-\emph{threshold signature scheme} \cite{Libert2016, Shoup00}, where we typically use $(t+1, n)$, $(n-t, n)$, and $(\lceil (n+1)/2 \rceil, n)$ as the parameters. In a $(k, n)$-threshold signature scheme, each process holds a distinct private key; there exists a single public key. Using its private key, $p_i$ produces a partial signature for a message $m$ via $\tssShareSign{i}{k}{m}$. This partial signature can be verified via $\tssShareVerify{i}{k}{m, \mathit{psig}}$. Next, given a set $S$ of $k$ partial signatures for the same message, a process can produce a threshold signature for that message via $\tssCombine{k}{\{\mathit{psig}_i\}_{p_i \in S, |S| = k}}$. Finally, a process can verify such a threshold signature for a message $m$ by invoking $\tssVerify{k}{m, \mathit{tsig}}$. Where appropriate, we omit explicit invocations of  $\tssShareVerify{\cdot}{k}{\cdot}$ and $\tssVerify{k}{\cdot}$. Importantly, under a security parameter $\kappa$, each signature under this scheme consists of $\mathcal{O}(\kappa)$ bits.\footnote{We assume $\kappa > \log(n)$ to avoid an adversary with exponential computational power and collision between cryptographic signatures.}

\subparagraph{Predictions}
Each process $p_i$ also receives as a prediction a string $a_i$. We underline that each process may receive a different string. Moreover, we do not make any assumption about the prediction strings. In this paper, we consider \emph{classification predictions}, where each $a_i$ is a binary string of length $n$. We say $p_i$ predicts $p_j$ as honest if $a_i[j] = 1$, and Byzantine otherwise. A prediction bit is correct if it matches reality, and wrong otherwise. We denote $B$ as the number of wrong bits in the predictions, which is the number of pairs $(i, j)$, where $p_i$ is honest and either (1) $a_i[j] = 0$ and $p_j$ is honest, or (2) $a_i[j] = 1$ and $p_j$ is faulty. (Note that the wrong bits in the predictions of faulty processes are not counted.)

\subparagraph{Byzantine Agreement}
The problem of Byzantine agreement is defined as follows. Each honest process $p_i$ proposes its input $v_i$, and outputs a decision $d_i$ such that the followings are satisfied:
\begin{itemize}
    \item \emph{Agreement}: for each honest process $p_i$ and $p_j$, $d_i = d_j$.
    \item \emph{Strong unanimity}: if each honest process proposes the same value $v$, then only $v$ can be decided.
    \item \emph{Termination}: each honest process eventually decides.
\end{itemize}

The \emph{round complexity} of a protocol is defined as the number of rounds until all honest processes decide. Then, the \emph{message complexity} of a protocol is defined as the number of messages that honest processes send until all honest processes decide. Similarly, the \emph{communication complexity} or \emph{bit complexity} of a protocol is defined as the number of bits that honest processes send throughout the protocol. In this paper, we assume the size of each input to be constant.

%% file: sections/technical_overview.tex
\section{Technical Overview}\label{section:technical_overview}

\subparagraph{Using the Classification Predictions} 
We first revisit how the predictions are used in \cite{bendavid2025predictions}, specifically their unauthenticated result. First, they use a voting procedure, where each process broadcasts its prediction string. Then, a process $p_i$ classifies $p_j$ as honest if it receives more than $n/2$ strings that predict $p_j$ as honest; otherwise, it classifies $p_j$ as Byzantine. They observe that if there are $f < (\frac{1}{2}-\epsilon)n$ faults, then there are at most $\mathcal{O}(B/n)$ processes that are misclassified by some honest process, where $B$ is the number of erroneous prediction bits. Next, suppose each process orders processes based on their classification, putting those classified as honest first. Each process may have a different ordering. However, they showed that, using those orderings in phase-king style \cite{lenzen2022phase}, we can implement Byzantine agreement that terminates in $\mathcal{O}(k)$ rounds with at most $k$ misclassified processes when $k \in \mathcal{O}(\sqrt{n})$.
They then combine that agreement protocol alongside an \emph{early-stopping Byzantine agreement} that terminates in $\mathcal{O}(f)$ rounds, glued together by a weaker agreement primitive called \emph{graded consensus}. By guessing the actual number of $k$ and $f$ and limiting the rounds executed in each protocol, along with doubling the estimate when decision is not reached yet, they achieve Byzantine agreement that terminates in $\mathcal{O}(\min\{B/n, f\})$ rounds as long as $B \in \mathcal{O}(n^{1.5})$.

\subparagraph{Group-based Approach}
An interesting alternative approach for achieving better tolerance to prediction errors is partitioning the processes into groups and choosing a representative leader from each group to help drive the agreement protocol.

One option, suggested in the conclusion of~\cite{bendavid2025predictions},
is to deterministically divide processes into $3k$ similar-sized groups.  Each process picks a process in the $i$-th group as its $i$-th leader. They hinted that there will be a ``good group'' where all honest processes choose the same honest process as their leader, which is a sufficient condition for success.  


In this paper, we explore such group-based approaches, using different grouping and election procedures to achieve better communication complexity while maintaining good round complexity.  We present several implementations to pick a leader from a group, which we call \emph{group leader election}. The election is guaranteed to elect the same honest leader for all processes if the group is ``good''. For example, we say a group is good if (1) it has at least one honest process, and (2) no misclassified process. 
%

We can observe, for example, if $t < n/2$, at most $k$ processes are misclassified, and we have $m = 3k$ groups, then at least one of the groups will necessarily be good.\footnote{Specifically, there are at most $k$ groups with misclassified processes, and there are at most $\frac{t}{n/m} < \frac{m}{2} = \frac{3k}{2}$ groups without honest process.  So there are at most $5k/2 < 3k$ groups that are not good.} We use this approach in the authenticated setting to improve the communication complexity of our protocol.

\subparagraph{Optimal Resilience Grouping via Extractor}
In the unauthenticated setting, the definition of a good group given above is too weak. It requires all-to-all communication to perform leader election over such a group. As a consequence, protocols based on this approach will end up with a $\Omega(n^3)$ bit complexity. Our key idea is to first strengthen the ``good group'' definition into what we call \emph{resilient} groups. Specifically, a group is resilient only if it has an honest majority (rather than just having at least one honest process) and no misclassified process. This allows us to use a committee-based approach: all processes only need to send (some part of) their predictions to those in the committee, who do the election among themselves, and then each process elects the process chosen by the majority of the committee.
However, this stronger definition comes at the cost of a weaker resilience if we want to ensure a high proportion of resilient groups. Specifically, with some simple approach to partition processes into groups, having less than a third of non-resilient groups would require a suboptimal $t < n/6$ resilience. 

To overcome this issue, we rely on an \emph{extractor}-based approach to determine our committees. Here, the extractor we use is a bipartite multigraph that connects processes (left set) to their groups (right set), with several useful properties. 
The use of an extractor to build the groups ensures that misclassified processes are (almost) evenly spread among all committees, and therefore, most groups will end up resilient. 
This approach is more complex as groups are not disjoint anymore; a process may be part of $\mathcal{O}(1)$ groups. However, it is better in the sense that we can still make the proportion of non-resilient groups arbitrarily small while preserving the optimal $t < n/3$ resilience.

\subparagraph{Near-Optimal Communication and Optimal Resilience without Authentication}
For our first result, we implement Byzantine agreement with predictions using the extractor-based grouping, where the leaders from the groups form a committee to execute agreement.  Each process then decides on the majority value output by the committee members. 

Here, we set the number of groups $m$ to be $\Theta(\hat{k})$ such that when at most $\hat{k}$ processes are misclassified\footnote{Looking ahead, we later call such processes in the unauthenticated setting as \emph{weakly misclassified processes}.}, there will be at least $\frac{5m}{6}$ resilient groups. (In the full protocol, we use $\hat{k}$ to denote the current estimate of misclassified processes during the guess-and-double strategy.) 

We can do the group leader election using some voting procedure. The easiest approach is to just vote by broadcasting the prediction string regarding the group. However, this would send $\Omega(n^3)$ bits when the group size is $\Omega(n)$, which is too much. Instead, we exploit the structure of the resilient group. Initially, each process votes for the smallest $\mathcal{O}(1)$ processes in the group it predicts as honest. Then, each member of the group picks the smallest process in the group having more than $n/2$ votes as the leader. Finally, all processes elect the process elected by the majority of the group. Importantly, as each process only votes on a small number of processes, the election may fail even if the group is resilient. However, if the election fails in a resilient group, it means there are $\Omega(n)$ prediction errors corresponding to the processes in that group. As we later show, combined with the parameters we use for the grouping, this implies that less than $\frac{m}{6}$ resilient groups may fail the election.

Although each process may perceive a different committee, they will contain a common subset of more than $\frac{5m}{6}-\frac{m}{6} = \frac{2m}{3}$ (unknown) honest processes. This honest core of committee members (even if unknown) ensures that the agreement will succeed.  Thus, when there are at most $\hat{k}$ misclassified processes, the protocol correctly solves Byzantine agreement. Moreover, the protocol terminates in $O(\hat{k})$ rounds and exchanges $O(n^2)$ bits. When plugged into the framework from \cite{bendavid2025predictions}, we obtain a Byzantine agreement protocol tolerating up to $t < n/3$ faults that terminates in $O(\min\{B/n, f\})$ rounds while exchanging $\tilde{\mathcal{O}}(n^2)$ messages and $\mathcal{\tilde{O}}(n^2)$ bits.

\subparagraph{Optimal Communication and Near-Optimal Resilience with Authentication}

Our authenticated algorithm, apart from having optimal $\mathcal{O}(n^2\kappa)$ communication, is arguably simpler. First, we do not require the use of the stronger groups defined above; we only need the existence of an honest process in a good group. Moreover, we can rely on the standard reduction from strong unanimity to Byzantine agreement with \emph{external validity} (also known as \emph{validated Byzantine agreement}), with $\mathcal{O}(n^2 \kappa)$ bits and $\mathcal{O}(1)$ rounds \cite{ civit2024dolevtight, civit2024dare}. Afterwards, we implement the validated Byzantine agreement with a leader-based protocol where the leaders are chosen alternating between round-robin and using group leader election results, where the length before switching follows the guess-and-double strategy from before. Finally, our algorithm builds on an efficient group leader election and validated Byzantine agreement with the help of cryptography.

In our group leader election, each process sends its signature to every process it predicts honest. Then, every process $p_i$ that receives more than $n/2$ such signatures can broadcast them (or a smaller aggregate) to tell other processes to classify $p_i$ as honest. By having each process elect the smallest process in the group that it classifies as honest, all honest processes will elect the same honest process in a good group. Furthermore, processes only need to send signatures and convince others that they are honest once; after that, the group leader election can be done without communication. Using \emph{threshold signatures} to aggregate the signatures, the group leader elections can be done by spending $\mathcal{O}(n^2\kappa)$ bits and $\mathcal{O}(1)$ rounds at the start.

Lastly, our validated Byzantine agreement implementation is heavily based on some aspects of the Byzantine broadcast protocol from \cite{wan2023amortizedbb}. However, we need to take into account that each process may have different leaders, which will break the safety of the original protocol. Roughly speaking, in the original protocol, processes will forward the message from the leader to other processes. They must accept the forwarded message if it is signed by the leader, which is the same process for everyone. In a nutshell, the problem with having different leaders is that the process does not know whether it should accept the message (because the forwarder is honest) or not.

The fix is quite simple. At every phase, each process sends a signature to the process it considers as its leader. Next, a leader must attach $\lceil \frac{n+1}{2} \rceil$ such signatures to its messages. Thus, when receiving a forwarded message, a process must accept it if it contains $\lceil \frac{n+1}{2} \rceil$ signatures for the original sender. This fixes the issue and also allows us to limit the number of `leaders' to be constant during every phase. Using threshold signatures, the communication complexity of the validated Byzantine agreement will be $\mathcal{O}(n^2\kappa)$, and will terminate as soon as all honest processes have the same honest leader. 

Altogether, our authenticated protocol has near-optimal resilience with $t < (\frac{1}{2}-\epsilon)$, deciding in $\mathcal{O}(\min\{B/n, f\})$ rounds by exchanging $\mathcal{O}(n^2\kappa)$ bits.

%% file: sections/using_prediction.tex
\section{Using Classification Predictions}\label{section:using_predictions}

In this section, we explain how we are using classification predictions, define several notions of ``good'' groups, and state key combinatorial lemmas on the existence of those groups.

In our algorithms, we use classification predictions to enable each process to pick a sequence of processes. This sequence will be used, for example, as a sequence of leaders in a leader-based protocol. This sequence is obtained as follows. First, for a parameter $m$, we will partition the processes into $m$ groups. In each group, processes participate in what we call the \emph{group leader election} to elect the leader of the group. Then, the elected $m$ leaders will form the sequence of processes. Importantly, the result of the group leader election will depend on the quality of predictions regarding the processes in that group. We will show that, when a group is ``good'', then all honest processes will likely select the same honest process as the leader.  Furthermore, we will also show that when $m$ is chosen properly, we can obtain (many) ``good'' groups.  Thus, there will be (many) indices in the sequence where everyone has the same honest leader.  

Our first goal is to define the group leader election and what it means for a group to be good. To do that, we first define misclassified process. For a given process $p$, we say that it has $b$ prediction errors if $p$ is honest and exactly $b$ honest processes predict $p$ as Byzantine, or $p$ is Byzantine and exactly $b$ honest processes predict $p$ as honest. We remark that the sum of the prediction errors among all processes is exactly $B$. We are now ready to define misclassified process.

\begin{definition}[Misclassified Process]
    A process $p_i$ is misclassified if $f + b \geq n/2$ where $b$ is the number of prediction errors for $p_i$. \footnote{Let us remark that our definition is stronger compared to the one from \cite{bendavid2025predictions}, in the sense that (1) a misclassified process according to \cite{bendavid2025predictions} is also misclassified according to our definition, but (2) our misclassified process might not be misclassified according to \cite{bendavid2025predictions}.}
\end{definition}

The motivation behind this definition is the following: One simple way to improve the quality of the predictions is to exchange the predictions and use the majority of the prediction bits to classify a process as either Byzantine or honest. That is, when more than $n/2$ processes predict $p_i$ as honest, then $p_j$ classifies $p_i$ as honest and otherwise, Byzantine. Hence, a process $p_i$ can be misclassified if, at the start, sufficiently many honest processes predict $p_i$ wrongly. Moreover, using a similar observation as the one from \cite{bendavid2025predictions}, we can obtain the following.
\begin{observation}
    If $f < (\frac{1}{2} - \epsilon)n$ for some constant $0 < \epsilon < \frac{1}{2}$, then there are at most $\mathcal{O}(B/n)$ misclassified processes.
\end{observation}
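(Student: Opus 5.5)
The argument is a counting (pigeonhole) bound: every misclassified process must carry many prediction errors, and the total number of errors is only $B$.

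\textbf{Step 1: each misclassified process carries $\Omega(n)$ errors.} Let $p_i$ be misclassified, with $b$ prediction errors. By definition, $f + b \ge n/2$. Since $f < (\frac{1}{2}-\epsilon)n$, we get
\[
b \;\ge\; \frac{n}{2} - f \;>\; \frac{n}{2} - \Bigl(\frac{1}{2}-\epsilon\Bigr)n \;=\; \epsilon n .
\]

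\textbf{Step 2: sum over misclassified processes.} The paper remarks that the prediction errors of all processes sum to exactly $B$. Each pair $(i,j)$ with $p_i$ honest and $a_i[j]$ wrong is charged to the target $p_j$, so no error bit is counted twice. Let $M$ be the set of misclassified processes. Then
\[
\epsilon n\,|M| \;<\; \sum_{p \in M} b_p \;\le\; B ,
\]
which gives $|M| < B/(\epsilon n)$. This is $\mathcal{O}(B/n)$ because $\epsilon$ is a constant.

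\textbf{Edge case: $B = 0$.} Step 1 shows every misclassified process has $b > \epsilon n > 0$, so none can exist when $B = 0$. The bound $|M| = 0$ therefore still holds.

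\textbf{Main obstacle.} There is no real technical difficulty. The only point needing care is checking that the per-process error counts partition the $B$ wrong bits with no double counting. This follows directly from the two cases in the definition of a wrong bit (target honest and predicted Byzantine, or target faulty and predicted honest), both counted only over honest predictors.
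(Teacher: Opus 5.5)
Your proof is correct and is the paper's argument, which the paper states in one line: a misclassified process needs more than $\epsilon n$ wrong bits from honest processes, and the per-process error counts sum to $B$, so $|M| \le B/(\epsilon n)$. One cosmetic point: the strict inequality $\epsilon n\,|M| < \sum_{p \in M} b_p$ reads $0<0$ when $M=\emptyset$, so write $\le$ there (it is strict only for nonempty $M$); this also makes your separate $B=0$ case unnecessary.
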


This is because at least $\epsilon n \in \Omega(n)$ incorrect prediction bits from honest processes are needed to misclassify a process.

In the unauthenticated setting, considering misclassified processes is not enough to obtain an efficient algorithm. Instead, we consider \textit{weakly misclassified} processes.

\begin{definition}[Weakly Misclassified Process]
    A process $p_i$ is weakly misclassified if $f + b + n/10 \geq n/2$ where $b$ is the number of prediction errors for $p_i$.
\end{definition}

We remark that if a process is misclassified, it is also weakly misclassified. The main idea is that the $n/10$ ``slack'' allows us to get stronger properties for our protocols, which is an important ingredient for obtaining efficient communication. We only use the concept of weakly misclassified processes for our unauthenticated protocol, where $t < n/3$. Similar to the misclassified process, we can observe the following.
\begin{observation}
    If $f < n/3$, then there are at most $\mathcal{O}(B/n)$ weakly misclassified processes.
\end{observation}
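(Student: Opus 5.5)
Let $k_w$ be the number of weakly misclassified processes. The plan is to double-count prediction errors: show that every weakly misclassified process carries $\Omega(n)$ prediction errors, then sum over processes. The paper has already remarked that the per-process prediction errors sum to exactly $B$, so this gives $k_w \cdot \Omega(n) \le B$ directly.

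The only real step is the per-process lower bound. Let $p_i$ be weakly misclassified with $b$ prediction errors. By definition $f + b + n/10 \ge n/2$, so $b \ge n/2 - n/10 - f = 2n/5 - f$. With $f < n/3$ this gives $b > 2n/5 - n/3 = n/15$. Hence every weakly misclassified process has more than $n/15$ prediction errors.

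Since the per-process error counts are nonnegative and sum to $B$, we get
\begin{equation*}
k_w \cdot \frac{n}{15} < \sum_{p} b_p = B,
\end{equation*}
so $k_w < 15B/n \in \mathcal{O}(B/n)$. The same argument with the $n/10$ slack removed recovers the earlier observation for misclassified processes, with constant $1/\epsilon$.

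The only point needing care is that the sum of per-process errors really equals $B$. Each wrong bit $(i,j)$ with $p_i$ honest is counted in the error count of exactly one process, namely $p_j$. This matches both the definition of $B$ in the preliminaries and the definition of per-process prediction errors, so there is no real obstacle.
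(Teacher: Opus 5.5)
Your proof is correct and follows the paper's own reasoning: the paper justifies this observation only by noting that each (weakly) misclassified process requires $\Omega(n)$ wrong prediction bits from honest processes (here more than $n/15$, from the $n/10$ slack and $f<n/3$), and that the per-process error counts sum to exactly $B$. One small fix: write $k_w \cdot \frac{n}{15} \le B$ rather than a strict inequality, because the strict version fails in the trivial case $k_w = B = 0$.
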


Next, we explain what it means for a group to be ``good''. We consider two definitions, based on the strength of the misclassification.


\begin{definition}[Good Group]
    \label{definition:good_group}
    A group $G$ of $|G|$ processes is good if none of its processes is misclassified and at least one of its process is honest.
\end{definition}

\begin{definition}[Resilient Group]
    \label{definition:resilient_group}
    A group $G$ of $|G|$ processes is resilient if none of its processes is weakly misclassified and the majority of its processes are honest.
\end{definition}

Note that while a good group only needs at least one of its process to be honest, resilient group needs a majority of its members to be honest. Furthermore, good group will be useful for our authenticated algorithm, while resilient group will be useful for our unauthenticated algorithm. As we will see, if it possible to efficiently perform leader election for good group in the authenticated setting and for a resilient group in an unauthenticated setting.

We are now ready to define the \emph{group leader election} problem. 

\begin{definition}[Group Leader Election]
    Each process $p_i$ is input the same group $G$ along with its prediction bits $a_i$. Each process outputs $\mathit{\ell_i}$ such that the following are satisfied:
    \begin{itemize}
        \item Agreement: for each honest process $p_i$ and $p_j$, $\mathit{\ell}_i = \mathit{\ell}_j$.
        \item Validity: $\mathit{\ell}_i \in \mathit{G}$ and $\mathit{\ell}_i$ is an honest process.
    \end{itemize}
\end{definition}

While leader election is typically not useful for deterministic protocols (because the adversary can corrupt the elected leader), this is not the case when predictions are available. In \Cref{algorithm:simple-group-leader}, we show a simple implementation that solves this problem for good groups. In a nutshell, all processes exchange all predictions they have regarding the processes in the group. Then, each process outputs the smallest process that more than $n/2$ processes predict as honest.

\begin{algorithm} [h]
\caption{Simple Group Leader Election: Pseudocode (for process $p_i$)}
\label{algorithm:simple-group-leader}
\begin{algorithmic} [1]
\footnotesize

\State \textbf{Input parameters:}
    \State \hskip2em $\mathsf{Good\_Group}$ $\mathit{G}_i$ \BlueComment{see \Cref{definition:good_group}}
    \State \hskip2em $\mathsf{Prediction}$ $\mathit{a}_i$ \BlueComment{$a_i$ is $p_i$'s prediction string}

\medskip
\State $\mathsf{SimpleElection}(\mathit{G}_i, \mathit{a}_i)$:
    \State \hskip2em \textbf{let} $\mathit{v}_i$ be a binary string where $\mathit{v}_i[j] = \mathit{a}_i[\mathit{G}_i[j]]$ for each $1 \le j \le |\mathit{g}_i|$ 
    \State \hskip2em \textbf{broadcast} $\langle \textsc{vote}, \mathit{v}_i\rangle$

    \smallskip
    \State \hskip2em \textbf{let} $j$ be the smallest index such that $p_i$ received more than $n/2$ $\langle \textsc{vote}, \mathit{v} \rangle$ with $\mathit{v}[j] = 1$
    \State \hskip2em \textbf{return} $\mathit{G}_i[j]$
\end{algorithmic}
\end{algorithm}

We can verify that \Cref{algorithm:simple-group-leader} solves the group leader election.

\begin{lemma}\label{lemma:simple_group_leader}
    There is a group leader election algorithm for a good group $G$. The algorithm terminates in $1$ round, exchanges $\mathcal{O}(n^2)$ messages, and exchanges $\mathcal{O}(n^2|G|)$ bits.
\end{lemma}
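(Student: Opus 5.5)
We prove that \Cref{algorithm:simple-group-leader} has the claimed properties. The complexity bounds come straight from the pseudocode. There is a single round in which every process broadcasts one $\langle \textsc{vote}, v_i\rangle$ message. This gives $n$ messages per process, hence $\mathcal{O}(n^2)$ messages in total. Each message carries a $|G|$-bit string, hence $\mathcal{O}(n^2|G|)$ bits. The output is computed locally from the received votes, so the algorithm terminates after $1$ round. The rest of the proof concerns agreement and validity.

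The key step is to show that every honest process computes exactly the same set of ``elected'' indices. Let $S_i = \{ j : p_i \text{ received more than } n/2 \text{ votes with } v[j]=1\}$. We claim that for every honest $p_i$, $S_i = \{ j : G[j] \text{ is honest}\}$. This needs the observation that $G$ is good, so no member of $G$ is misclassified. Fix $j$ and let $q = G[j]$ have $b$ prediction errors, so $f + b < n/2$.

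If $q$ is honest, then exactly $b$ honest processes predict $q$ as Byzantine. Hence at least $n - f - b > n/2$ honest processes broadcast a vote with $v[j]=1$. Every honest process receives all of these in the synchronous round, so $j \in S_i$.

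If $q$ is Byzantine, then exactly $b$ honest processes vote $v[j]=1$. Byzantine processes can contribute at most one message each to $p_i$ in this round; we count at most one vote per sender. So $p_i$ receives at most $b + f < n/2$ votes with $v[j]=1$, and $j \notin S_i$.

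The only subtlety, and the place I expect care to be needed, is this per-sender counting. A Byzantine process could send different vote strings to different processes, or multiple messages. We must interpret ``received more than $n/2$ votes'' as counting distinct senders. Then equivocation does not matter, because the bound $b+f$ holds at every honest receiver independently.

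With the claim in hand, $S_i$ is the same set for all honest $p_i$. It is nonempty because $G$ contains at least one honest process, so the smallest index $j$ is well defined and identical across honest processes. This gives agreement. Validity follows because $G[j]\in G$ and $j \in S_i$ implies $G[j]$ is honest.
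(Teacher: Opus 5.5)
Your proof is correct and follows the same route as the paper. The paper's one-line argument (no member of a good group is misclassified, so every honest process obtains the same index, and that index must be an honest member) is exactly what your characterization $S_i = \{j : G[j] \text{ is honest}\}$ makes explicit via $f+b<n/2$; your complexity count and your per-sender vote counting, which the paper leaves implicit, are also fine.
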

\begin{proof}
    As $G$ is a good group, no process is misclassified. Thus, each process will obtain the same index $j$. Moreover, as $G$ is a good group, $\mathit{G}_i[j]$ must be an honest process. Finally, the complexities follow from the algorithm.
\end{proof}

The communication complexity of \Cref{algorithm:simple-group-leader} is upper bounded by $\mathcal{O}(n^3)$, i.e., when the group size is $\mathcal{O}(n)$. In the following sections, we will describe protocols solving the same problem, but with a better communication complexity, i.e., $\mathcal{O}(n^2 \log n)$ or $\mathcal{O}(n^2 \kappa)$.

Next, we state several key combinatorial lemmas on the existence of good groups that are crucial for our results. We start by examining how the processes are grouped. Roughly speaking, each group has a similar size, and a single process belongs to a constant number of groups.

\begin{definition}[$m$-Grouping]
    \label{def:m_grouping}
    A grouping of all processes into $m$ non-empty groups is said to be $m$-grouping if each group has $\Theta(n/m)$ processes and each process is part of $\mathcal{O}(1)$ group.
\end{definition}

In this paper, we only consider $m$-grouping with $n \ge m$ that can be computed deterministically. In particular, our results only require all processes to compute the same $m$-grouping for each value of $m$.

We are now ready to state the key lemmas. Intuitively, when $B = o(n^2)$, only  $o(n)$ processes are misclassified or weakly misclassified (i.e., this is the regime where we want our predictions to be useful). We will choose the number of groups to be proportional to the number of such processes. In the authenticated setting, our grouping is simple and is based on a partitioning of the processes.

\begin{restatable}{lemma}{existgood}
    \label{lemma:exist_1/2_good_groups}
    Suppose there are $f < (\frac{1}{2}-\epsilon)n$ faulty processes and $k$ misclassified processes, where $0 < \epsilon < \frac{1}{2}$ is a constant. We consider an $m$-grouping obtained by arbitrarily partitioning the $n$ processes into $m$ disjoint groups such that each group has size $\lfloor n/m \rfloor$ or $\lceil n/m \rceil$. Then, there exist two positive constants $c_1$ and $c_2$ such that, if $c_1k < m < c_2n$, this $m$-grouping has at least one good group.
\end{restatable}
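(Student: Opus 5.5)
We argue by counting the groups that fail to be good and showing that this count is less than $m$. There are two ways a group $G$ can fail \Cref{definition:good_group}: it contains a misclassified process, or all of its processes are faulty. We bound each kind of bad group separately. The groups are disjoint, so each misclassified process lies in exactly one group. Hence at most $k$ groups contain a misclassified process.

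For the second kind, every group has at least $\lfloor n/m \rfloor$ processes. An all-faulty group therefore uses at least $\lfloor n/m\rfloor$ distinct faulty processes, and by disjointness the number of all-faulty groups is at most $f/\lfloor n/m \rfloor$.

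The main step is to control this quantity using $f < (\frac12-\epsilon)n$ together with the floor. If $m < c_2 n$ with $c_2$ a small constant depending on $\epsilon$ (say $c_2 \le \epsilon/2$), then $n/m > 1/c_2$ is large. In that case $\lfloor n/m\rfloor \ge n/m - 1 \ge (1-c_2)\,n/m$. This gives
\[
\frac{f}{\lfloor n/m\rfloor} < \frac{(\frac12-\epsilon)n}{(1-c_2)n/m} = \frac{\frac12-\epsilon}{1-c_2}\, m \le \left(\frac12 - \frac{\epsilon}{2}\right) m,
\]
where the last inequality holds once $c_2$ is chosen small enough relative to $\epsilon$. (One checks that $\frac{1/2-\epsilon}{1-c_2}\le \frac12-\frac{\epsilon}{2}$ when $c_2 \le \epsilon/2$, using $\epsilon<\frac12$.) The role of the upper bound $m < c_2 n$ is exactly to make the rounding loss negligible.

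Putting the two bounds together, the number of non-good groups is less than $k + (\frac12-\frac{\epsilon}{2})m$. This is at most $m$ provided $k \le (\frac12+\frac{\epsilon}{2})m$, which certainly holds if $m > c_1 k$ with $c_1 = 2$. Thus for $c_1 = 2$ and $c_2 = \epsilon/2$, any $m$ with $c_1 k < m < c_2 n$ leaves at least one group that is both free of misclassified processes and contains an honest process, i.e., a good group.

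The only delicate point is the floor estimate in the main step, which is why the upper bound on $m$ appears in the statement. The case $k=0$ is covered as well, since then $m \ge 1$ and the second bound alone gives fewer than $m$ bad groups.
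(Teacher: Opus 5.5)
Your proof is correct and follows the same counting as the paper: at most $k$ groups contain a misclassified process, at most $f/\lfloor n/m\rfloor$ groups are entirely faulty, and the upper bound $m < c_2 n$ serves only to absorb the rounding loss in $\lfloor n/m\rfloor$. The only difference is the choice of constants. The paper takes $c_1 = 2/\epsilon$ and $c_2 = \epsilon/(1-\epsilon)$ and obtains more than $m/2$ good groups, whereas your $c_1 = 2$ and $c_2 = \epsilon/2$ give exactly the single good group the statement requires.
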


In the unauthenticated setting, such a simple partitioning is not enough. We now give some ideas as to why the following grouping makes sense. First, we remark that in the unauthenticated setting, less than $1/3$ of the processes are Byzantine. Meanwhile, for resilient groups, we only need less than $1/2$ of the processes in each group to be Byzantine. This gap between $1/3$ and $1/2$ can be used by an extractor-based structure to obtain an $m$-grouping where an arbitrarily small proportion of the groups are not resilient, while each process is only a part of $\mathcal{O}(1)$ groups. Specifically, using the extractor-based committee assignment from \cite{dufay2026optimalca}, we obtain the following.

\begin{restatable}{lemma}{existresilient}
    \label{lemma:exist_resilient}
     Suppose there are $f < n/3$ faulty processes and $k$ weakly misclassified processes. Then, there exist some constants $D \geq 1$ and $c > 0$ such that for any $1 \leq m \leq n$, there exists an $m$-grouping where each process is part of at most $D$ groups such that if $m \geq c \cdot k$, then at least $5m/6$ of the groups are resilient. Moreover, this $m$-grouping can be computed deterministically in polynomial time.
\end{restatable}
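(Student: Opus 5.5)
The plan is to build the grouping from a bipartite extractor-like multigraph $H=(\Pi, [m], E)$ with constant left degree $D$. Each process $p$ is placed in the groups that are its neighbours, so it belongs to at most $D$ groups. The graph must satisfy a sampler property: for every subset $S\subseteq \Pi$, all but a small number of right vertices $g$ have $|N(g)\cap S|/|N(g)| \le |S|/n + \delta$. We will instantiate this with the deterministic, polynomial-time committee assignment of \cite{dufay2026optimalca}, which gives exactly such a property. That construction also makes the right degrees balanced, so every group has $\Theta(Dn/m)=\Theta(n/m)$ members and we obtain an $m$-grouping in the sense of \Cref{def:m_grouping}. The quantitative form we want is the following: for a fixed constant $\delta>0$, the number of bad groups for any $S$ is at most $\gamma m$ for a small constant $\gamma$, provided $|S|$ is not too small relative to $n/m$. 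For tiny sets we instead use a crude counting bound.

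A group fails to be resilient for one of two reasons.

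\textbf{Byzantine majority.} Here we apply the sampler property to the set $F$ of faulty processes. Since $|F|/n<1/3$, taking $\delta<1/6$ guarantees that every group outside an exceptional set has a faulty fraction below $1/3+\delta<1/2$. So at most $\gamma m$ groups lack an honest majority, and we choose $D$ large enough that $\gamma\le 1/12$. This is where the gap between $1/3$ and $1/2$ is used, as described in \Cref{section:technical_overview}.

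\textbf{Containing a weakly misclassified process.} Let $W$ be the set of weakly misclassified processes, with $|W|=k$. Each member of $W$ lies in at most $D$ groups, so at most $Dk$ groups contain a member of $W$. Choosing $c=12D$, the hypothesis $m\ge ck$ gives $Dk\le m/12$. Note that this step needs no extractor property at all; only the constant degree matters.

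Summing the two cases, at most $m/12+m/12=m/6$ groups are non-resilient, so at least $5m/6$ groups are resilient.

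The main obstacle is the first case. We need an explicit construction whose sampler guarantee holds uniformly for every $1\le m\le n$ and every faulty set of density below $1/3$, with constants $D,\gamma$ independent of $n$ and $m$, and with group sizes balanced to $\Theta(n/m)$. Both small $m$ (where $\gamma m<1$, so we need zero bad groups) and $m$ close to $n$ (groups of constant size, where concentration is weak) are delicate.

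For small $m$, a zero-bad-group guarantee should follow from expansion/mixing of a balanced regular bipartite graph, since each group has size $\Theta(n/m)$, which is large. For $m$ close to $n$, I would rely on the cited result's guarantee for constant-size committees with large constant $D$. If that guarantee only covers $m\le n/C$, the fallback is to split the processes into blocks and compose the grouping blockwise, adjusting constants. Polynomial-time deterministic computability is inherited directly from the construction in \cite{dufay2026optimalca}.
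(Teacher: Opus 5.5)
Your proposal is correct and is essentially the paper's proof. The paper applies the extractor theorem of \cite{dufay2026optimalca} with $\alpha=1/3$, $\beta=1/2$, $\mu=1/12$ to get fewer than $m/12$ Byzantine-majority groups, and bounds the groups containing a weakly misclassified process by $Dk\le m/12$ via $c=12D$, exactly as you do. Your hedging about small $m$, $m$ close to $n$, tiny faulty sets, and blockwise composition is unnecessary: as the paper restates it, the cited theorem holds for every $1\le m\le n$ and every $S$ with $|S|\le n/3$, and its strict bound $|Q|<\mu m$ already forces zero bad groups when $\mu m<1$.
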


We remark that in the proof of the lemma above, the grouping is based on a multigraph. This means that a process may appear multiple times in a given group (at most $D = \mathcal{O}(1)$ times). We explain in \Cref{subsection:unauthenticated:election} how this is handled during leader election.

We defer the proofs of \Cref{lemma:exist_1/2_good_groups} and \Cref{lemma:exist_resilient} to \Cref{section:missing_proofs_using_pred}.

%% file: sections/unauthenticated.tex
\section{Unauthenticated Algorithm}\label{section:unauthenticated}

Our unauthenticated algorithm runs two algorithms in parallel:
(i) a Byzantine agreement protocol with a round complexity that scales with how accurate the predictions are; and (ii) an early-stopping Byzantine agreement with a round complexity that scales with how many Byzantine processes there are. Our primary focus is on implementing the first one. 

At a high level, the key idea is to use Byzantine agreement on an \emph{implicit} committee, where each process chooses a set of processes they think are the committee. By using $m$-grouping and group leader elections for choosing those processes, in resilient groups with few prediction errors, each process will choose the same honest process, which will constitute the implicit committee. Moreover, by setting $m$ such that there will be many resilient groups with successful elections (and so, bigger implicit committee), the algorithm will correctly solve Byzantine agreement when there are not too many weakly misclassified processes.
After that, we also show that there is a group leader election which, combined with the Byzantine agreement with an implicit committee and the approach from \cite{bendavid2025predictions}, yields Byzantine agreement that terminates in $\mathcal{O}(\min\{B/n, f\})$ rounds and sends $\tilde{\mathcal{O}}(n^{2})$ bits.

We first describe the Byzantine agreement with an implicit committee in \Cref{subsec:ba_implicit}. Next, we present the group leader election that is crucial for our agreement that exchanges $\tilde{\mathcal{O}}(n^{2})$ bits in \Cref{subsection:unauthenticated:election}. Finally, we describe how they are used along with the approach from \cite{bendavid2025predictions} in \Cref{subsection:unauthenticated:ba_predictions}.

\input{sections/sub/agreement_implicit}

\input{sections/sub/unauth_group_leader_election}

\subsection{Full Algorithm}\label{subsection:unauthenticated:ba_predictions}

In our unauthenticated algorithm for Byzantine agreement with classification predictions, we adopt the same general ``guess-and-double'' structure as \cite{bendavid2025predictions} (see \Cref{algorithm:unauth_ba}). Let us describe the approach from a high-level perspective. The algorithm goes through phases. In each phase, the algorithm estimates $\hat{k}$, the maximum number of actual faults and weakly misclassified processes.\footnote{Regarding the number of weakly misclassified processes, we are actually estimating the prediction errors $B$. However, this is fine as by estimating $B$, we are also estimating the number of weakly misclassified processes, which is $\mathcal{O}(B/n)$.} The algorithm then runs an \emph{early-stopping} Byzantine agreement sub-protocol that terminates in $\mathcal{O}(f)$ rounds (with $f$ actual faults), and a Byzantine agreement sub-protocol that, when there are at most $k$ weakly misclassified processes, terminates in $\mathcal{O}(k)$ rounds. Then, using the estimate $\hat{k}$ from before, we abort both agreement instances after $\mathcal{O}(\hat{k})$ rounds, (line~\ref{line:unauth_full:early_ba} and line~\ref{line:unauth_full:implicit}). 

Graded consensus instances are then used to ensure safety while adopting the output of each Byzantine agreement. Formally, graded consensus is a weaker consensus primitive where each honest process $p_i$ proposes its input $v_i$, and outputs a decision $d_i$ with a grade $g_i \in \{0, 1\}$ such that the followings are satisfied:
\begin{itemize}
    \item \emph{Consistency}: if an honest process decides $v^*$ with grade $1$, then all honest processes decide $v^*$ (potentially with grade $0$).
    \item \emph{Strong unanimity}: if each honest process proposes the same value $v$, then they all decide $v$ with grade $1$.
    \item \emph{Termination}: each honest process eventually decides.
\end{itemize}

Each process runs a graded consensus before invoking a Byzantine agreement sub-protocol (line~\ref{line:unauth_full:gc_1} and line~\ref{line:unauth_full:gc_2}), and only adopts the output of the agreement if it decides with grade $0$ from the graded consensus; otherwise, it commits to the output of the graded consensus (line~\ref{line:unauth_full:gc_adopt_1} and line~\ref{line:unauth_full:gc_adopt_2}). Lastly, termination is guaranteed when the estimate $\hat{k}$ is big enough, as at least one of the two agreement sub-protocols will terminate fast enough (and correctly). By doubling the $\hat{k}$ in each phase, the algorithm will terminate in $\mathcal{O}(\min\{B/n, f\})$ rounds.

We now focus on our agreement sub-protocol that terminates in $\mathcal{O}(k)$ rounds when there are at most $k$ weakly misclassified processes (which is the new contribution in this paper). Here, we use the Byzantine agreement with an implicit committee from \Cref{subsec:ba_implicit}. 
In a phase, we set the number of groups $m \in \Theta(\hat{k})$ such that if $\hat{k} \ge cB/n$ for some constant $c$, there will be at least $\frac{5m}{6}$ resilient groups (for \Cref{thm:unauth_subcubic_bits}) at line~\ref{line:unauth_full:pick_m}; such $m$ exists due to \Cref{lemma:exist_resilient}. Then, each process $p_i$ deterministically computes the $m$-grouping $G$. For each computed group, each process participates in a group leader election from \Cref{subsection:unauthenticated:election} to produce a vector $\mathit{L}_i$; these can be run in parallel (line~\ref{line:unauth_full:election}). Then, using the vector of processes $\mathit{L}_i$, each process invokes the Byzantine agreement with implicit committee (line~\ref{line:unauth_full:implicit}). Therefore, the sub-protocol will terminate in $\mathcal{O}(m) = \mathcal{O}(\hat{k})$ rounds.
\begin{algorithm} [!ht]
\caption{Unauthenticated Byzantine Agreement with Classification Predictions: Pseudocode (for process $p_i$)}
\label{algorithm:unauth_ba}
\begin{algorithmic} [1]
\footnotesize

\State \textbf{Input parameters:}
    \State \hskip2em $\mathsf{Value}$ $\mathit{v}_i$ 
    \State \hskip2em $\mathsf{Prediction}$ $\mathit{a}_i$

\smallskip
\State \textbf{Local variables:}
    \State \hskip2em $\mathsf{Grade}$ $\mathit{g}_i \gets \bot$
    \State \hskip2em $\mathsf{Value}$ $\mathit{decision}_i \gets \bot$
    
\medskip
\State $\mathsf{UnauthenticatedAgreement}(\mathit{v}_i, \mathit{a}_i)$:
    \State \hskip2em \textbf{for} $\mathsf{Integer}$ $\phi \gets 1$ to $\lceil \log_2(t) \rceil+1$: \BlueComment{at most $\hat{k} = 2^{\phi-1}$ faults or weakly misclassified processes}
        \State \hskip4em $(\mathit{v}_i, \mathit{g}_i) \gets \mathsf{GradedConsensus}(\mathit{v}_i)$
        \State \hskip4em \textbf{let} $T \gets \alpha \cdot 2^{\phi-1}$
        \State \hskip4em \textbf{let} $v' \gets \mathsf{ByzantineAgreement}(\mathit{v}_i)$ executed for $T$ rounds\label{line:unauth_full:early_ba} \BlueComment{early-stopping agreement}
        \State \hskip4em \textbf{if} $\mathit{g}_i = 0$:\label{line:unauth_full:gc_adopt_1}
            \State \hskip6em $\mathit{v}_i \gets v'$

        \smallskip
        \State \hskip4em $(\mathit{v}_i, \mathit{g}_i) \gets \mathsf{GradedConsensus}(\mathit{v}_i)$\label{line:unauth_full:gc_1}
        \State \hskip4em \textbf{let} $m \gets \beta \cdot 2^{\phi-1}$\label{line:unauth_full:pick_m} \BlueComment{$m \in \Theta(\hat{k})$ such that there are $\ge \frac{5m}{6}$ resilient groups}
        \State \hskip4em \textbf{let} $L_i \gets [p_1, p_2, \dots, p_m]$
        \State \hskip4em \textbf{let} $\mathit{G}_i \gets \mathsf{M\_Grouping}(\Pi, m)$ \BlueComment{see \Cref{def:m_grouping}}
        \State \hskip4em \textbf{for} $\mathsf{Integer}$ $j \gets 1$ to $m$: \BlueComment{run in parallel}
            \State \hskip6em $\mathit{L}_i[j] \gets \mathsf{GroupLeaderElection}(\mathit{G}_i[j], \mathit{a}_i)$\label{line:unauth_full:election}
        \State \hskip4em \textbf{let} $v' \gets \mathsf{AgreementWithImplicitCommittee}(\mathit{v}_i, \mathit{L}_i)$ \label{line:unauth_full:implicit} \BlueComment{\Cref{algorithm:ba-groups}}
         \State \hskip4em \textbf{if} $\mathit{g}_i = 0$:\label{line:unauth_full:gc_adopt_2}
            \State \hskip6em $\mathit{v}_i \gets v'$
    
        \smallskip
        \State \hskip4em $(\mathit{v}_i, \mathit{g}_i) \gets \mathsf{GradedConsensus}(\mathit{v}_i)$\label{line:unauth_full:gc_2}
        \State \hskip4em \textbf{if} $\mathit{decision}_i \ne \bot$:
            \State \hskip6em \textbf{return} $\mathit{decision}_i$ and \textbf{halt}
        \State \hskip4em \textbf{else if} $\mathit{g}_i = 1$:
            \State \hskip6em $\mathit{decision}_i \gets \mathit{v}_i$ \BlueComment{will decide in the next iteration}

    \State \hskip2em \textbf{return} $\mathit{decision}_i$ \BlueComment{if has not halted yet}
\end{algorithmic}
\end{algorithm}

In \cite{bendavid2025predictions}, it was proven that the overall algorithm correctly solves Byzantine agreement if the Byzantine agreement protocol with predictions works as claimed. The graded consensus and early-stopping Byzantine agreement are sufficient to prove safety and liveness, while the prediction-based Byzantine agreement is used to ensure the improved round complexity. Therefore, our main challenge is to analyze the round and communication complexities of the algorithm. We start by focusing on the number of phases run until all honest processes decide.

\begin{lemma}[Restated from \cite{bendavid2025predictions}]
    Suppose at phase $\phi$, the output of the early-stopping Byzantine agreement or the Byzantine agreement with implicit committee satisfies both strong unanimity and agreement. Then, all honest processes decide by the end of phase $\phi+1$.
\end{lemma}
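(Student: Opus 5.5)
The plan is to trace the structure of \Cref{algorithm:unauth_ba} through phase $\phi$ and phase $\phi+1$, using only the three graded consensus properties (consistency, strong unanimity, termination). Suppose the agreement sub-protocol $\mathcal{A}$ whose output is adopted satisfies both agreement and strong unanimity at phase $\phi$. Here $\mathcal{A}$ is either the early-stopping instance at line~\ref{line:unauth_full:early_ba} or the implicit-committee instance at line~\ref{line:unauth_full:implicit}. The goal is to show that, right after $\mathcal{A}$ and the adoption step, all honest processes hold the same value $v^*$. Strong unanimity of the subsequent graded consensus then gives every honest process grade $1$ on $v^*$ at the next graded consensus.

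The first step handles the graded consensus immediately preceding $\mathcal{A}$. Consider its outputs $(v_i, g_i)$ and split into two cases.

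\textbf{Case 1: some honest process obtains grade $1$ with value $v^*$.} By consistency, every honest process outputs $v^*$. So every honest process enters $\mathcal{A}$ with input $v^*$, and strong unanimity of $\mathcal{A}$ forces its output to be $v^*$. Thus honest processes with grade $0$ adopt $v^*$, and those with grade $1$ keep $v^*$; either way, all hold $v^*$.

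\textbf{Case 2: every honest process has grade $0$.} Every honest process adopts the output of $\mathcal{A}$. By agreement of $\mathcal{A}$, this output is a common value $v^*$.

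In both cases all honest processes end up with the same value after the adoption line.

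The second step propagates unanimity forward. If $\mathcal{A}$ is the early-stopping instance, the honest processes enter the next graded consensus (line~\ref{line:unauth_full:gc_1}) unanimously on $v^*$ and get grade $1$. They therefore ignore the implicit-committee output and still hold $v^*$ at line~\ref{line:unauth_full:gc_2}. If $\mathcal{A}$ is the implicit-committee instance, the graded consensus at line~\ref{line:unauth_full:gc_2} directly receives unanimous inputs. Either way, at line~\ref{line:unauth_full:gc_2} of phase $\phi$ every honest process outputs $(v^*, 1)$. Any honest process that had not already fixed a decision now sets $\mathit{decision}_i \gets v^*$.

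The remaining care is with processes that set a decision in an earlier phase. I would argue that a decision is only set upon a grade-$1$ output of the final graded consensus of some phase. By consistency, all honest processes then hold that value, and an inductive application of strong unanimity to all later graded consensus and sub-protocol instances preserves it. So any earlier decision equals $v^*$, and no honest process halts before phase $\phi+1$ with a conflicting value. In phase $\phi+1$, every honest process reaches the check $\mathit{decision}_i \ne \bot$ and halts, deciding by the end of phase $\phi+1$.

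The main obstacle is the bookkeeping around processes that halted in an earlier phase. If they stop sending messages, the graded consensus instances of later phases run with fewer participants. I would handle this by noting that all honest processes set $\mathit{decision}$ in the same phase, since grade $1$ at the last graded consensus implies unanimity and hence grade $1$ for everyone next time. They therefore halt together, one phase later, which is exactly what the "decide in the next iteration" design ensures.
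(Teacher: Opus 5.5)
Your core argument is correct and is the standard one. The two-case analysis of the graded consensus preceding $\mathcal{A}$, followed by strong unanimity of the later graded consensus instances, shows that every honest process gets grade $1$ at line~\ref{line:unauth_full:gc_2} of phase $\phi$.

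The gap is in the bookkeeping you flag as the main obstacle. Your claim that all honest processes set $\mathit{decision}_i$ in the same phase, and therefore halt together, is false. Suppose some honest process obtains $(v,1)$ at line~\ref{line:unauth_full:gc_2} of phase $\psi$. Consistency only gives the other honest processes the value $v$, \emph{possibly with grade $0$}, and those processes set their decision only in phase $\psi+1$. Your own justification, ``grade $1$ for everyone next time'', says exactly this. So the first deciders halt at the end of phase $\psi+1$ and the rest at the end of phase $\psi+2$, and phase $\psi+2$ does run with some honest processes missing. For the same reason, your ``inductive application of strong unanimity to all later instances'' is unjustified once the earliest deciders have halted.

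The fix is to argue from the first such phase instead of claiming simultaneity. Let $\psi$ be the first phase in which some honest process gets grade $1$ at line~\ref{line:unauth_full:gc_2}.
\begin{itemize}
\item No honest process halts before the end of phase $\psi+1$, so all honest processes participate fully in phases $\psi$ and $\psi+1$.
\item Consistency at phase $\psi$, together with strong unanimity of every graded consensus in phase $\psi+1$, gives every honest process grade $1$ at the end of phase $\psi+1$. Hence every honest process halts by the end of phase $\psi+2$.
\item The processes still running in phase $\psi+2$ need no correctness from its sub-protocols, since their output is already fixed. They only need each sub-protocol to finish within its bounded round budget, which holds here even with fewer participants.
\end{itemize}
Now split on $\psi$. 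If $\psi \le \phi-1$, you are done, since $\psi+2 \le \phi+1$. If $\psi \ge \phi$, nobody has halted before phase $\phi$ ends, so all honest processes participate in phase $\phi$. Your main argument then forces $\psi=\phi$, with every honest process setting its decision in phase $\phi$ and halting at the end of phase $\phi+1$.

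The agreement discussion (that earlier decisions equal $v^*$) is not needed for this lemma.
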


\begin{lemma}
    \label{lemma:unauth_ba:earliest_pred_phase}
    Suppose there are $k$ weakly misclassified processes. Then, in a phase $\phi$ with $2^{\phi-1} > \max\{k, B/n\}$, the output of the Byzantine agreement with implicit committee satisfies both strong unanimity and agreement.
\end{lemma}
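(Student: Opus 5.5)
The plan is to chain three ingredients: the grouping guarantee of \Cref{lemma:exist_resilient}, the guarantee of the group leader election of \Cref{subsection:unauthenticated:election} on resilient groups, and the correctness condition of the Byzantine agreement with implicit committee from \Cref{subsec:ba_implicit}. In phase $\phi$ we set $\hat{k} = 2^{\phi-1}$ and $m = \beta \hat{k}$. The constant $\beta$ is chosen as $\beta \ge c$, where $c$ is the constant of \Cref{lemma:exist_resilient}; if needed it is enlarged to absorb the constant in the election-failure bound below. Since $2^{\phi-1} > k$, we get $m \ge c\cdot k$. Hence the deterministic $m$-grouping computed identically by all processes has at least $5m/6$ resilient groups. (If $\beta 2^{\phi-1} > n$, we cap $m$ at $n$. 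Then $m \ge k$ still holds trivially up to constants, since $k \le n$.)

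The second step bounds the number of resilient groups whose election fails, meaning groups where honest processes do not all output the same honest leader. I would invoke the election lemma from \Cref{subsection:unauthenticated:election}: in a resilient group, the election can fail only if the prediction errors on that group's members total $\Omega(n)$. The failed groups are therefore each charged $\Omega(n)$ errors. Each process lies in at most $D = \mathcal{O}(1)$ groups, so every error bit is charged at most $D$ times, counting multiplicities in the multigraph. This gives at most $\mathcal{O}(DB/n)$ failed resilient groups. Using $2^{\phi-1} > B/n$ and choosing $\beta$ large, this count is strictly less than $m/6$.

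It follows that more than $5m/6 - m/6 = 2m/3$ indices $j$ exist where every honest $p_i$ has the same honest process in $L_i[j]$. These common indices form an honest core of the implicit committee. The committee lemma of \Cref{subsec:ba_implicit} then gives agreement and strong unanimity for the output of $\mathsf{AgreementWithImplicitCommittee}$, as long as more than $2m/3$ positions are common honest leaders.

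The main obstacle is the second step. It requires the precise quantitative form of the election guarantee, namely how many errors force a failure given that each process votes for only $\mathcal{O}(1)$ processes, together with correct handling of multigraph multiplicities. Aligning the constants, so that $\beta$ simultaneously meets $c$ and beats the failure count, is the delicate part. If the election lemma is instead phrased in terms of weakly misclassified processes, I would use the $n/10$ slack to convert a failure into $\Omega(n)$ error bits directly.
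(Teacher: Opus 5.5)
Your proposal is correct and matches the paper's proof step for step. The paper takes $m = \max\{c, 6D\}\cdot 2^{\phi-1}$ and gets at least $5m/6$ resilient groups from \Cref{lemma:exist_resilient}. It then uses \Cref{lemma:unauth-misprediction} (a resilient group whose election fails carries at least $n$ prediction errors), with each error shared by at most $D$ groups, to bound the failures by $DB/n < m/6$, and applies the implicit-committee agreement to the more than $2m/3$ common honest leaders. You should drop your parenthetical about capping $m$ at $n$: the paper does not use it, and it does not work as stated, since $m = n$ need not satisfy $m \ge ck$ for the specific constant $c = 12D$ (nor $m > 6DB/n$).
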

\begin{proof}
    Recall that the grouping is done using \Cref{lemma:exist_resilient}. Therefore, there exists some constant $D \geq 1$ such that every party is part of at most $D$ groups for any grouping computed using this approach. Moreover, \Cref{lemma:exist_resilient} also guarantees that there exists a constant $c > 0$ such that if $m \geq c \cdot k$, then at least $5m/6$ groups are resilient. Thus, we perform the grouping using \Cref{lemma:exist_resilient} and $m = \max\{c, 6D\} \cdot 2^{\phi - 1} = \Theta(2^{\phi - 1})$. 
    
    Assume that $ \max\{k, B/n\} < 2^{\phi-1}$. We want to show that the Byzantine agreement with implicit committees will succeed. Because of \Cref{lemma:black_box_ba}, we know that it will be the case if at least strictly more than $2m/3$ groups elect an honest leader. As seen above, at least $5m/6$ groups are resilient. Moreover, using \Cref{lemma:unauth-misprediction}, if a resilient group does not elect an honest leader, then it contains at least $n$ prediction errors. We note that there are $B$ prediction errors in total, and because a process appears in at most $D$ groups, a prediction error can only be ``shared'' by at most $D$ groups. Therefore, at most $DB/n$ resilient groups can fail their leader election. We note that $m \geq 6D 2^{\phi - 1}$ and we assume that $2^{\phi - 1} > B/n$. So strictly less than $m/6$ resilient groups will fail their leader election. Thus, more than $5m/6 - m/6 = 4m/6 = 2m/3$ groups elect an honest leader. The assumptions needed by the Byzantine agreement with implicit committee are satisfied. Therefore, the output of the Byzantine agreement with implicit committee will satisfy both strong unanimity and agreement.
\end{proof}


The earliest phase where the output of the early-stopping Byzantine agreement satisfies both strong unanimity and agreement is at most $\log_2(f) + \mathcal{O}(1)$ (when it is run for long enough). Then, from \Cref{lemma:unauth_ba:earliest_pred_phase}, the earliest phase where the output of the Byzantine agreement with implicit committee satisfies both properties is at most $\log_2(B/n) + \mathcal{O}(1)$. Therefore, we get the following.

\begin{lemma}
    \label{lemma:unauth_full:max_phase}
    All honest processes decide within $\log_2 (\min\{B/n, f\}) + \mathcal{O}(1)$ phases.
\end{lemma}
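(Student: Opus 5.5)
We combine the restated lemma from \cite{bendavid2025predictions} (a phase with a good sub-protocol output implies decision by the next phase) with the two facts identifying the earliest good phases. We take the minimum of these two phases and add one.

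\emph{Early-stopping branch.} In phase $\phi$ the early-stopping Byzantine agreement runs for $T = \alpha\cdot 2^{\phi-1}$ rounds. Its guarantee is termination with agreement and strong unanimity within $c_0 (f+1)$ rounds for some constant $c_0$. Choosing $\alpha \ge c_0$ (up to adjusting the constant), any phase with $2^{\phi-1} \ge f+1$ gives a run long enough to complete. So the first such phase is $\phi_f \le \log_2(f+1) + 1 = \log_2 f + \mathcal{O}(1)$. When $f = 0$, this is $\mathcal{O}(1)$.

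\emph{Prediction branch.} Since $f < n/3$, the observation on weakly misclassified processes gives $k \le c' B/n$ for a constant $c'$. Hence $\max\{k, B/n\} \le \max\{c',1\}\cdot B/n$. By \Cref{lemma:unauth_ba:earliest_pred_phase}, every phase with $2^{\phi-1} > \max\{k,B/n\}$ yields an implicit-committee output satisfying agreement and strong unanimity. The first such phase is therefore $\phi_B \le \log_2(B/n) + \mathcal{O}(1)$. When $B/n < 1$, we read this as $\mathcal{O}(1)$, since $2^{\phi-1}\ge 1 > \max\{k,B/n\}$ already holds when $k = 0$. If $k \ge 1$ then $B \ge n/6$, so the bound still holds.

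\emph{Combination.} Let $\phi^* = \min\{\phi_f, \phi_B\} \le \log_2(\min\{B/n, f\}) + \mathcal{O}(1)$. By the restated lemma, all honest processes decide by the end of phase $\phi^*+1$. We also check that $\phi^*+1$ does not exceed the loop bound $\lceil \log_2 t\rceil + 1$. This holds because $f \le t$ forces $\phi_f \le \lceil \log_2 t\rceil + \mathcal{O}(1)$. Any off-by-constant overflow is handled either by enlarging the loop bound by a constant or by the fact that the early-stopping agreement eventually terminates. The adoption rules and graded consensus, as shown in \cite{bendavid2025predictions}, guarantee that the output is correct.

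The main obstacle is mostly bookkeeping. The step needing care is the edge cases, namely $f=0$, $B<n$, and the logarithm of values below $1$. The cleanest treatment is to interpret $\log_2(\min\{B/n,f\})$ as $\log_2(\max\{1,\min\{B/n,f\}\})$ and to absorb everything else into the $\mathcal{O}(1)$ term.
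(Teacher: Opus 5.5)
Your proposal is correct and follows the paper's argument. Like the paper, you bound the earliest phase with a good early-stopping output by $\log_2 f + \mathcal{O}(1)$ and the earliest phase with a good implicit-committee output by $\log_2(B/n) + \mathcal{O}(1)$, using \Cref{lemma:unauth_ba:earliest_pred_phase} together with $k = \mathcal{O}(B/n)$, then apply the restated lemma to get a decision one phase later; one small slip is that with $f < n/3$ a weakly misclassified process only forces $b > n/2 - n/10 - n/3 = n/15$ (not $n/6$), which does not affect your conclusion.
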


Next, to prove our main unauthenticated result, we use the following implementation for our graded consensus and early-stopping Byzantine agreement.

\begin{lemma}[Graded consensus, restated from \cite{civit2024efficient}]
    \label{lemma:gc}
    There is an unauthenticated algorithm for graded consensus that tolerates up to $t < n/3$ faulty processes. The algorithm terminates in $\mathcal{O}(1)$ rounds, exchanges $\mathcal{O}(n^2)$ messages, and exchanges $\mathcal{O}(n^2)$ bits.
\end{lemma}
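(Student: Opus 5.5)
We give an explicit two-round protocol whose correctness relies only on quorum-intersection counting with $n > 3t$. Since inputs have constant size, every message has $\mathcal{O}(1)$ bits. The complexity bounds therefore follow once we check that each honest process sends one message to every process in each of a constant number of rounds.

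\textbf{Protocol (for process $p_i$).}
\begin{itemize}
  \item Round 1: $p_i$ broadcasts $\langle \textsc{init}, v_i\rangle$.
  \begin{itemize}
    \item If $p_i$ receives at least $n-t$ messages $\langle \textsc{init}, w\rangle$ for the same $w$, it sets $e_i \gets w$.
    \item Otherwise it sets $e_i \gets \bot$.
  \end{itemize}
  \item Round 2: $p_i$ broadcasts $\langle \textsc{echo}, e_i\rangle$, then decides as follows.
  \begin{itemize}
    \item If it receives at least $n-t$ messages $\langle \textsc{echo}, w\rangle$ with $w \neq \bot$, it outputs $(w,1)$.
    \item Else, if it receives at least $t+1$ messages $\langle \textsc{echo}, w\rangle$ for some $w\neq\bot$, it outputs $(w,0)$.
    \item Otherwise it outputs $(v_i,0)$.
  \end{itemize}
\end{itemize}

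\textbf{Key step: uniqueness of honest echoes.} We first show that all honest processes with $e_i \neq \bot$ hold the same value $w^*$.

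Suppose honest $p_i$ and $p_j$ set $e_i = w$ and $e_j = w'$. Their two supporting sets of $n-t$ senders intersect in at least $n-2t > t$ processes. Hence the intersection contains an honest process, which sent a single round-1 value, so $w = w'$.

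It follows that any value $w\neq\bot$ for which some honest process receives at least $t+1$ echoes has at least one honest echoer, and therefore $w = w^*$. In particular, the value chosen in each of the first two decision branches is well defined.

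\textbf{Consistency.} Suppose an honest process outputs $(w,1)$. It received $n-t$ echoes for $w$, of which at least $n-2t \ge t+1$ come from honest processes. Every honest process receives those same honest echoes, so it sees at least $t+1$ echoes for $w$.

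By uniqueness, $w = w^*$ is the only value that can reach the $t+1$ threshold at any honest process. Hence every honest process outputs $w$, either in the grade-1 branch or in the grade-0 branch, and never falls through to its own input.

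\textbf{Strong unanimity.} Suppose all honest processes propose $v$. Each honest process receives at least $n-t$ messages $\langle\textsc{init},v\rangle$ and sets $e_i = v$. Every honest process then receives at least $n-t$ echoes of $v$ and outputs $(v,1)$.

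\textbf{Termination and complexity.} The protocol runs for exactly $2$ rounds. Each honest process sends $n$ messages per round, each of constant size. This gives $\mathcal{O}(n^2)$ messages and $\mathcal{O}(n^2)$ bits, as claimed in \Cref{lemma:gc}.

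The only subtle point is the uniqueness argument. It is where $n > 3t$ is used, and it is what makes the $t+1$ threshold in round 2 safe against Byzantine echoes of other values.
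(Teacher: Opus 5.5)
Your protocol and its analysis are correct. The count $2(n-t)-n = n-2t \ge t+1$ shows that all honest processes with a non-$\bot$ echo hold the same value, and this is exactly what makes the $t+1$ threshold safe. Consistency, strong unanimity, and the $2$-round, $\mathcal{O}(n^2)$-message, $\mathcal{O}(n^2)$-bit bounds then follow, the last one using the paper's standing assumption that inputs have constant size.

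The paper gives no argument for this lemma at all; it imports it as a black box from \cite{civit2024efficient}. So the difference is a self-contained construction versus a citation.

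\begin{itemize}
\item \textbf{Your route} makes the unauthenticated result checkable using only the two facts actually needed: $n>3t$ and constant-size values. It also gives slightly more than required: every output is some honest process's input, since a value backed by $n-t$ \textsc{init} messages has at least $n-2t \ge t+1$ honest senders.
\item \textbf{The citation} buys brevity, plus whatever generality the cited construction has beyond constant-size inputs. That generality is never used in \Cref{algorithm:unauth_ba}, where every graded consensus instance runs on constant-size values.
\end{itemize}

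One wording fix: your grade-$1$ rule should read ``at least $n-t$ echoes for the same $w \neq \bot$''. Your consistency argument already assumes this reading.
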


\begin{lemma}[Early-stopping Byzantine agreement, restated from \cite{lenzen2022phase}]
    \label{lemma:early_ba}
    There is an unauthenticated algorithm for Byzantine agreement that tolerates up to $t < n/3$ faulty processes. The algorithm terminates in $\mathcal{O}(f)$ rounds (where $f \le t$ is the actual number of faults), exchanges $\mathcal{O}(n^2)$ messages, and exchanges $\mathcal{O}(n^2)$ bits.
\end{lemma}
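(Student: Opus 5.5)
This lemma is not proved from scratch in the paper; the plan is to invoke the recursive early-stopping phase-king protocol of Lenzen and Sheikholeslami \cite{lenzen2022phase} as a black box. The work is then only to check that its guarantees match the three claimed quantities: resilience $t<n/3$, $\mathcal{O}(f)$ rounds, and $\mathcal{O}(n^2)$ messages and bits. I would recall the structure of that protocol and verify each bound in turn. Because inputs are constant-size here, every message is $\mathcal{O}(1)$ bits, so bounding messages also bounds bits.

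\textbf{Structure.} The process set is split into two halves. Each half recursively runs the protocol on its own to produce a candidate value. The whole system then runs a constant number of phase-king style phases, each with $\mathcal{O}(1)$ rounds of all-to-all exchange of constant-size messages, and each half acts collectively as a ``king''. Graded-consensus-like threshold steps (the $n-t$ and $t+1$ thresholds) ensure two things. If the honest processes already agree, they never change their value. If a half is ``good'' (fewer than a third of its members are faulty), its recursive output is adopted consistently by all honest processes. Since $t<n/3$, at least one half is good, which gives agreement. Strong unanimity follows from the same thresholds, because a unanimous honest input is locked at every level.

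\textbf{Communication.} The top-level phases cost $\mathcal{O}(n^2)$ messages of $\mathcal{O}(1)$ bits. This gives the recurrence $M(n)=2M(n/2)+\mathcal{O}(n^2)$, which solves to $M(n)=\mathcal{O}(n^2)$ messages and hence $\mathcal{O}(n^2)$ bits.

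\textbf{Rounds and early stopping.} This is the main obstacle. A naive recursion gives $\mathcal{O}(n)$ rounds regardless of $f$. I would follow \cite{lenzen2022phase} in two respects. First, the recursive calls are cut off by an estimate of the number of faults. Second, processes may halt early once a threshold of unanimous reports is seen. The key argument is that with $f$ actual faults, any sub-instance of size more than $3f$ is good, so agreement is reached after recursing only down to instances of size $\Theta(f)$. The round count then satisfies $R(s)=R(s/2)+\mathcal{O}(1)$ until $s=\Theta(f)$, with the base cases at that size costing $\mathcal{O}(f)$ rounds, for $\mathcal{O}(f)$ in total. Early halting is made safe by having halting processes send one final message, so that the others complete within $\mathcal{O}(1)$ further rounds.

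I would defer the details of this termination argument to \cite{lenzen2022phase} rather than reprove them.
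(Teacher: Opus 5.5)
Your approach matches the paper's: the paper gives no proof of this lemma and simply imports the bounds from \cite{lenzen2022phase}, as you propose. One caution about your supporting sketch: running the recurrence $R(s)=R(s/2)+\mathcal{O}(1)$ from $s=n$ down to $s=\Theta(f)$ gives $\mathcal{O}(f+\log(n/f))$ rounds, not $\mathcal{O}(f)$ (for example, $\Theta(\log n)$ rounds when $f=\mathcal{O}(1)$), so the early-stopping bound must rest on the cited result and not on that heuristic.
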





We are now ready to prove our main unauthenticated result. 

\unauthsubcubic*
\begin{proof}
    
    From \Cref{lemma:exist_resilient}, \Cref{lemma:unauth:election} (group leader election), \Cref{lemma:gc} (graded consensus), and \Cref{lemma:early_ba} (early-stopping Byzantine agreement), the algorithm tolerates up to $t < n/3$ Byzantine processes.

    From \Cref{lemma:ba_implicit_complexities}, when our Byzantine agreement with implicit committee is run on $\mathcal{O}(2^{\phi})$ groups, it will terminate in $\mathcal{O}(2^{\phi})$ rounds, exchange $\mathcal{O}(n^2)$ messages, and exchange $\mathcal{O}(n^2 \log n)$ bits.
    Then, observe that $\sum_{g \in G} |g| = \mathcal{O}(n)$ for each computed $m$-grouping $G$ in any phase, following the definition of a $m$-grouping (a process appears in $\mathcal{O}(1)$ groups). Thus, in each phase, honest processes exchange $\mathcal{O}(n^2)$ messages and $\mathcal{O}(n^{2} \log(n))$ bits across all group leader elections in $\mathcal{O}(1)$ rounds. Combined with \Cref{lemma:gc} and \Cref{lemma:early_ba}, each phase $\phi$ terminates in $\mathcal{O}(2^{\phi})$ rounds, exchanging $\mathcal{O}(n^2)$ messages and $\mathcal{O}(n^2 \log n)$ bits. Finally, following \Cref{lemma:unauth_full:max_phase}, we obtain the stated complexities.
\end{proof}

%% file: sections/sub/agreement_implicit.tex
\subsection{Byzantine Agreement with an Implicit Committee}\label{subsec:ba_implicit}

In our agreement with an implicit committee, besides their proposal, each process $p_i$ is also given a vector of processes $\mathit{L}_i$. Processes may input different vectors of processes. However, we assume that (1) they all have the same size, (2) for each $p_i$, there is at most $\mathcal{O}(1)$ $j$ with $\mathit{L}_i[j] = p_i$, and (3) there are more than $\frac{2|\mathit{L}|}{3}$ indices $j$, such that $\mathit{L}_i[j]$ is the same honest process for each honest process $p_i$. Let us note that the third assumption is where the `implicit committee' comes in, as there exists a committee (common set), but processes do not know who they are. In our full algorithm, the first two assumptions are due to the $m$-grouping, while the last assumption will be due to the group leader election on each group.

Before describing the algorithm, let us remark on the following existing Byzantine agreement protocol that our algorithm relies on.

\begin{lemma}[Restated from \cite{berman1992bit}]
    \label{lemma:black_box_ba}
    There is an unauthenticated Byzantine agreement algorithm that tolerates up to $t < n/3$ Byzantine processes. The algorithm terminates in $\mathcal{O}(n)$ rounds, and each honest process sends $\mathcal{O}(n)$ bits.
\end{lemma}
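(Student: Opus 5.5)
This lemma is a restatement of the bit-optimal consensus result of Berman, Garay and Perry \cite{berman1992bit}, so in the paper we would simply invoke it. To justify it, we reconstruct their recursive phase-king argument.

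Plain phase king already gives optimal resilience $t<n/3$ and $\mathcal{O}(t)$ rounds. Each of its $t+1$ phases consists of an all-to-all exchange of one bit, a second all-to-all exchange used to compute a ``strong majority'' flag, and a broadcast by the phase's king. The trouble is cost: every process sends $\Theta(n)$ bits per phase, hence $\Theta(n^2)$ bits overall. The plan is to keep only a constant number of phases per recursion level and to replace each single king by a \emph{king block} that agrees internally by recursion.

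\textbf{Construction.} Fix a constant $k\ge 4$ and split $\Pi$ into $k$ blocks $P_1,\dots,P_k$ of size about $n/k$. The protocol runs $k$ phases; phase $r$ has three steps.
\begin{enumerate}
\item All processes run the two all-to-all one-bit exchanges of phase king. Each process ends with a preferred value and a flag recording whether it saw at least $n-t$ supporting votes.
\item The processes of $P_r$ run the protocol recursively on their current preferred values and agree on a value $w_r$.
\item Every member of $P_r$ sends $w_r$ to all processes. A process without a strong-majority flag adopts the majority of the values it received from $P_r$.
\end{enumerate}

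\textbf{Correctness.} Since $f<n/3$, the average fraction of faulty processes per block is below $1/3$, so some block $P_{r^*}$ contains fewer than $|P_{r^*}|/3$ faulty processes. The recursive instance inside $P_{r^*}$ is then correct, so its honest members hold a common $w_{r^*}$ and are more than two thirds of that block. Every honest process therefore sees $w_{r^*}$ as the majority of $P_{r^*}$'s messages. The standard phase-king argument, with ``the king is honest'' replaced by ``the king block is resilient'', gives the following two facts.
\begin{itemize}
\item In phase $r^*$, all honest processes end up with the same value. A flagged process and the block's recursive input are both forced to agree with the unique value that can have $n-t$ support.
\item Once all honest processes agree, later phases cannot change their value, whatever the faulty king blocks do. The same holds from the start under unanimous honest inputs, which gives strong unanimity.
\end{itemize}

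\textbf{Complexity.} With $\mathcal{O}(1)$ rounds of all-to-all exchange per phase, the rounds satisfy $R(n)=k\,R(n/k)+\mathcal{O}(k)$, which solves to $R(n)=\mathcal{O}(n)$. A process belongs to exactly one block and so participates in one recursive instance. Its remaining cost is $\mathcal{O}(k)$ all-to-all exchanges of $\mathcal{O}(1)$ bits each, so the bits it sends satisfy $S(n)=S(n/k)+\mathcal{O}(kn)$, which solves to $S(n)=\mathcal{O}(n)$ per process.

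\textbf{Main obstacle.} The delicate step is the interface between a recursive instance and the outer phase. Inside a bad block, more than a third of the members may be faulty, so the recursive instance has no guarantees: honest members of that block may output different values or fail to agree within their round budget. We handle this by running every recursive instance for its fixed worst-case round bound and then forcing termination, with each honest member outputting whatever value it holds at that point. This keeps the round count deterministic regardless of block quality. It also ensures that a bad block can only act like a Byzantine king, which the persistence property of the strong-majority flag already tolerates.
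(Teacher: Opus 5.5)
Your proposal is correct and takes essentially the paper's approach: the paper gives no proof of its own and simply invokes Berman, Garay and Perry. Your king-block recursion is a sound sketch of that cited protocol, and your fixed-schedule argument is what justifies the paper's remark that the round and bit bounds hold even with more than $t$ faults. The only slip is that plain phase king costs $\Theta(n^2)$ bits per process, i.e.\ $\Theta(n^3)$ in total, not $\Theta(n^2)$ overall.
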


Importantly, the complexities hold even if there are more than $t$ faults (although processes may then disagree).

We present our protocol in \Cref{algorithm:ba-groups}. If a process $p_i$ considers itself a part of the committee ($p_i \in \mathit{L}_i$), it then participates in the Byzantine agreement protocol from \Cref{lemma:black_box_ba}. Here, $p_i$ participates assuming there are $|\mathit{L}_i|$ processes, where the $j$-th process is $\mathit{L}_i[j]$ (line~\ref{line:ba_implicit:black_box}). After that, $p_i$ broadcasts its decision at line~\ref{line:ba_implicit_broadcast}. Finally, every process $p_i$ decides using the majority value it receives from the processes in its $\mathit{L}_i$.

We remark that we allow a single process to participate multiple times (at most $\mathcal{O}(1)$ times) in the committee. To handle this case, given a process $p_i$, for each index $j$ where $L_i[j] = p_i$, $p_i$ runs an instance of the protocol where it acts as the $j$-th process. Therefore, each process may run up to $\mathcal{O}(1)$ instances of \Cref{lemma:black_box_ba} in parallel. Importantly, due to the possibility that a process may `simulate' multiple processes in the executed Byzantine agreement, each message also needs to specify who the `simulated' sender and receiver are.

Thus, during the execution of the Byzantine agreement in \Cref{lemma:black_box_ba}, we add two identifiers in each message sent: one giving the index of the sender and one giving the index of the receiver, both between $1$ and $|L_i|$. This way, processes know which instances the message is from and to. If the implicit sender/receiver of a message does not match the identifier given in the message, it is discarded. Finally, note that using this approach, each message has an additional $\mathcal{O}(2 \lceil \log |L_i| \rceil) = \mathcal{O}(\log n)$ bits.

\begin{algorithm} [h]
\caption{Byzantine Agreement with Implicit Committee: Pseudocode (for process $p_i$)}
\label{algorithm:ba-groups}
\begin{algorithmic} [1]
\footnotesize
    
\smallskip
\State \textbf{Input parameters:}
    \State \hskip2em $\mathsf{Vector}(\mathsf{Process})$ $\mathit{L}_i$
    \State \hskip2em $\mathsf{Value}$ $\mathit{v}_i$

\medskip
\State $\mathsf{AgreementWithImplicitCommittee}(\mathit{v}_i, \mathit{L}_i)$:
    \State \hskip2em \textbf{if} $p_i \in \mathit{L}_i$:
        \State \hskip4em \textbf{let} $\mathit{d}_i \gets \mathsf{ByzantineAgreement}(\mathit{v}_i)$ over $\Pi_i' = \{p'_1, \dots, p'_{|L_i|}\}$ with $p'_j = L_i[j]$\label{line:ba_implicit:black_box} \BlueComment{\Cref{lemma:black_box_ba}}
        \State \hskip4em \textbf{broadcast} $\langle \textsc{decision}, \mathit{d}_i \rangle$\label{line:ba_implicit_broadcast} 

    \smallskip
    \State \hskip2em \textbf{let} $\mathit{decision}_i$ be the majority value of $\textsc{decision}$ messages from $\mathit{L}_i$
    \State \hskip2em \textbf{return} $\mathit{decision}_i$
\end{algorithmic}
\end{algorithm}

We now prove the guarantees of the algorithm.

\begin{lemma}
    The algorithm correctly solves Byzantine agreement.
\end{lemma}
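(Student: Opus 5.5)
We reduce to the guarantees of the black-box protocol of \Cref{lemma:black_box_ba}, run over a virtual system of $m = |\mathit{L}|$ slots. By assumption (1) all vectors have the same length $m$. By assumption (3) there is a set $J$ of more than $2m/3$ indices $j$ such that $\mathit{L}_i[j]$ is the same honest process $h_j$ for every honest $p_i$. We call the slots in $J$ \emph{honest slots} and all others \emph{faulty slots}. We then show that the execution of line~\ref{line:ba_implicit:black_box} is indistinguishable from an execution of the black-box protocol over $m$ virtual processes in which the honest slots follow the protocol and fewer than $m/3$ slots behave arbitrarily.

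\textbf{Simulation argument (main obstacle).} For $j \in J$, the process $h_j$ holds $h_j \in \mathit{L}_{h_j}$ at index $j$, so it runs the instance for slot $j$. It sends each message tagged (sender $j$, receiver $j'$) to the process $\mathit{L}_{h_j}[j']$. When $j' \in J$, this process is exactly $h_{j'}$, who identifies it as slot $j$'s message, because $\mathit{L}_{h_{j'}}[j] = h_j$ matches the actual sender. The delicate point is that honest processes disagree on who occupies the faulty slots. A faulty slot $j$ may be ``occupied'' by an honest $p$ with $\mathit{L}_p[j] = p$, even though other honest processes map $j$ to someone else. We argue that whatever such a process sends under identifier $j$, and whatever honest slot-holders accept as slot $j$'s messages (only from $\mathit{L}_{h}[j]$, by the discard rule), is just \emph{some} behavior of slot $j$ from the viewpoint of each honest slot. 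Since Byzantine slots may behave arbitrarily and even inconsistently toward different receivers, this is covered. Messages with mismatched identifiers are discarded, so no honest slot receives forged messages on behalf of another honest slot. Hence fewer than $m/3$ slots are ``faulty'', and \Cref{lemma:black_box_ba} gives that all honest slots $h_j$, $j\in J$, obtain a common decision $d$. Moreover, if all honest processes propose $v$, then all honest slots input $v$, so $d = v$ by strong unanimity of the black box.

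\textbf{Final majority step.} Every $h_j$ with $j \in J$ broadcasts $\langle\textsc{decision}, d\rangle$. An honest $p_i$ takes the majority over the $m$ entries of $\mathit{L}_i$, counting a process once per index at which it appears. More than $2m/3$ of these entries are honest slots, and each of them reports $d$. The remaining entries number fewer than $m/3$, so even if they all report a different value, $d$ is the strict majority. Therefore every honest process returns $d$, which gives agreement, and strong unanimity follows from $d=v$ in the unanimous case.

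\textbf{Termination.} Termination holds because the black-box protocol runs a fixed $\mathcal{O}(m)$ number of rounds regardless of the fault count, as remarked after \Cref{lemma:black_box_ba}. The decision broadcast and the majority step each take one further round.
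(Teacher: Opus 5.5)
Your proof is correct and follows the same argument as the paper. You view the run of \Cref{lemma:black_box_ba} as an execution over $|L|$ virtual slots: indices where all honest processes agree on a common honest process are honest, and every other index is Byzantine. You then apply the black box's guarantees with fewer than $|L|/3$ faulty slots and finish with the per-index majority over \textsc{decision} messages. Your explicit treatment of the tag-and-discard rule, and of honest processes that occupy faulty slots, fills in what the paper leaves implicit. One detail is also implicit in both your argument and the paper's: each instance's \textsc{decision} must be tagged with its slot index, because an honest process that also occupies a faulty slot may get a different output from that instance.
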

\begin{proof}
    The execution of the Byzantine agreement from \Cref{lemma:black_box_ba} will be equivalent to the one with $\Pi^* = \{p^*_1, \dots, p^*_{|L|}\}$ where for each $1 \le j \le |L|$:
    \begin{itemize}
        \item $p^*_j$ is honest if $\mathit{L}_i[j]$ is the same honest process for each honest process $p_i$, and
        \item $p^*_j$ is Byzantine otherwise.
    \end{itemize}
    In other words, those in the implicit committee are honest, and the rest are Byzantine. As there are more than $\frac{2|\mathit{L}|}{3}$ indices $j$ where $\mathit{L}_i[j]$ is the same honest process for each honest process $p_i$, there will be less than $\frac{|L|}{3}$ Byzantine processes. Therefore, the output of processes in the implicit committee will satisfy agreement and strong unanimity. Moreover, the value they broadcast at line~\ref{line:ba_implicit_broadcast} will be the majority value received by any honest process. Hence, each honest process will agree on a value that satisfies strong unanimity.
\end{proof}

\begin{lemma}
    \label{lemma:ba_implicit_complexities}
    The algorithm terminates in $\mathcal{O}(|L|)$ rounds, exchanges $\mathcal{O}(n^2)$ messages, and exchanges $\mathcal{O}(n^2 \log n)$ bits.
\end{lemma}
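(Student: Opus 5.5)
The plan is to count the two phases of \Cref{algorithm:ba-groups} separately: the simulated run of the protocol from \Cref{lemma:black_box_ba} (line~\ref{line:ba_implicit:black_box}), and the final \textsc{decision} broadcast (line~\ref{line:ba_implicit_broadcast}).

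\textbf{Rounds.} Each committee member $p_i$ runs the protocol of \Cref{lemma:black_box_ba} on a virtual system of $|L_i| = |L|$ processes. By assumption (1), all vectors have the same size, so all honest processes run instances of the same length. That length is $\mathcal{O}(|L|)$ rounds. This holds even if the virtual system has more than a third faulty members, by the remark following \Cref{lemma:black_box_ba}. Each process simulates only $\mathcal{O}(1)$ indices, by assumption (2), and runs these instances in parallel, so no extra rounds accrue. Adding one round for the broadcast and the local majority computation gives $\mathcal{O}(|L|)$ rounds overall.

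\textbf{Messages and bits.} In the simulated agreement, each simulated honest process sends $\mathcal{O}(|L|)$ bits by \Cref{lemma:black_box_ba}. Each honest $p_i$ simulates $\mathcal{O}(1)$ indices, so it sends $\mathcal{O}(|L|)$ payload bits in total. These bits travel in messages between pairs of processes. I will argue that each honest process sends at most one message per recipient per round, bundling all simulated instances together. Across $\mathcal{O}(|L|)$ rounds and $|L| \le n$ virtual recipients, this gives $\mathcal{O}(|L|^2)$ messages per process. That is too many in general, so the count must be tied to the bit bound instead: the number of messages is at most the number of payload bits, since every message carries at least one bit. Hence each honest process sends $\mathcal{O}(|L|) = \mathcal{O}(n)$ messages, and there are at most $n$ honest processes, for $\mathcal{O}(n^2)$ messages.

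Each message also carries the two index tags, adding $\mathcal{O}(\log n)$ bits. The simulated phase therefore costs $\mathcal{O}(n \cdot |L| \log n) = \mathcal{O}(n^2 \log n)$ bits. The decision broadcast costs $\mathcal{O}(n)$ messages of $\mathcal{O}(1)$ bits per committee member, since inputs have constant size. With at most $n$ members, this adds $\mathcal{O}(n^2)$ messages and bits.

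\textbf{Main obstacle.} The delicate step is justifying that the message count is bounded by the bit count of \Cref{lemma:black_box_ba}, rather than by rounds times recipients. A message is only sent when it carries payload, so each message accounts for at least one payload bit. The $\mathcal{O}(\log n)$ tagging overhead is then charged per message, which gives the $\log n$ factor in the bit bound but not in the message bound. I also need $|L| = m \le n$; this holds since $m$-groupings assume $n \ge m$.
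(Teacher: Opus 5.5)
Your proposal is correct and follows essentially the paper's route: one extra round on top of \Cref{lemma:black_box_ba} gives the round bound, and the message and bit bounds come from assumption (2), the $\mathcal{O}(|L|)=\mathcal{O}(n)$ per-process bit bound of the simulated protocol, the $\mathcal{O}(\log n)$ tag overhead per message, and the final \textsc{decision} broadcast. The only difference is that you spell out steps the paper leaves implicit, namely that the message count is bounded by the payload bit count and that $|L| = m \le n$.
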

\begin{proof}
    There is only one additional round besides the invoked Byzantine agreement, hence the round complexity follows from \Cref{lemma:black_box_ba}. From the assumptions, \Cref{lemma:black_box_ba}, adding additional $\mathcal{O}(\log n)$ bits in each message in \Cref{lemma:black_box_ba}, and line~\ref{line:ba_implicit_broadcast}, each honest process sends $\mathcal{O}(n)$ messages and $\mathcal{O}(n \log n)$ bits.
\end{proof}

Furthermore, due to \Cref{lemma:black_box_ba}, the complexities hold even if there are fewer than $\frac{2|\mathit{L}|}{3}$ processes in the implicit committee. This will be important later, as we invoke this algorithm.

%% file: sections/sub/unauth_group_leader_election.tex
\subsection{Group Leader Election}\label{subsection:unauthenticated:election}

We now focus on an efficient algorithm for group leader election in resilient groups (see \Cref{definition:resilient_group}). Our goal is to prove the following lemma. We define the number of prediction errors in a group as the sum of the prediction errors over all processes in this group.

\begin{restatable}{lemma}{unauthelection}
    \label{lemma:unauth:election}
    Consider a resilient group $G$ with fewer than $n$ prediction errors. Then there is a group leader election algorithm assuming $f < n/3$. The algorithm terminates in $\mathcal{O}(1)$ rounds, exchanges $\mathcal{O}(n|G|)$ messages, and $\mathcal{O}(n|G| \log (n))$ bits.
\end{restatable}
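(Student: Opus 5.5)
We follow the three-step protocol sketched in \Cref{section:technical_overview}: a voting round, a committee-internal decision round, and an announcement round. Fix a small constant $s$ (say $s=3$). \textbf{Round 1 (vote):} every process $p_i$ looks at the members of $G$ it predicts as honest, scanning $G$ in its fixed deterministic order, and sends a vote naming the first $s$ of them to every member of $G$. Each vote is an index into $G$, i.e.\ $\mathcal{O}(\log n)$ bits. If a process appears several times in $G$ (multigraph grouping), we treat each occurrence as a separate slot, tagging messages with the slot index as in \Cref{subsec:ba_implicit}. \textbf{Round 2 (local election):} each member $q\in G$ picks the smallest index $j$ of $G$ that received more than $n/2$ votes, and sends $j$ to all $n$ processes. \textbf{Round 3:} each process outputs $G[j]$ for the index $j$ reported by a strict majority of the slots of $G$.

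The complexity follows directly. Round 1 sends $n|G|$ messages of $\mathcal{O}(\log n)$ bits. Round 2 sends $|G|\cdot n$ messages of $\mathcal{O}(\log n)$ bits. Everything takes $\mathcal{O}(1)$ rounds.

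For correctness, use resilience to remove ambiguity for the honest members. Since no process in $G$ is weakly misclassified, each $p\in G$ has $b_p < n/2-n/10-f$.
\begin{itemize}
\item If $p$ is Byzantine, then at most $b_p+f < 2n/5 < n/2$ processes can vote for it. So no honest member ever selects a Byzantine index.
\item If $p$ is honest, at least $n-f-b_p > n/2+n/10$ honest processes predict it honest. The only question is whether they actually vote for it, given they vote for just $s$ names.
\end{itemize}
Let $h$ be the smallest honest member of $G$. An honest voter $p_i$ that predicts $h$ honest fails to vote for $h$ only if it predicts at least $s$ members of $G$ preceding $h$ as honest. All those members are Byzantine, so each such mark is a prediction error charged to $G$. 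Hence the number of honest voters that predict $h$ honest but do not vote for it is at most $E/s$, where $E<n$ is the number of prediction errors in $G$.

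Next, bound the total number of non-votes for $h$ among honest processes:
\[
b_h + \frac{E}{s} \;\le\; E + \frac{E}{s}\cdot(\text{overlap}) .
\]
Here I expect the main obstacle: making the constants work. A cleaner route is to count directly. Let $x$ be the number of honest voters predicting $h$ as Byzantine, and $y$ the number that spend their $s$ votes on Byzantine predecessors. Then $x + sy \le E < n$, so $x+y < n$ alone is too weak. Instead, use $b_h = x < n/2-n/10-f$, together with $y < n/s$, which gives
\[
\text{votes for } h \;\ge\; n-f-x-y \;>\; \frac{n}{2}+\frac{n}{10}-\frac{n}{s}.
\]
Choosing $s\ge 10$ makes this exceed $n/2$. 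The slack $n/10$ in the definition of weak misclassification is exactly what absorbs this loss, and I would tune $s$ accordingly.

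Thus every honest member of $G$ sees more than $n/2$ votes for $h$, and for no Byzantine index. Every index smaller than $h$ is Byzantine, so every honest member reports exactly the index of $h$. Since the honest slots form a strict majority of $G$, every honest process outputs $h$. This gives both agreement and validity.
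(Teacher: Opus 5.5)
Your proposal is correct and is essentially the paper's proof: it uses the same protocol (the paper takes exactly $s=10$) and the same bound $f+b_p<2n/5<n/2$ to rule out Byzantine members. It also uses the same charging argument, in which every honest voter that predicts the smallest honest member $h$ honest but does not vote for it accounts for $s$ prediction errors, so the $n/10$ slack absorbs the loss; the paper phrases this as a contrapositive with two cases, while you count votes for $h$ directly. For cleanup, drop the initial ``say $s=3$'' and the abandoned ``overlap'' display (only $s\ge 10$ works), and since you vote per slot, either count a group's errors with multiplicity or, as the paper does, have each voter name $s$ distinct processes, so that a skipped vote for $h$ really costs $s$ errors.
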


Let us stress that the lemma works for any group size $|G|$. Also, the round, message, and bit complexity holds even if the requirement on the group being resilient or on the number of prediction errors is not satisfied.

The algorithm is presented at \Cref{algorithm:group-leader-big}. At a high level, all processes send (some part of) their prediction concerning $G$ to every process in $G$. Then every process in $G$ uses this information to decide a leader and broadcasts this decision to everyone. Because we assume $G$ is resilient, it has a majority of honest parties. Therefore, if all honest parties decide on the same honest party as their leader, the leader election will be successful as long as processes decide the leader which was elected the most.

One key idea is that instead of sending the full $|G|$-bit prediction list, one party can just send the $10$ parties with the smallest index in $G$ it predicted as honest. We show that as long as $G$ does not have too many prediction errors, this approach will still be successful while improving the communication complexity by a factor of $\frac{|G|}{\log n}$ compared to the naive approach. Meanwhile, when this approach fails, we show that there will be sufficiently many prediction errors, which later allows us to limit how many elections could fail.

We note that in the full algorithm, the grouping is done via \Cref{lemma:exist_resilient}. Thus, $G$ is a multiset, i.e., a party may appear multiple times in a group. This can be handled simply by weighting the leader messages from a process on line~\ref{line:unauth-elect:decide} of \Cref{algorithm:group-leader-big} by its multiplicity in $G$. We omit this in the exposition for clarity.

\begin{algorithm} [h]
\caption{Group Leader Election in the Unauthenticated Setting: Pseudocode (for process $p_i$)}
\label{algorithm:group-leader-big}
\begin{algorithmic} [1]
\footnotesize

\State \textbf{Input parameters:}
    \State \hskip2em $\mathsf{Resilient\_Group}$ $\mathit{G}$ \BlueComment{see \Cref{definition:resilient_group}}
    \State \hskip2em $\mathsf{Prediction}$ $\mathit{a}_i$

\medskip
\State $\mathsf{UnauthenticatedElection}(\mathit{G}, \mathit{a}_i)$:
    \State \hskip2em \textbf{let} $\mathit{V}_i$ be the set of index $j$ with $p_j \in \mathit{G}$ and $\mathit{a}_i[j] = 1$
    \State \hskip2em \textbf{if} $|\mathit{V}_i| > 10$, only keep the smallest $10$ indices in $\mathit{V}_i$
    \State \hskip2em \textbf{send} $\langle \textsc{vote}, \mathit{V}_i\rangle$ to each $p_j \in \mathit{G}$

    \smallskip
    \State \hskip2em \textbf{if} $p_i \in \mathit{G}$:
        \State \hskip4em \textbf{let} $\ell_i$ be the smallest index $j$ in $\mathit{G}$ where $p_i$ received more than $n/2$ $\langle \textsc{vote}, \mathit{V} \rangle$ with $j \in \mathit{V}$
        \State \hskip4em \textbf{if} $\ell_i$ exists:
            \State \hskip6em \textbf{broadcast} $\langle \textsc{leader}, \mathit{p_{\ell_i}} \rangle$
        \State \hskip4em \textbf{else}:
            \State \hskip6em \textbf{let} $q$ be an arbitrary process in $G$
            \State \hskip6em \textbf{broadcast} $\langle \textsc{leader}, \mathit{q} \rangle$

     \smallskip
    \State \hskip2em \textbf{let} $p_\ell \in \mathit{G}$ be a process such that $p_i$ received $\langle \textsc{leader}, p_\ell \rangle$ from more than $|\mathit{G}|/2$ processes in $\mathit{G}$ \label{line:unauth-elect:decide}
    \State \hskip2em \textbf{if} $p_\ell$ exists:
        \State \hskip4em \textbf{return} $p_\ell$
    \State \hskip2em \textbf{else}:
        \State \hskip4em \textbf{let} $q$ be an arbitrary process in $G$
        \State \hskip4em \textbf{return} $q$
\end{algorithmic}
\end{algorithm}

We now prove the guarantees of the algorithm, assuming all honest processes input the same group $\mathit{G}$.

\begin{lemma}\label{lemma:unauth-misprediction}
    Assume $G$ is resilient. If a common honest leader from $G$ is not elected, then $G$ has at least $n$ prediction errors.
\end{lemma}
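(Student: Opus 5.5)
The plan is a contrapositive argument. We fix the honest process $p^*$ in $G$ with the smallest index and show that either every honest process outputs $p^*$, or the number of prediction errors in $G$ is at least $n$. Two facts are used throughout. First, resilience means no process of $G$ is weakly misclassified, so every $q\in G$ with $b_q$ prediction errors satisfies $f+b_q+n/10<n/2$, i.e.\ $b_q< 2n/5-f$. Second, resilience gives $G$ an honest majority, counted with multiplicity as in the weighted version of line~\ref{line:unauth-elect:decide}.

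\textbf{Step 1 (reduction to the vote count of $p^*$).} Suppose $p^*$ receives more than $n/2$ \textsc{vote} messages containing it at every honest member of $G$. I will check that no process of smaller index also clears the threshold.
\begin{itemize}
\item By the choice of $p^*$, every process of $G$ with smaller index is Byzantine.
\item Such a Byzantine $q$ is included in the votes of at most $f$ Byzantine processes, plus the honest processes that wrongly predict it honest. There are exactly $b_q$ of the latter.
\item Hence $q$ collects at most $f+b_q<2n/5<n/2$ votes.
\end{itemize}
So every honest member of $G$ sets $\ell_i=p^*$ and broadcasts $\langle\textsc{leader},p^*\rangle$. Since honest members form a weighted majority of $G$, every honest process sees $p^*$ backed by more than $|G|/2$ of $G$. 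No other candidate can reach that threshold, because only the Byzantine minority could support it. Thus all honest processes output the common honest leader $p^*$. It follows that if the election fails, some honest member of $G$ received at most $n/2$ votes containing $p^*$.

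\textbf{Step 2 (counting errors when $p^*$ gets few votes).} I will classify the honest processes whose $V_i$ omits $p^*$; there are at least $n-f-n/2$ of them. An honest $p_i$ omits $p^*$ for one of two reasons.
\begin{itemize}
\item Either $a_i[p^*]=0$. Let $h_1$ be the number of such processes; then $h_1=b_{p^*}<2n/5-f$.
\item Or $p_i$ predicts at least $10$ processes of $G$ with index smaller than $p^*$ as honest, so $V_i$ was truncated before reaching $p^*$. Let $h_2$ be the number of such processes.
\end{itemize}
Combining the two cases gives
\[
h_1+h_2\ge n/2-f \quad\Longrightarrow\quad h_2> n/2-f-(2n/5-f)=n/10 .
\]
Every process counted in $h_2$ predicts at least $10$ Byzantine members of $G$ as honest, and each of these is a prediction error charged to a process of $G$. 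The pairs $(i,j)$ are distinct, so $G$ has at least $10h_2>n$ prediction errors, which proves the lemma.

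The main obstacle is the bookkeeping in Step 2. One must confirm that the truncation to $10$ entries only ever displaces $p^*$ in favour of Byzantine (hence erroneous) entries; this holds precisely because $p^*$ is the minimum honest index. One must also confirm that the $n/10$ slack in the definition of weakly misclassified is exactly what forces $h_2>n/10$. A secondary point is handling multiplicities consistently in the leader-counting step.
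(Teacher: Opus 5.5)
Your proposal is correct and takes essentially the same route as the paper. You fix the minimum-index honest process, use $f+b_q<n/2$ to rule out smaller-index Byzantine candidates, and, when $p^*$ falls short at some honest member, find more than $n/10$ honest processes that predict $p^*$ honest yet were truncated by $10$ Byzantine entries each (the paper's set $C\subseteq A$ is exactly your $h_2$). Your use of ``at most $n/2$'' as the exact negation of the algorithm's ``more than $n/2$'' threshold is slightly more careful than the paper's phrasing.
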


\begin{proof}
    If $G$ is resilient, then it has a majority of honest processes. Let $q$ be the honest process in $G$ with the smallest index. We want to show that if not all honest processes consider $q$ to be elected, then $G$ has at least $n$ prediction errors.

    We remark that, because $G$ has an honest majority, if all honest processes in $G$ designate $q$ as the leader, it will be chosen as such by every honest process. Therefore, we assume that an honest process $p \in G$ did not choose $q$ as the leader, and we want to prove that $G$ has at least $n$ mispredictions.

    If $p$ does not choose $q$ as its leader, then either (1) it received at least $n/2$ votes for a party with a smaller index than $q$ or (2) it did not receive at least $n/2$ votes for $q$. 
    
    In the first case, because we chose $q$ as the honest process in $G$ with the smallest index, it means that a Byzantine process $u \in G$ received at least $n/2$ votes. Because $u$ is Byzantine, it can only receive votes from the $f$ Byzantine parties or $b_u$ honest parties which made a prediction error regarding $u$. However, $G$ is resilient, so $u$ is not weakly misclassified, implying that it is not misclassified, so $f + b_u < n/2$. Therefore, $u$ could not have received $n/2$ votes.

    In the second case, because $G$ is resilient, it implies that $q$ is not weakly misclassified. Let $A$ be the set of honest processes which predict $q$ as honest. Because $q$ is honest and not weakly misclassified, we have $|A| \geq n/2 + n/10$. Let $C \subseteq A$ be the processes in $A$ which did not send a vote for $q$. Because we assume $p$ did not receive $n/2$ votes for $q$, it implies that $|C| \geq |A| - n/2 \geq n/10$. We now consider a party $u \in C$. We know that $u$ is honest, predicted $q$ as honest, and yet did not vote for $q$. This means that it voted for $10$ other processes in $G$ with smaller indices that $u$ predicted as honest. Because $q$ is the honest process in $G$ with the smallest index, this implies that these $10$ votes were for Byzantine parties that $u$ predicted as honest; these are prediction errors. Therefore, the total number of prediction errors in $G$ is at least $10 \cdot |C| \geq 10 \cdot n/10 = n$.
\end{proof}

By taking the contrapositive of \Cref{lemma:unauth-misprediction}, if there are not too many prediction errors in the group, a common honest leader will be elected.

\begin{corollary}\label{corr:unauth-misprediction}
    Assume $G$ is resilient. If $G$ has less than $n$ prediction errors, then a common honest leader from $G$ is elected.
\end{corollary}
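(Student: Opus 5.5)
This is the contrapositive of \Cref{lemma:unauth-misprediction}, so the plan is a single logical step. I will not re-run the vote-counting argument.

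Assume $G$ is resilient and has fewer than $n$ prediction errors. Suppose, for contradiction, that \Cref{algorithm:group-leader-big} does not elect a common honest leader from $G$. Since the hypotheses of \Cref{lemma:unauth-misprediction} hold (namely, $G$ is resilient and a common honest leader is not elected), that lemma says $G$ has at least $n$ prediction errors. This contradicts the assumption, so a common honest leader is elected.

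The only point that needs care is that ``elected'' should mean the same thing in both statements. In the proof of \Cref{lemma:unauth-misprediction}, the leader in question is the honest process $q \in G$ with the smallest index. The lemma shows that if every honest member of $G$ picks $q$, then every honest process outputs $q$ at line~\ref{line:unauth-elect:decide}; this uses the honest majority of $G$, with leader messages weighted by multiplicity. So the conclusion here is the concrete one: every honest process outputs $q$, which is in $G$ and is honest.

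This gives both Agreement and Validity of group leader election for $G$. Hence, together with the complexity bounds read directly off the algorithm (each process sends at most $10$ indices of $\mathcal{O}(\log n)$ bits to each member of $G$, and each member of $G$ broadcasts once), this corollary yields \Cref{lemma:unauth:election}. No step is a real obstacle; the work is all in \Cref{lemma:unauth-misprediction}.
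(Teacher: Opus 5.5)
Your proposal is correct and matches the paper: the corollary is simply the contrapositive of \Cref{lemma:unauth-misprediction}, which your contradiction argument restates. Your added remark that the common leader is concretely the smallest-index honest process $q \in G$ is consistent with how that lemma's proof defines election.
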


Next, we prove the complexities of the algorithm.

\begin{lemma}
    \label{lemma:unauth-election-complexities}
    For any group $G$, $\mathsf{UnauthenticatedElection}$ has message complexity $\mathcal{O}(n|G|)$, bit complexity $\mathcal{O}(n |G| \log n)$ and terminates in $2$ rounds.
\end{lemma}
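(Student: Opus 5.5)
The plan is to read the complexities directly off \Cref{algorithm:group-leader-big}, counting only messages sent by honest processes. These bounds do not depend on the group being resilient or on the number of prediction errors.

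\textbf{Rounds.} The algorithm has exactly two communication steps. In the first round every process sends its $\langle \textsc{vote}, V_i\rangle$ message to the members of $G$. In the second round every member of $G$ broadcasts a $\langle \textsc{leader}, \cdot\rangle$ message. The remaining work (computing $\ell_i$, and the final choice at line~\ref{line:unauth-elect:decide}) is local, so the algorithm terminates after $2$ rounds.

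\textbf{Messages.} In the first round each of the $n$ processes sends one message to each member of $G$, which gives at most $n|G|$ messages. Here $|G|$ counts multiplicity; if a process appears several times in $G$, it only needs to receive one copy, so the count can only decrease. In the second round each process of $G$ (at most $|G|$ of them) sends one message to each of the $n$ processes, which gives at most $n|G|$ messages. The total is $\mathcal{O}(n|G|)$ messages.

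\textbf{Bits.} A vote message contains at most $10$ indices, each in $\{1,\dots,n\}$, so it takes $\mathcal{O}(10\log n)=\mathcal{O}(\log n)$ bits. A leader message names one process, so it takes $\mathcal{O}(\log n)$ bits. Multiplying by the message count gives $\mathcal{O}(n|G|\log n)$ bits.

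The only step that needs care is the bit count. Truncating $V_i$ to the $10$ smallest indices is exactly what replaces a $|G|$-bit vote vector with an $\mathcal{O}(\log n)$-bit message. Without this truncation the bound would be $\mathcal{O}(n|G|^2)$ bits.
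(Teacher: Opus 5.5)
Your proposal is correct and takes the same approach as the paper's proof. The paper also reads the two rounds directly off the algorithm: $n|G|$ vote messages of at most $10$ indices each, and $|G|\cdot n$ leader messages of one $\lceil \log n\rceil$-bit index each.
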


\begin{proof}
    This algorithm consists of two rounds. In the first round, all $n$ processes send a message to each process in $G$, and each message contains up to $10$ indicies, each taking $\lceil \log n \rceil$ bits. In the second round, every process in $G$ sends a leader index, taking $\lceil \log n \rceil$ bits to every process. The round, message, and bit complexities follow from this observation.
\end{proof}

Finally, \Cref{lemma:unauth:election} follows from \Cref{corr:unauth-misprediction} and \Cref{lemma:unauth-election-complexities}.

%% file: sections/authenticated.tex
\section{Authenticated Algorithm}\label{section:authenticated}

In this section, we describe our Byzantine agreement with classification predictions that uses cryptography. The algorithm can tolerate up to $t < (\frac{1}{2}-\epsilon)n$ Byzantine processes, terminates in $\mathcal{O}(\min\{B/n, f\})$ rounds, and exchanges $\mathcal{O}(n^2\kappa)$ bits. The algorithm is based on a well-known reduction from Byzantine agreement with strong unanimity to the one with \emph{external validity} (also known as \emph{validated Byzantine agreement}), where a decision $v$ must satisfy a pre-determined predicate $\extValid{\cdot}$ (that is, $\extValid{v} = \mathit{true}$). Importantly, the reduction takes constant rounds, and exchanges $\mathcal{O}(n^2\kappa)$ bits (see \Cref{section:sa_to_ev} for the details). Next, a key component in our result is efficient group leader election that, after a pre-processing step with constant rounds and $\mathcal{O}(n^2\kappa)$ bits, can do the election with zero communication. Then, by using a leader-based validated Byzantine agreement and alternating between round-robin leaders and leaders obtained from the predictions (via $m$-grouping and group leader election), we get an algorithm with the complexities as desired.

We first describe our validated Byzantine agreement in \Cref{subsection:authenticated:vba}. Next, we describe the group leader election we use in \Cref{subsection:authenticated:election}. Finally, we describe our full algorithm in \Cref{subsection:authenticated:ba_predictions}, where we use the predictions along with the validated Byzantine agreement to obtain the desired result.

\subsection{Validated Byzantine Agreement}\label{subsection:authenticated:vba}

In the validated Byzantine agreement problem, each process has to agree on a value $v$ that satisfies a pre-determined predicate $\extValid{\cdot}$. Here, each process proposes a value that is externally-valid, i.e., satisfies the predicate. For our use case, besides their proposal, each process $p_i$ also has a vector of processes $L_i$ as an input. Our algorithm will work in a leader-based manner, where $p_i$ will consider $L_i[j]$ as the leader for view (iteration) $j$. Note that we do not assume all processes input the same vector, and so, during a view, two honest processes may perceive a different process as their leader.

Our implementation (\Cref{algorithm:vba}) that tolerates up to $t < (\frac{1}{2}-\epsilon)n$ faults for some constant $0 < \epsilon < \frac{1}{2}$ is based on the agreement part of the Byzantine broadcast protocol from \cite{wan2023amortizedbb}. At a high level, they employ quorum-based agreement, where the resilience is strengthened by the use of \emph{expander graph}: a graph (that typically has a low degree) with a good expansion property. In their algorithm, they use the following $(n, 2\epsilon, 1-2\epsilon)$-expander graph from Momose and Ren \cite{momose2021optimal}.

\begin{lemma}[$(n, 2\epsilon, 1-2\epsilon)$-expander graph, restated from \cite{momose2021optimal}]
    \label{lemma:expander_graph}
    For any constant $0 < \epsilon < \frac{1}{2}$ and any positive integer $n$, there is a constant degree graph with $n$ vertices such that any set of $2\epsilon n$ vertices is connected to a set of at least $(1-2\epsilon)n$ vertices.
\end{lemma}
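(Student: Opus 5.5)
The plan is to turn the expansion requirement into a statement about pairs of sets with no edges between them. Then I would either invoke a spectral expander through the expander mixing lemma, or, more robustly for arbitrary $n$, use a probabilistic existence argument with a union bound.

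\textbf{Reformulation.} Let $S$ have size $\lceil 2\epsilon n\rceil$ and let $T = V \setminus N(S)$. Then no edge joins $S$ and $T$. So it suffices to build a constant-degree graph in which every pair of vertex sets $S,T$ with $|S| \ge 2\epsilon n$ and $|T| > 2\epsilon n$ has at least one edge between them. Then $|T| \le 2\epsilon n$, so $|N(S)| \ge (1-2\epsilon)n$. By monotonicity it is enough to check sets of sizes exactly $s=\lceil 2\epsilon n\rceil$ and $s' = \lfloor 2\epsilon n\rfloor + 1$. If $n$ is below a constant $n_0(\epsilon)$, the complete graph $K_n$ already has constant degree $n_0-1$ and works trivially. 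So I may assume $n$ is large.

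\textbf{Main step (probabilistic construction).} Let $G$ be the union of $d$ independent uniformly random perfect matchings, or of $d$ random permutations $\pi_1,\dots,\pi_d$ with edges $\{v,\pi_i(v)\}$. Its degree is at most $2d$. Fix disjoint $S,T$ of sizes $s,s'$; overlapping pairs are trivial, since a shared vertex lies in $N(S)$ whenever it has a neighbor in $S$, and one can restrict to $T\cap S=\emptyset$ after removing shared vertices. For a single random permutation, the probability that $\pi(S)\cap T=\emptyset$ is at most $(1 - s'/n)^{s} \le e^{-4\epsilon^2 n}$. Independence across the $d$ permutations gives $e^{-4\epsilon^2 d n}$. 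There are at most $\binom{n}{s}\binom{n}{s'} \le 2^{2H(2\epsilon) n + o(n)}$ pairs, where $H$ is binary entropy. Choosing the constant
\[
d > \frac{2H(2\epsilon)\ln 2 + 1}{4\epsilon^2}
\]
makes the union bound strictly less than $1$ for all large $n$. Hence such a graph exists for every $n \ge n_0$. For the paper's purposes existence suffices, since all processes can deterministically compute the same graph, for example by lexicographic search.

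\textbf{Alternative and main obstacle.} Alternatively, take a $d$-regular graph with second eigenvalue $\lambda$. The expander mixing lemma gives
\[
e(S,T) \ge \frac{d|S||T|}{n} - \lambda\sqrt{|S||T|} > 0
\]
once $\lambda < 2\epsilon d$. This holds for Ramanujan-type graphs with $\lambda \le 2\sqrt{d-1}$ and $d = \Theta(1/\epsilon^2)$. The obstacle on this route is that explicit constructions exist only for special values of $n$. Obtaining every $n$ needs a padding or contraction argument that keeps the degree constant and only slightly degrades the expansion, so the construction must be run with a slightly smaller $\epsilon'$. Because of this, I would present the probabilistic argument as the main proof. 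The one delicate point there is the rounding of $2\epsilon n$ and the small-$n$ base case, both handled above.
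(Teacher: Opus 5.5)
Your proof is correct. It necessarily takes a different route from the paper, which does not prove this lemma at all and simply imports it as a black box from Momose and Ren. You instead give a self-contained probabilistic argument, and each step checks out:
\begin{itemize}
\item The reduction to ``every pair of sets of sizes $s=\lceil 2\epsilon n\rceil$ and $s'=\lfloor 2\epsilon n\rfloor+1$ spans an edge'' is sound.
\item The per-permutation bound $\binom{n-s'}{s}/\binom{n}{s}\le (1-s'/n)^{s}\le e^{-4\epsilon^2 n}$ is right.
\item The union bound closes. Even the crude count of at most $4^n$ pairs suffices once $d>\ln 4/(4\epsilon^2)$.
\end{itemize}

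Compared with citing an explicit spectral expander, your route gains two things. First, it works uniformly for every $n$ with no padding. Second, if you replace your slack $+1$ by $\epsilon$, the same computation gives degree $O(\epsilon^{-1}\log(1/\epsilon))$. By contrast, the mixing-lemma condition $\lambda<2\epsilon d$ forces $d=\Omega(\epsilon^{-2})$.

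What you give up is efficient constructibility: your graph is found only by exhaustive search. That is acceptable here, because the lemma only asserts existence and the graph is a fixed public parameter. Note, though, that the paper does insist on polynomial-time deterministic constructions for its groupings, and the spectral route would also provide that.

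One sentence should be repaired. ``Removing shared vertices'' shrinks $T$ below $s'$, so it is not a valid way to dispose of overlapping pairs. The clean fix is to read ``connected to'' as the closed neighbourhood $S\cup N(S)$. That is all the paper needs, since the honest processes forwarding along $\mathcal{G}$ already hold the value themselves. With this reading, $T=V\setminus(S\cup N(S))$ is automatically disjoint from $S$, and your $K_n$ base case for small $n$ is immediate.
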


The graph is used to forward the messages from the leader, which, as we show later, is important in strengthening the resilience.

However, there is a subtle problem we need to deal with. As mentioned before, the expander graph is used to forward the messages from the leader. Previously, this was fine because in a view, each process had the same leader. However, as now processes may have different leaders, the proof would not work anymore. There is a quite simple fix. We define \emph{leader proof}, a cryptographic object that proves a certain process is considered a leader by sufficiently many honest processes. 

\begin{definition}[Leader Proof]
    For a process $p_\ell$ and a view $j$, a leader proof for $p_\ell$ during view $j$ is a threshold signature $\mu$ where $\tssVerify{n-t}{\mu, \langle \textsc{leader}, p_\ell, j\rangle} = \mathit{true}$. 
\end{definition}

Then, each process that has a leader proof is considered the leader(s) of the view, and has to attach the leader proof in their messages to prove that they are indeed qualified as the leader. This solves the subtle correctness problem we had, and also allows us to bound the communication complexity per view. The rest of the protocol is quite similar to the original version; we describe it again for completeness.

Before describing the protocol, let us describe the following two cryptographic objects that will be crucial for the correctness of the protocol.

\begin{definition}[Commit Certificate]
    \label{def:commit_cert}
    For a value $v$ and a view $j$, a commit certificate for $v$ during view $j$ is a threshold signature $\mu$ where $\tssVerify{n-t}{\mu, \langle \textsc{commit}, v, j\rangle} = \mathit{true}$. 
\end{definition}

\begin{definition}[Decision Proof]
    For a value $v$, a decision proof for $v$ is a threshold signature $\mu$ where $\tssVerify{n-t}{\mu, \langle \textsc{decide}, v\rangle} = \mathit{true}$. 
\end{definition}

Now, let us describe the protocol as presented in \Cref{algorithm:vba}. First, the protocol uses an expander graph $\mathcal{G}$ from \Cref{lemma:expander_graph}. Next, the protocol goes through $|L|$ views, where $|L|$ is the size of every $\mathit{L}_i$ input by an honest process $p_i$ (line~\ref{line:vba:views}). We now focus on how a view goes.

\begin{algorithm} [!ht]
\caption{Validated Byzantine Agreement: Pseudocode (for process $p_i$)}
\label{algorithm:vba}
\begin{algorithmic} [1]
\footnotesize
\State \textbf{Uses:}
    \State \hskip2em $\mathsf{ExpanderGraph}$, instance $\mathcal{G}$ \BlueComment{see \Cref{lemma:expander_graph}}

\smallskip
\State \textbf{Input parameters:}
    \State \hskip2em $\mathsf{Value}$ $\mathit{v}_i$ \BlueComment{externally-valid value}
    \State \hskip2em $\mathsf{Vector}(\mathsf{Process})$ $\mathit{L}_i$
    
\smallskip
\State \textbf{Local variables:}
    \State \hskip2em $\mathsf{Value}$ $\mathit{commit\_value}_i \gets \bot$
    \State \hskip2em $\mathsf{CommitCertificate}$ $\mathit{commit\_cert}_i \gets \bot$
    \State \hskip2em $\mathsf{Value}$ $\mathit{decision\_value}_i \gets \bot$
    \State \hskip2em $\mathsf{DecisionProof}$ $\mathit{decision\_proof}_i \gets \bot$
    \State \hskip2em $\mathsf{LeaderProof}$ $\mathit{leader\_proof}_i \gets \bot$ 

\medskip
\State $\mathsf{ValidatedAgreement}(\mathit{v}_i, \mathit{L}_i)$:
    \State \hskip2em \textbf{for} $\mathsf{Integer}$ $j \gets 1$ to $|\mathit{L}_i|$:\label{line:vba:views}
        \State \hskip4em \textbf{let} $\ell_i \gets \mathit{L}_i[j]$
        \State \hskip4em \textbf{if} $\mathit{commit\_cert}_i \ne \bot$:
            \State \hskip6em \textbf{send} $\langle \textsc{val}, \mathit{commit\_value}_i, \mathit{commit\_cert}_i, \tssShareSign{i}{n-t}{\langle \textsc{leader}, \ell_i, j \rangle } \rangle$ to $\ell_i$ \label{line:vba:send_committed}
        \State \hskip4em \textbf{else}:
            \State \hskip6em \textbf{send} $\langle \textsc{val}, \mathit{v}_i, \tssShareSign{i}{n-t}{\langle \textsc{leader}, \ell_i, j \rangle } \rangle$ to $\ell_i$\label{line:vba:send_value}

        \smallskip
        \State \hskip4em \textbf{if} $p_i = \ell_i$ and received $n-t$ $\textsc{val}$ messages:
            \State \hskip6em $\mathit{leader\_proof}_i \gets \tssCombine{n-t}{\{psig\ |\ psig \text{ is received from } n-t \text{ processes}\}}$
            \State \hskip6em \textbf{let} $\mathit{v}, \mathit{cert}$ be the value with the highest commit view and its commit certificate from
            \State \hskip7em the received messages (if there is none, then $\mathit{v}$ is any valid value and $\mathit{cert} = \bot$)
            \State \hskip6em \textbf{broadcast} $\langle \textsc{propose}, \mathit{v}, \mathit{cert}, \mathit{leader\_proof}_i \rangle$\label{line:vba:propose}

        \smallskip
        \State \hskip4em \textbf{if} received $\langle \textsc{propose}, \mathit{v}, \mathit{cert}, \mathit{lp} \rangle$ from $\ell_i$ and it is from a view at least as high as $p_i$'s commit: \label{line:vba:check_view}
            \State \hskip6em \textbf{forward} to $p_i$'s neighbor in $\mathcal{G}$\label{line:vba:forward_value}
        \State \hskip4em \textbf{if} no conflicting value with a leader proof is received through $\mathcal{G}$: \label{line:vba:check_no_conflict}
            \State \hskip6em \textbf{send} $\langle \textsc{ack}, \mathit{v}, \tssShareSign{i}{n-t}{\langle \textsc{commit}, \mathit{v}, \mathit{j} \rangle} \rangle$ to $\ell_i$ \label{line:vba:send_partial_commit}
            \BlueComment{partially sign the value from leader}

        \smallskip
        \State \hskip4em \textbf{if} $p_i$ has a leader proof for view $j$ and $p_i$ received $n-t$ $\textsc{ack}$ messages for the same $\mathit{v}$:
            \State \hskip6em \textbf{let} $\mathit{cc} \gets \tssCombine{n-t}{\{psig\ |\ psig \text{ is received from } n-t \text{ processes}\}}$
            \State \hskip6em \textbf{broadcast} $\langle \textsc{commit}, \mathit{v}, \mathit{cc}, \mathit{leader\_proof}_i \rangle$\label{line:vba:broadcast_commit_cert}

        \smallskip
        \State \hskip4em \textbf{if} received $\langle \textsc{commit}, \mathit{v}, \mathit{cc}, \mathit{lp} \rangle$ from $\ell_i$:
            \State \hskip6em $\mathit{commit\_value}_i \gets \mathit{v}$; $\mathit{commit\_cert}_i \gets \mathit{cc}$
            \State \hskip6em \textbf{forward} to $p_i$'s neighbor in $\mathcal{G}$\label{line:vba:forward_commit}
            \State \hskip6em \textbf{send} $\langle \textsc{commit}, \mathit{commit\_value}_i, \tssShareSign{i}{n-t}{\langle \textsc{decide}, \mathit{commit\_value}_i \rangle} \rangle$ to $\ell_i$\label{line:vba:send_partial_decision}

        \smallskip
        \State \hskip4em \textbf{if} received valid $\langle \textsc{commit}, \mathit{v}, \mathit{cc}, \mathit{lp} \rangle$ through $\mathcal{G}$:
            \State \hskip6em $\mathit{commit\_value}_i \gets \mathit{v}$; $\mathit{commit\_cert}_i \gets \mathit{cc}$\label{line:vba:store_commit}
            
        \smallskip
        \State \hskip4em \textbf{if} $p_i$ has a leader proof for view $j$ and $p_i$ received $n-t$ $\textsc{commit}$ messages for the same $\mathit{v}$:
            \State \hskip6em \textbf{let} $\mathit{dp} \gets \tssCombine{n-t}{\{psig\ |\ psig \text{ is received from } n-t \text{ processes}\}}$
            \State \hskip6em \textbf{broadcast} $\langle \textsc{decide}, \mathit{v}, \mathit{dp}, \mathit{leader\_proof}_i \rangle$\label{line:vba:broadcast_decision}

        \smallskip
        \State \hskip4em \textbf{if} received $\langle \textsc{decide}, \mathit{v}, \mathit{dp}, \mathit{lp} \rangle$ from $\ell_i$:
            \State \hskip6em $\mathit{decision\_value}_i \gets \mathit{v}$; $\mathit{decision\_proof}_i \gets \mathit{dp}$\label{line:vba:store_decision}
    
    \State \hskip2em \textbf{return} $(\mathit{decision\_value}_i, \mathit{decision\_proof}_i)$\label{line:vba:decide}
\end{algorithmic}
\end{algorithm}

First, each process sends to its leader a partial signature for its leader proof, along with a value. This value is either a value with the highest commit certificate the process has (line~\ref{line:vba:send_committed}), or its input value if it does not have any commit certificate (line~\ref{line:vba:send_value}). In the former case, the process also sends the commit certificate. Next, if a process receives $n-t$ messages, it can produce a leader proof for itself. It does so, and then it checks whether it receives any value with a commit certificate. If it does, it proposes the value which commit certificate's view is the highest, and otherwise, it broadcasts any externally-valid value $v$ (line~\ref{line:vba:propose}); the process also attaches its leader proof in the message. Then, upon receiving a proposal for the value $v$ from its leader, a process forwards the message to its neighbors in $\mathcal{G}$ (line~\ref{line:vba:forward_value}). If no message with a different value is received through $\mathcal{G}$, then the process sends a partial signature for a commit certificate for the value $v$ proposed by its leader (line~\ref{line:vba:send_partial_commit}).

Next, if a process previously managed to produce its leader proof and it receives $n-t$ partial signatures for a commit certificate for a value $v$, it forms a commit certificate for $v$ and broadcasts it (line~\ref{line:vba:broadcast_commit_cert}). Then, upon receiving a value $v$ with its commit certificate from its leader, a process stores them, forwards the received message (that also contains a leader proof) to its neighbors in $\mathcal{G}$ (line~\ref{line:vba:forward_commit}), and then sends a partial signature for a decision proof for $v$ to its leader (line~\ref{line:vba:send_partial_decision}). Note that here, processes only disseminate the committed value and its certificate through $\mathcal{G}$, and always partially sign a decision proof (see line~\ref{line:vba:send_partial_decision} and line~\ref{line:vba:store_commit}). Then, similar to before, if a process has a leader proof and receives $n-t$ partial signatures for a decision for a value $v$, it forms a decision proof for $v$ and broadcasts it (line~\ref{line:vba:broadcast_decision}). Lastly, if a process receives a value accompanied by its decision proof from its leader, it stores them (line~\ref{line:vba:store_decision}) to be used as the final decision after all views have concluded (line~\ref{line:vba:decide}).

We now prove the correctness of the algorithm.

\begin{lemma}
    \label{lemma:vba:commit_cert_unique}
    If in a view, a commit certificate for the value $v$ and $v'$ exists, then $v = v'$.
\end{lemma}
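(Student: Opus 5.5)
The plan is the standard quorum-intersection argument, made robust to the fact that different honest processes may follow different leaders in the same view. Fix a view $j$ and suppose commit certificates $\mu$ for $\langle \textsc{commit}, v, j\rangle$ and $\mu'$ for $\langle \textsc{commit}, v', j\rangle$ both exist. By unforgeability of the $(n-t,n)$-threshold scheme, each certificate needs partial signatures from $n-t$ distinct processes, so at least $n-2t > 2\epsilon n$ honest processes produced a partial commit signature for $v$ at line~\ref{line:vba:send_partial_commit}. Call this honest set $S$, and define $S'$ analogously for $v'$. An honest process partially signs at most one value per view, so if $v \neq v'$ then $S$ and $S'$ are disjoint.

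The main step is to exploit the expander $\mathcal{G}$ of \Cref{lemma:expander_graph}. Each process in $S$ received a \textsc{propose} message for $v$ that carried a valid leader proof for view $j$, and forwarded it to its neighbors at line~\ref{line:vba:forward_value}. Since $|S| \ge 2\epsilon n$, the neighborhood $N(S)$ has size at least $(1-2\epsilon)n$. Likewise, $|N(S')| \ge (1-2\epsilon)n$. I then need an honest process that lies in one neighborhood and also signed for the other value; such a process saw a conflicting value with a leader proof through $\mathcal{G}$, so by the check at line~\ref{line:vba:check_no_conflict} it would have refused to sign. This gives the contradiction.

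The obstacle is the counting, so I would set it up carefully. Since $|S'| > 2\epsilon n$ and $|N(S)| \ge (1-2\epsilon)n$, we get $|S' \cap N(S)| \ge |S'| - (n - |N(S)|) > 2\epsilon n - 2\epsilon n = 0$. This requires that the forwarded messages arrive before the signing decision, which is what the round structure of a view provides. It also requires that Byzantine processes cannot suppress forwarding; honest forwarders in $S$ do forward, and $\mathcal{G}$ edges are direct channels, so they cannot. If the strict inequality turns out to be tight, I would fall back on the expansion applied to a subset of exactly $2\epsilon n$ vertices of $S$, together with the fact that $|S'| \ge n-2t > 2\epsilon n$ strictly. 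Alternatively, I would replace $S$ by any $\lceil 2\epsilon n\rceil$-subset. Either way this yields a common element in $S'\cap N(S)$.

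Finally, I must confirm that the conflicting message counts as ``with a leader proof''. The \textsc{propose} for $v$ includes the proposer's leader proof $\mathit{lp}$ for view $j$, which is an $(n-t)$-threshold signature. It is therefore valid to every process even though the receiver's own leader $\ell_i$ may differ. This is precisely why leader proofs were introduced, so the no-conflict check applies regardless of which leader each honest process follows.
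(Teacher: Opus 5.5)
Your proposal is correct and matches the paper's proof: at least $n-2t > 2\epsilon n$ honest signers for $v$ forward the proposal, which carries a leader proof, through $\mathcal{G}$. It therefore reaches at least $(1-2\epsilon)n > 2t$ processes, and too few honest processes remain that could sign $v' \neq v$. The only difference is bookkeeping: you intersect the honest signer set $S'$ with $N(S)$, while the paper counts that at most $n-2t-1 < n-t-f$ honest processes lie outside the reached set, and both rest on the same inequality $n-2t > 2\epsilon n$.
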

\begin{proof}
    A commit certificate must be partially signed by at least $n-t-f \ge n-2t > 2\epsilon n$ honest processes. Honest processes that partially sign a commit certificate for $v$ must have forwarded the value $v$ through $\mathcal{G}$ at line~\ref{line:vba:forward_value}, which will be accepted by at least $(1-2\epsilon)n > 2t$ processes (note that the forwarded message must contain a valid leader proof). Hence, at least $2t+1-f$ honest processes would have received the forwarded value at line~\ref{line:vba:check_no_conflict}, in which they would not partially sign a commit certificate for a value $v' \ne v$. As at most $n-f-(2t+1-f) = n-(2t+1) < n-t-f$ honest processes could partially sign a commit certificate for a value $v' \ne v$, there would not be enough signatures to produce a commit certificate for $v' \ne v$.
\end{proof}

\begin{lemma}[Agreement]
    \label{lemma:vba:agreement}
    If a decision proof for the value $v$ and $v'$ exists, then $v = v'$.
\end{lemma}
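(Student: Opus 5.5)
The plan is to reduce agreement of decision proofs to the uniqueness of committed values, the standard "lock" argument of PBFT-style protocols. A decision proof for $v$ is an $(n-t)$ threshold signature on $\langle \textsc{decide}, v\rangle$. Honest processes only partially sign $\langle \textsc{decide}, \cdot\rangle$ at line~\ref{line:vba:send_partial_decision}, right after receiving a commit certificate from their leader in some view. Note that the decide message carries no view number, so I will argue over the earliest view in which any decision proof can be formed.

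Let $j^*$ be the smallest view in which an honest process partially signs $\langle \textsc{decide}, v\rangle$ for some $v$. First I would show that a set $S$ of at least $n-t-f > 2\epsilon n$ honest processes hold a commit certificate for $v$ at view $j^*$. This follows if we consider the first decision proof formed: its signers include at least $n-t-f$ honest processes that received $\langle \textsc{commit}, v, cc, lp\rangle$ from their leader. Each of them forwards it through $\mathcal{G}$ (line~\ref{line:vba:forward_commit}), and the message carries a valid leader proof. By \Cref{lemma:expander_graph}, the neighbourhood of $S$ has size at least $(1-2\epsilon)n > 2t$, so at least $2t+1-f \ge t+1$ honest processes store a commit certificate for $v$ of view $j^*$ (line~\ref{line:vba:store_commit}).

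The main step is an induction on views $j \ge j^*$. The claim is that every commit certificate formed in view $j$ is for $v$, and that at least $t+1$ honest processes hold a commit whose highest view is at least $j^*$ and whose value is $v$. By \Cref{lemma:vba:commit_cert_unique}, view $j^*$ itself has no certificate for a value other than $v$. For a later view $j$, consider any leader (any process with a leader proof, i.e.\ $n-t$ $\textsc{val}$ partial signatures). Its $n-t$ \textsc{val} messages intersect the $t+1$ locked honest processes, because $(n-t)+(t+1) > n$. Hence it sees a certificate for $v$ of view at least $j^*$, and by induction every certificate of view at least $j^*$ is for $v$. So an honest leader proposes $v$.

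A Byzantine leader, however, could propose $v' \neq v$ with an old certificate. To rule out a commit certificate for $v'$ in this case, I would use the view check at line~\ref{line:vba:check_view}: the locked honest processes refuse proposals whose certificate view is lower than their own commit. A certificate for $v'$ then needs $n-t-f > 2\epsilon n$ honest acknowledgers, none of them locked. But the $t+1 \ge$ locked honest set combined with this set exceeds the honest population when $f<(\frac12-\epsilon)n$. More precisely, $(t+1)+(n-t-f) > n-f$ when counted properly; if this counting is too tight, I will fall back on the expander, since conflicting forwarded proposals would suppress acks as in \Cref{lemma:vba:commit_cert_unique}. I expect this accounting of locks against a Byzantine leader with stale certificates to be the delicate step.

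Finally, any decision proof for $v'$ requires honest partial signatures on $\langle \textsc{decide}, v'\rangle$, which are given only upon receiving a commit certificate for $v'$ in some view. Such a view is at least $j^*$ by minimality, and by the invariant the certificate is for $v$. Therefore $v'=v$.
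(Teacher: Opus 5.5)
Your route is essentially the paper's. From the first decision proof you take its at least $n-t-f>2\epsilon n$ honest signers, spread their commit certificate through $\mathcal{G}$ to lock many honest processes on $v$, and use the check at line~\ref{line:vba:check_view} to starve any later certificate for $v'$. The step you flag as delicate is fine for every leader, honest or not: $(t+1)+(n-t-f)=n-f+1>n-f$, so no case split or expander fallback is needed. The genuine gap is the base of your induction, namely that at least $n-t-f$ honest processes hold a commit certificate for $v$ \emph{at view $j^*$}. With $j^*$ defined as the first view containing an honest $\langle\textsc{decide},\cdot\rangle$ share, a Byzantine leader may hand its certificate to a single honest process. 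If you instead start from the first decision proof, its honest signers may have signed in many different views, precisely because $\langle\textsc{decide},v\rangle$ carries no view. Their forwarded certificates then belong to different views, and certificates for another value can be formed in between and overwrite the locks. So no single view yields $t+1$ honest processes locked on the highest certificate.

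This cannot be repaired inside the argument, because the statement fails for the algorithm as written. Take $f=t=\Theta(n)$, let all honest $L_i$ start with the same Byzantine leaders, and let $d$ be the degree of $\mathcal{G}$. In view $j$ the leader proposes $v$ ($j$ odd) or $v'$ ($j$ even), attaching the certificate formed in view $j-2$ (none for $j\le 2$). It sends the resulting $\textsc{commit}$ message only to a set $A_j$ of $\lfloor t/(d+1)\rfloor$ honest processes, disjoint from the earlier sets of the same parity. Only $A_j$ and its $\mathcal{G}$-neighbors, at most $t$ processes, hold a view-$j$ certificate. Hence in view $j+1$ at least $n-2t$ honest processes pass the view check and acknowledge, and a certificate forms again. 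After $O(1)$ views the odd sets and the even sets each supply $n-2t$ honest $\textsc{decide}$ shares. Together with the $t$ Byzantine shares these give decision proofs for both $v$ and $v'$. The paper's proof (``the first view where a decision proof exists \dots\ the commit certificate for the highest view is for $v$'') relies on the same implicit single-view assumption. The remedy is to have honest processes sign $\langle\textsc{decide},v,j\rangle$, and only for a commit certificate of the current view $j$. Then all honest signers of the first decision proof forwarded a view-$j$ certificate, at least $2t+1-f\ge t+1$ honest processes are locked on $v$ at view $j$, and your induction goes through.
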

\begin{proof}
    Let us focus on the first view where a decision proof exists, and suppose it is for the value $v$. At least $n-t-f> 2\epsilon n$ honest processes partially signed for a decision proof for the value $v$. These processes forwarded the commit certificate for $v$ through $\mathcal{G}$, which will be received by at least $(1-2\epsilon)n > 2t$ processes, out of which at least $2t+1-f$ are honest. By \Cref{lemma:vba:commit_cert_unique}, at this point, only a decision proof for $v$ may exist, and the commit certificate for the highest view is for $v$. Moreover, in the future views, at least $2t+1-f$ honest processes will only sign the commit certificate (and thus, decision proof) for the value $v$ (due to the check at line~\ref{line:vba:check_view}). Finally, no commit certificate nor decision proof for the value $v' \ne v$ can be produced (as there are not enough honest processes that would support it).
\end{proof}

\begin{lemma}[External Validity]
    \label{lemma:vba:external_validity}
    If a decision proof for the value $v$ exists, then $v$ satisfies external validity.
\end{lemma}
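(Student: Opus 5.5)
We argue by tracing any decision proof back to the value proposed by some leader, and showing that this proposed value is externally valid. The invariant we would maintain is the following: every value for which a commit certificate exists is externally valid. Once this holds, the lemma follows immediately. A decision proof for $v$ requires $n-t$ partial signatures on $\langle \textsc{decide}, v\rangle$, so at least $n-t-f \ge 1$ of them come from honest processes. An honest process only produces such a partial signature at line~\ref{line:vba:send_partial_decision}, and only for its $\mathit{commit\_value}_i$. That value was stored upon receiving a valid $\langle \textsc{commit}, v, \mathit{cc}, \mathit{lp}\rangle$, which carries a commit certificate for $v$. Hence $v$ has a commit certificate and, by the invariant, is externally valid.

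To establish the invariant, we would use induction on the view $j$ in which the commit certificate is formed. A commit certificate for $v$ in view $j$ needs $n-t$ partial signatures on $\langle \textsc{commit}, v, j\rangle$, so some honest process signed it at line~\ref{line:vba:send_partial_commit}. That process does so only for a value received in a $\textsc{propose}$ message from its leader that carries a valid leader proof and is of a view at least as high as its own commit (line~\ref{line:vba:check_view}). The proposed value has one of two forms.
\begin{itemize}
\item \emph{It carries a commit certificate from an earlier view $j' < j$.} Then it is externally valid by the induction hypothesis.
\item \emph{It carries no certificate.} Then the proposing leader must have chosen ``any valid value''.
\end{itemize}

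The main obstacle is the second case when the leader is Byzantine: such a leader could propose an invalid value with no certificate. We handle this by making explicit the honest receiver's check. Honest processes treat a proposal as acceptable only if either a valid commit certificate is attached or $\extValid{v} = \mathit{true}$. This is implicit in the phrase ``any valid value'' and in the paper's convention of omitting verification steps; we would state it as part of what ``received $\langle\textsc{propose},\ldots\rangle$'' means in the pseudocode. With this check in place, an honest process never partially signs a commit for an invalid value without an accompanying certificate, and the induction closes. The base case is the first view in which a certificate is formed: no earlier certificates exist, so every honestly signed value passed the $\extValid{\cdot}$ check directly.
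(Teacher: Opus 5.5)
Your proof is correct and follows the same route as the paper's. A decision proof for $v$ implies, via an honest \textsc{decide}-signer, that a commit certificate for $v$ exists; by induction on views, every commit certificate is either for a value with $\extValid{v} = \mathit{true}$ or is backed by a certificate for $v$ from a lower view, and your explicit requirement that honest processes acknowledge an uncertified proposal only if it is externally valid is exactly the check the paper's one-line induction uses implicitly.
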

\begin{proof}
    If a decision proof for the value $v$ exists, then a commit certificate for the value $v$ exists. If a commit certificate for the value $v$ exists, either $v$ satisfies external validity ($\extValid{v} = \mathit{true}$) or a commit certificate for $v$ from a lower view exists. From induction, in the latter case, $v$ also satisfies external validity.
\end{proof}

Hence, safety is guaranteed. We then argue that when all honest processes have the same honest leader, then they all obtain a decision and its proof.

\begin{lemma}[Conditional Termination]
    \label{lemma:vba:conditional_termination}
    In a view $j$ where for each honest process $p_i$, $\mathit{L}_i[j] = p_\ell$ for some honest process $p_\ell$, each honest process obtains a decision and its decision proof.
\end{lemma}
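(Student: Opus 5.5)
We trace view $j$ of \Cref{algorithm:vba} step by step, using that all honest processes send to and listen to the same honest leader $p_\ell$.

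\emph{Leader proof and proposal.} Every honest process $p_i$ sends $p_\ell$ a $\textsc{val}$ message with a share of $\langle \textsc{leader}, p_\ell, j\rangle$ (line~\ref{line:vba:send_committed} or line~\ref{line:vba:send_value}). Since there are at least $n-f \ge n-t$ honest processes, $p_\ell$ receives $n-t$ valid shares and combines them into a leader proof. It then broadcasts a proposal $(v, \mathit{cert})$ at line~\ref{line:vba:propose}, where $v$ carries the highest commit certificate among the received messages, or is an externally-valid value if there is none.

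\emph{Honest processes acknowledge.} The main obstacle is that every honest process must send an $\textsc{ack}$. This requires two checks to pass. The view check at line~\ref{line:vba:check_view} needs that $p_\ell$'s certificate is at least as high as each honest process's own commit view. The conflict check at line~\ref{line:vba:check_no_conflict} needs that no conflicting value with a leader proof arrives through $\mathcal{G}$.

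For the view check, let $p_i$ be any honest process holding a commit certificate from view $j'<j$. Its commit was stored after some leader broadcast it together with a certificate (line~\ref{line:vba:forward_commit}/line~\ref{line:vba:store_commit}). I would argue, as in the proof of \Cref{lemma:vba:agreement}, that the certificate for view $j'$ was formed from at least $n-t-f>2\epsilon n$ honest $\textsc{commit}$ shares. By \Cref{lemma:expander_graph}, the forwarding then spreads the commit to more than $2t$ processes. Since any two $(n-t)$-quorums intersect in an honest process (as $n-t > n/2 + t$ roughly, using $t<(\frac12-\epsilon)n$), the $n-t$ $\textsc{val}$ messages collected by $p_\ell$ should include a certificate of view at least $j'$. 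I expect this quorum-intersection bookkeeping to be the delicate step. If the direct counting does not close, I would instead weaken the requirement and invoke \Cref{lemma:vba:commit_cert_unique}: any certificate of view at least $j'$ is for the same value, so the honest process's check on line~\ref{line:vba:check_view} is satisfied.

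For the conflict check, any conflicting proposal in view $j$ would have to carry a leader proof for a different process $p_{\ell'}$. Such a proof needs $n-t$ shares, and hence at least $n-2t>0$ honest shares. But every honest process signed $\langle \textsc{leader}, p_\ell, j\rangle$, so no leader proof for any $p_{\ell'} \ne p_\ell$ exists. Also, $p_\ell$ is honest and proposes a single value. Hence no conflict can appear.

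\emph{Commit and decide.} Given the above, $p_\ell$ receives at least $n-t$ $\textsc{ack}$ shares for the same $v$. It forms $\mathit{cc}$ and broadcasts it at line~\ref{line:vba:broadcast_commit_cert}. Every honest process then stores $\mathit{cc}$ and returns a decide share at line~\ref{line:vba:send_partial_decision}. So $p_\ell$ collects $n-t$ of these shares, forms $\mathit{dp}$ and broadcasts it at line~\ref{line:vba:broadcast_decision}. Every honest process records the decision and its proof at line~\ref{line:vba:store_decision}. These values persist to line~\ref{line:vba:decide}, and by \Cref{lemma:vba:agreement} later views cannot produce a different decision proof.
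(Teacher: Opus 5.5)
Your treatment of the leader proof, the conflict check at line~\ref{line:vba:check_no_conflict} (no other leader proof can exist in view $j$) and the commit/decide steps matches the paper's proof. The view check at line~\ref{line:vba:check_view}, which you correctly flag as the crux, is not established, however.

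\textbf{The primary argument rests on two false claims.}
\begin{itemize}
\item Two $(n-t)$-sets intersect in at least $n-2t$ processes. When $t \ge n/3$, which $t<(\frac12-\epsilon)n$ allows, this intersection can consist entirely of Byzantine processes. The inequality $n-t > n/2+t$ would need $t<n/4$.
\item A commit certificate for view $j'$ does not imply that more than $2t$ processes hold it. The honest $\textsc{ack}$ signers forward the \emph{proposal} through $\mathcal{G}$ (line~\ref{line:vba:forward_value}), not the certificate. A Byzantine process with a leader proof in view $j'$ can hand its certificate to a single honest process. The spreading argument in \Cref{lemma:vba:agreement} starts from a \emph{decision proof}, which is what forces more than $2\epsilon n$ honest processes to have forwarded the certificate.
\end{itemize}

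\textbf{The fallback does not help either.} \Cref{lemma:vba:commit_cert_unique} concerns certificates within a single view. It says nothing about whether $p_\ell$'s proposal carries a certificate of view at least $j'$, which is what the check requires. ``Weakening the requirement'' changes the algorithm rather than proving the lemma.

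\textbf{The missing observation needs no quorum intersection.} Every honest $p_i$ has $L_i[j]=p_\ell$. So every honest process sends its $\textsc{val}$ message, including its current commit certificate, to $p_\ell$ in the same synchronous round. The honest $p_\ell$ selects the highest-view certificate among \emph{all} received messages, not among some $n-t$ of them.

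Hence the proposed certificate's view is at least every honest process's commit view. That commit view does not change between sending $\textsc{val}$ and performing the check, since commits are updated only later in the view. So every honest process passes line~\ref{line:vba:check_view}. This is what the paper means by $p_\ell$ obtaining ``a value with a commit certificate that all honest processes will accept.'' With this observation in place of your intersection argument, the rest of your trace goes through.
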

\begin{proof}
    In such a view, $p_\ell$ will receive $\langle \textsc{val} \rangle$ messages from all honest processes. Thus, $p_\ell$ can build a leader proof for itself, and obtain a value with a commit certificate that all honest processes will accept. $p_\ell$ then broadcast them via $\langle \textsc{propose} \rangle$ message. Moreover, as there can be no process with a leader proof in that view besides $p_\ell$, all honest processes will send $\langle \textsc{ack} \rangle$ for the received value. Thus, $p_\ell$ can build a commit certificate for that value, and broadcast that value along with the commit certificate. Similarly, each honest process will accept this value and send a partial signature for its decision proof. Finally, $p_\ell$ builds a decision proof for that value and broadcast them to all honest processes.
\end{proof}

Therefore, as long as at least one of the views has the same honest leader, the algorithm correctly solves Byzantine agreement with external validity, given there are at most $t < (\frac{1}{2}-\epsilon)n$ faulty processes. Lastly, we prove the complexities of the algorithm. Specifically, we will focus on the complexities of each view.

\begin{lemma}
    \label{lemma:vba:leader_count}
    In each view, at most $\mathcal{O}(1)$ processes have a leader proof.
\end{lemma}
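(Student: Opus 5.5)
A leader proof for $p_\ell$ in view $j$ is an $(n-t)$-threshold signature on $\langle \textsc{leader}, p_\ell, j\rangle$. My plan is a counting argument on the partial signatures of honest processes. I will use the unforgeability of the threshold signature scheme: a valid combined signature on a message implies that at least $n-t$ distinct processes produced partial signatures on that exact message. Since at most $f \le t$ of them are Byzantine, at least $n-t-f \ge n-2t$ honest processes must have invoked $\tssShareSign{i}{n-t}{\langle \textsc{leader}, p_\ell, j \rangle}$.

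The first step is to bound how many distinct leaders one honest process can support in view $j$. By inspecting \Cref{algorithm:vba}, an honest $p_i$ produces a leader partial signature for view $j$ only at line~\ref{line:vba:send_committed} or line~\ref{line:vba:send_value}. It does so exactly once, for the single process $\ell_i = \mathit{L}_i[j]$. So in view $j$, each honest process contributes a leader partial signature to at most one candidate.

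The second step is double counting. Suppose $r$ distinct processes have leader proofs for view $j$. Each needs at least $n-2t$ distinct honest signers, and the sets of honest signers for different candidates are disjoint. Their union has size at most $n-f \le n$. Hence $r(n-2t) \le n$. With $t < (\frac12-\epsilon)n$ we get $n-2t > 2\epsilon n$, so $r < \frac{1}{2\epsilon} = \mathcal{O}(1)$ because $\epsilon$ is a constant.

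The main obstacle is justifying the first step. The argument needs leader partial signatures to be bound to the view index $j$, which they are because $j$ appears inside the signed message. It also needs honest processes never to sign $\langle \textsc{leader}, \cdot, j\rangle$ elsewhere, for example when forwarding. Beyond that, the argument is routine and rests only on the threshold-signature security assumption.
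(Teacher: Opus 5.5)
Your proof is correct and uses the same counting argument as the paper: each leader proof requires at least $n-t-f \ge n-2t > 2\epsilon n$ honest partial signatures, so fewer than $\frac{1}{2\epsilon}$ processes can hold one. The only difference is that you state and justify the disjointness of the honest signer sets (each honest process partially signs $\langle \textsc{leader}, \ell_i, j\rangle$ for a single $\ell_i$ per view), which the paper's one-line bound $\frac{n}{n-2t}$ uses implicitly.
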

\begin{proof}
    Each leader proof needs $n-t$ partial signatures, so at least $n-t-f \ge n-2t$ came from honest processes. Thus, there are at most $\frac{n}{n-2t} < \frac{n}{2\epsilon n} = \frac{1}{2\epsilon} \in \mathcal{O}(1)$ such processes.
\end{proof}

\begin{lemma}
    \label{lemma:vba:complexities}
    Each view lasts for $\mathcal{O}(1)$ rounds. In each view, honest processes exchange $\mathcal{O}(n)$ messages and $\mathcal{O}(n\kappa)$ bits.
\end{lemma}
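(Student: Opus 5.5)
The plan is to walk through one view of \Cref{algorithm:vba} step by step, bounding the rounds, messages and bits of each step. The key tool is \Cref{lemma:vba:leader_count}: only $\mathcal{O}(1)$ processes hold a leader proof in a view, so only a constant number of processes can trigger any of the $n$-message broadcasts. For the rounds, a view is a fixed sequence of communication steps:
\begin{itemize}
\item sending $\textsc{val}$ to the leader;
\item broadcasting $\textsc{propose}$;
\item forwarding through $\mathcal{G}$;
\item sending $\textsc{ack}$;
\item broadcasting $\textsc{commit}$;
\item forwarding the commit and sending partial decision signatures;
\item broadcasting $\textsc{decide}$.
\end{itemize}
That is a constant number of synchronous rounds.

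For messages, I would bound each step separately.
\begin{itemize}
\item \textbf{Point-to-point steps} ($\textsc{val}$, $\textsc{ack}$, and the partial-$\textsc{decide}$ $\textsc{commit}$ messages): each honest process sends exactly one message, to its own leader $\ell_i$. This gives $\mathcal{O}(n)$ in total.
\item \textbf{Broadcast steps} ($\textsc{propose}$, $\textsc{commit}$ with $\mathit{cc}$, $\textsc{decide}$): an honest process broadcasts only if it holds a leader proof for view $j$. By \Cref{lemma:vba:leader_count} at most $\mathcal{O}(1)$ processes qualify, so these steps cost $\mathcal{O}(1)\cdot n = \mathcal{O}(n)$ messages.
\item \textbf{Forwarding through $\mathcal{G}$}: each honest process forwards messages to its $\mathcal{O}(1)$ neighbours (by \Cref{lemma:expander_graph}), for $\mathcal{O}(n)$ messages, provided each process forwards only $\mathcal{O}(1)$ messages.
\end{itemize}

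I expect the forwarding step to be the main obstacle. The fix is to have an honest process forward only the $\textsc{propose}$/$\textsc{commit}$ message received from its own leader $\ell_i$, and only if it carries a valid leader proof. A process $p_i$ forwards at line~\ref{line:vba:forward_value} and line~\ref{line:vba:forward_commit} only messages from $\ell_i$, and with one leader per process that is a single message per type. For the conflict check at line~\ref{line:vba:check_no_conflict}, it suffices to receive and process messages. If we want receivers to relay conflicts, they should relay at most one conflicting message, and there are only $\mathcal{O}(1)$ possible distinct senders with leader proofs anyway. I would also note that receiving is free, and only messages sent by honest processes count.

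For bits, every message contains a constant number of the following:
\begin{itemize}
\item values, of constant size or $\mathcal{O}(\kappa)$ if externally-validated values carry a certificate;
\item view numbers, of $\mathcal{O}(\log n)$ bits;
\item process identifiers, of $\mathcal{O}(\log n)$ bits;
\item partial or combined threshold signatures (commit certificate, leader proof, decision proof), each of $\mathcal{O}(\kappa)$ bits.
\end{itemize}
Since $\kappa > \log n$, each message has $\mathcal{O}(\kappa)$ bits. Multiplying by $\mathcal{O}(n)$ messages gives $\mathcal{O}(n\kappa)$ bits per view.
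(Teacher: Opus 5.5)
Your proposal is correct and takes the same approach as the paper. The paper's proof also observes that a view consists of a constant number of steps, that each message carries $\mathcal{O}(\kappa)$ bits, and that communication is all-to-leader and leader-to-all with only $\mathcal{O}(1)$ leaders by \Cref{lemma:vba:leader_count}, plus forwarding over the constant-degree expander. Your step-by-step accounting makes explicit what the paper leaves implicit; the one point to tighten is that a Byzantine $\ell_i$ could send several \textsc{propose} or \textsc{commit} messages, so state that $p_i$ forwards only $\mathcal{O}(1)$ of them rather than deriving this from each process having a single leader.
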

\begin{proof}
    The number of rounds can be observed from the algorithm. Then, observe that each message contains $\mathcal{O}(\kappa)$ bits. Furthermore, communications are done in a leader-to-all, all-to-leader pattern, along with the message forwarding through the expander graph. As there are $\mathcal{O}(1)$ leaders (\Cref{lemma:vba:leader_count}) and the expander graph has constant degree, $\mathcal{O}(n)$ messages and $\mathcal{O}(n\kappa)$ bits are exchanged in each view.
\end{proof}

\subsection{Group Leader Election}\label{subsection:authenticated:election}

We now present a simple way to do the group leader election in a good group (see \Cref{definition:good_group}) efficiently with cryptography (see \Cref{algorithm:group-leader-auth}). Here, we use partial signatures to vote for each process in the group that is predicted to be honest. When those processes have more than $n/2$ votes, they can use the (aggregated) votes to prove that more than $n/2$ processes predict that they are honest.\footnote{In the pseudocode, we use $\lceil (n+1)/2 \rceil$ votes to ensure the quantity is an integer. This does not affect our proofs.} In a good group, taking the smallest process with such a voting proof ensures each honest process will elect the same honest process as the leader. 

\begin{algorithm} [h]
\caption{Authenticated Group Leader Election: Pseudocode (for process $p_i$)}
\label{algorithm:group-leader-auth}
\begin{algorithmic} [1]
\footnotesize

\State \textbf{Input parameters:}
    \State \hskip2em $\mathsf{Good\_Group}$ $\mathit{G}_i$ \BlueComment{see \Cref{definition:good_group}}
    \State \hskip2em $\mathsf{Prediction}$ $\mathit{a}_i$

\medskip
\State $\mathsf{AuthenticatedElection}(\mathit{G}_i, \mathit{a}_i)$:
    \State \hskip2em \textbf{for} each $p_j \in \mathit{G}_i$ with $\mathit{a}_i[j] = 1$:
        \State \hskip4em \textbf{send} $\langle \textsc{vote}, p_j, \tssShareSign{i}{\lceil (n+1)/2 \rceil}{\langle \textsc{vote}, p_j \rangle} \rangle$ to $p_j$

    \smallskip
    \State \hskip2em \textbf{if} $p_i \in \mathit{G}_i$:
        \State \hskip4em \textbf{let} $vote\_proof_i \gets \tssCombine{\lceil (n+1)/2 \rceil}{\{\mathit{psig} \ |\ \mathit{psig} \text{ is received from $\lceil (n+1)/2 \rceil$ $\textsc{vote}$ messages for $p_i $} \}}$
        \State \hskip4em \textbf{broadcast} $\langle \textsc{vote\_proof}, p_i, vote\_proof_i \rangle$

     \smallskip
     \State \hskip2em \textbf{let} $\ell_i$ be the smallest process in $\mathit{G}_i$ that sent valid $\langle \textsc{vote\_proof} \rangle$ message to $p_i$
     \State \hskip2em \textbf{return} $\ell_i$
    
\end{algorithmic}
\end{algorithm}

We now prove the correctness of the algorithm, assuming all honest processes input the same good group $G$.

\begin{lemma}
    If $G$ is a good group, then all honest processes output the same honest process $p_j \in G$.
\end{lemma}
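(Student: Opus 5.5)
The plan is to show that every honest process computes exactly the same set $S$ of processes in $G$ from which it received a valid $\langle \textsc{vote\_proof} \rangle$. I will characterise $S$ as: all honest members of $G$, plus possibly some Byzantine members of $G$. If I can also show that no Byzantine member of $G$ can produce a valid vote proof, then $S$ is exactly the set of honest members of $G$. Since $G$ is good, $S$ is nonempty, and the smallest element of $S$ is the same honest process for everyone.

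The first step handles honest members. Let $q \in G$ be honest. By the definition of a good group, $q$ is not misclassified, so $f + b_q < n/2$, where $b_q$ is the number of honest processes predicting $q$ as Byzantine. Hence the number of honest processes predicting $q$ as honest is
\[
n - f - b_q > n - n/2 = n/2 .
\]
So $q$ receives at least $\lceil (n+1)/2 \rceil$ valid partial signatures on $\langle \textsc{vote}, q \rangle$ from honest processes alone. It can therefore combine them into a valid threshold signature. Because $q$ follows the protocol, it broadcasts this vote proof, and every honest process receives it in the same round. Consequently every honest member of $G$ lies in the set of every honest process.

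The second step handles Byzantine members. Let $u \in G$ be Byzantine. Again $u$ is not misclassified, so $f + b_u < n/2$, where $b_u$ is now the number of honest processes that predict $u$ as honest. Honest processes only issue partial signatures on $\langle \textsc{vote}, u \rangle$ when they predict $u$ honest. So at most $f + b_u < n/2 < \lceil (n+1)/2 \rceil$ partial signatures on that message can exist in total. By the unforgeability of the $(\lceil (n+1)/2 \rceil, n)$-threshold signature scheme, no valid vote proof for $u$ can be produced. Every honest process discards invalid proofs, so $u$ is in no honest process's set.

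The only step needing care is whether a Byzantine process might send its vote proof to some honest processes and not others. The second step removes this issue entirely, because no Byzantine member of $G$ can have a valid proof at all. Thus all honest processes share the identical set $S$ of honest members of $G$. It is nonempty because a good group contains at least one honest process. Therefore $\ell_i = \min S$ is the same honest process $p_j \in G$ for every honest $p_i$.
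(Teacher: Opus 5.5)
Your proof is correct and follows the same route as the paper. Non-misclassification gives every honest member of $G$ more than $n/2$ honest votes, while capping every Byzantine member at $f + b_u < n/2$ partial signatures, so all honest processes see exactly the honest members of $G$ as holding valid vote proofs and pick the smallest one; you simply spell out the counting and the unforgeability step that the paper's short proof leaves implicit.
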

\begin{proof}
    Since $G$ is a good group, each honest process in $G$ will be able to collect more than $n/2$ votes while no Byzantine process can do so. Hence, each honest process in $G$ will broadcast the threshold signature formed from the votes. Therefore, all honest processes will output the smallest honest process in $G$.
\end{proof}

Next, we can observe that the algorithm terminates in $2$ rounds, where in total, $\mathcal{O}(n|G|)$ messages are sent. Furthermore, $\mathcal{O}(n|G| \kappa)$ bits are sent. Therefore,

\begin{lemma}
    \label{lemma:group_leader_auth}
    There is a group leader election algorithm for a good group $\mathit{G}$. The algorithm terminates in $2$ rounds, exchanges $\mathcal{O}(n|G|)$ messages, and exchanges $\mathcal{O}(n|G|\kappa)$ bits.
\end{lemma}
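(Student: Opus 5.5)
The statement is \Cref{lemma:group_leader_auth}: correctness comes from the preceding lemma, and only the complexities need a count over the two rounds of \Cref{algorithm:group-leader-auth}. I would split the proof into a correctness part and a complexity part, assuming (as in the preceding lemma) that all honest processes input the same good group $G$.

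\textbf{Correctness.} I invoke the preceding lemma, which already gives agreement and validity. To make it self-contained I would briefly re-derive it.

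First, take an honest $p_j \in G$. Since $G$ is good, $p_j$ is not misclassified, so $f + b_j < n/2$. Hence at least $n - f - b_j > n/2$ honest processes predict $p_j$ as honest and send it a partial signature. So $p_j$ collects $\lceil (n+1)/2\rceil$ partial signatures, combines them, and broadcasts a valid $\textsc{vote\_proof}$.

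Second, take a Byzantine $u \in G$. At most $f + b_u < n/2$ processes can produce partial signatures on $\langle \textsc{vote}, u\rangle$. By unforgeability of the threshold scheme, no valid proof for $u$ can exist, and an invalid proof is rejected by $\mathsf{CombinedVerify}$.

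Therefore the set of processes in $G$ whose proofs every honest process accepts is exactly the honest members of $G$. This set is nonempty because $G$ is good. Every honest process therefore returns the same smallest member, which is honest and lies in $G$.

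\textbf{Complexity.} In round one, each honest process sends at most one $\textsc{vote}$ message to each member of $G$. This gives at most $n|G|$ messages, each carrying an identifier of $\mathcal{O}(\log n)$ bits and a partial signature of $\mathcal{O}(\kappa)$ bits. Since $\kappa > \log n$, each message has $\mathcal{O}(\kappa)$ bits. In round two, each honest member of $G$ broadcasts one $\textsc{vote\_proof}$ of $\mathcal{O}(\kappa)$ bits, contributing at most $n|G|$ messages. The output is then computed locally, so the algorithm uses $2$ rounds, $\mathcal{O}(n|G|)$ messages, and $\mathcal{O}(n|G|\kappa)$ bits.

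\textbf{Main subtlety.} The only delicate point is the claim that a Byzantine member cannot obtain a valid proof. This relies on the threshold $\lceil (n+1)/2 \rceil$ exceeding the number $f + b_u$ of possible signers, and on the unforgeability of the threshold signature scheme.
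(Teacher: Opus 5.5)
Your proposal is correct and follows the paper's route. The paper takes correctness from the preceding lemma: honest members of a good group collect more than $n/2$ votes, and Byzantine members cannot, since $f+b<n/2$. It then gets the complexities by counting the \textsc{vote} round and the \textsc{vote\_proof} broadcast round. Your version makes the vote counts and the reliance on unforgeability of the threshold scheme explicit, which the paper leaves implicit.
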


We can do a pre-processing once, and then each process can do the election locally. By doing the algorithm on $G = \Pi$, each process will receive the aggregated votes for any honest process that cannot be misclassified. Hence, by exchanging $\mathcal{O}(n^2\kappa)$ bits in $2$ rounds, all elections afterward can be done with zero communication.

\begin{lemma}
    \label{lemma:group_leader_auth_coll}
    There is a group leader election algorithm for a good group $G$. The algorithm has a pre-processing step that terminates in $2$ rounds, exchanging $\mathcal{O}(n^2)$ messages and $\mathcal{O}(n^2\kappa)$ bits. After that, all elections can be done without communication.
\end{lemma}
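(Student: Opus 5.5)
The plan is to run \Cref{algorithm:group-leader-auth} once, as a pre-processing step, on the group $G = \Pi$. After that, for any later group $G'$, each process $p_i$ outputs the smallest process in $G'$ from which it holds a valid vote proof. Running the pre-processing with $G = \Pi$ means every process $p_i$ sends one partial-signature vote to each $p_j$ with $a_i[j]=1$. Every process that gathers $\lceil (n+1)/2 \rceil$ votes combines them into a single threshold signature and broadcasts $\langle \textsc{vote\_proof}, p_j, \mathit{vote\_proof}_j\rangle$. Each honest process keeps the set $S_i$ of processes whose vote proofs it received and verified. The complexity bound then follows directly from \Cref{lemma:group_leader_auth} with $|G| = n$: two rounds, $\mathcal{O}(n^2)$ messages, and $\mathcal{O}(n^2\kappa)$ bits, since each vote and each vote proof is a single $\mathcal{O}(\kappa)$-bit signature. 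Afterwards, an election on a group $G'$ is the purely local rule $\ell_i = \min (G' \cap S_i)$, so it needs no communication.

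The correctness argument needs two facts about the sets $S_i$. First, every honest, non-misclassified process $p_j$ lies in $S_i$ for every honest $p_i$. Such a $p_j$ has $b < n/2 - f$ prediction errors, so at least $n - f - b > n/2$ honest processes predict it honest and send it votes. It can therefore form a valid threshold signature with threshold $\lceil (n+1)/2\rceil$, and since it is honest it broadcasts this proof to everyone. Second, no Byzantine, non-misclassified process $p_u$ lies in any $S_i$. Any votes for $p_u$ come from the $f$ faulty processes or from the $b_u$ honest processes that mispredict it, and $f + b_u < n/2$. By unforgeability of the $(\lceil (n+1)/2\rceil, n)$-threshold scheme, no valid vote proof for $p_u$ can exist.

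Now let $G'$ be a good group. It contains no misclassified process and at least one honest process. By the two facts, $G' \cap S_i$ is exactly the set of honest members of $G'$, and this set is the same for every honest $p_i$ and is nonempty. So every honest process outputs the same honest process, the smallest honest member of $G'$, which gives agreement and validity.

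The one point needing care is that the pre-processing is independent of the group. Vote proofs certify a global fact, "more than $n/2$ processes predict $p_j$ honest", which does not mention any group. So a single run suffices for all groups in all phases, and an honest process receives every proof directly from its honest owner. Byzantine processes may send proofs selectively, so the sets $S_i$ can differ on Byzantine misclassified processes. This does not matter, because good groups contain no misclassified processes.
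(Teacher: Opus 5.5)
Your proposal is correct and takes the same approach as the paper. You run the authenticated election once on $G = \Pi$, so that the group-independent vote proofs serve every later election, and then each process locally elects the smallest member of the group holding a valid proof. Your two facts, that honest non-misclassified processes obtain a proof and Byzantine non-misclassified ones cannot, spell out what the paper's one-line justification leaves implicit.
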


\subsection{Full Algorithm}\label{subsection:authenticated:ba_predictions}

Our full authenticated algorithm is presented at \Cref{algorithm:auth_ba}. First, each process runs the reduction to validated Byzantine agreement (line~\ref{line:auth_ba:certify}), where each process will obtain a pair of (value, certificate). Here, the certificate is a cryptographic object that proves the value is indeed admissible according to strong unanimity (see \Cref{section:sa_to_ev}). Thus, by employing the predicate such that $\extValid{v} = true$ if and only if $v = \langle v', cert \rangle$ where $cert$ is a certificate for $v'$, the decision from a validated Byzantine agreement would satisfy strong unanimity.

Next, we use the validated Byzantine agreement from \Cref{subsection:authenticated:vba} in a similar guess-and-double manner as our unauthenticated algorithm. The algorithm goes through phases. In each phase, the algorithm estimates $\hat{k}$, the maximum number of actual faults and misclassified processes. This estimate is then used to allocate the leader sequence used by each process. Each process will use the first $\hat{k}$ processes as the first $\hat{k}+1$ leaders in the validated Byzantine agreement, followed by $m = \Theta(\hat{k})$ processes obtained through group leader elections on $m$-grouping (see line~\ref{line:auth_ba:pre_vba_start} to line~\ref{line:auth_ba:pre_vba:end}). These ensure that when there are indeed at most $\hat{k}$ faults or misclassified processes, the validated Byzantine agreement will output a valid decision and its decision proof, due to the conditional termination of the sub-algorithm (see \Cref{lemma:vba:conditional_termination}). By doubling the estimate $\hat{k}$ in each phase, the algorithm will terminate in $\mathcal{O}(\min(\{B/n, f\}))$ rounds.

To ensure processes can halt soon enough after obtaining a valid decision, we use the fact that the validated Byzantine agreement also outputs a cryptographic object that proves the legitimacy of the decision (the decision proof). Right before deciding, a process broadcasts the decision and decision proof from the validated Byzantine agreement (line~\ref{line:auth_ba:broadcast_decision}). Therefore, all processes may decide by the end of the next phase (see line~\ref{line:auth_ba:receive_prev_decision}).

Before concluding on the algorithm description, let us remark on a small detail regarding the validated Byzantine agreement. Let us note that the agreement and external validity of the validated Byzantine agreement come from the commit certificate used therein (see \Cref{def:commit_cert}). Thus, in our algorithm, the commit certificate from the validated Byzantine agreement during a phase should be carried over to the one in the phase after. We omit this in our implementation detail for clarity. Alternatively, one can view our full algorithm as a single validated Byzantine agreement instance, where the leaders are chosen alternating between round-robin and group leader elections.

\begin{algorithm} [!ht]
\caption{Authenticated Byzantine Agreement with Classification Predictions: Pseudocode (for process $p_i$)}
\label{algorithm:auth_ba}
\begin{algorithmic} [1]
\footnotesize

\State \textbf{Input parameters:}
    \State \hskip2em $\mathsf{Value}$ $\mathit{v}_i$ 
    \State \hskip2em $\mathsf{Prediction}$ $\mathit{a}_i$

\smallskip
\State \textbf{Local variables:}
    \State \hskip2em $\mathsf{Value}$ $\mathit{certified\_value}_i \gets \bot$
    \State \hskip2em $\mathsf{Certificate}$ $\mathit{cert}_i \gets \bot$ \BlueComment{proves $\mathit{certified\_value}_i$ satisfies strong unanimity; see \Cref{section:sa_to_ev}}
    
\medskip
\State $\mathsf{AuthenticatedAgreement}(\mathit{v}_i, \mathit{a}_i)$:
    \State \hskip2em $(\mathit{certified\_value}, \mathit{cert}_i) \gets \mathsf{StrongCertification}(\mathit{v}_i)$\label{line:auth_ba:certify} \BlueComment{see \Cref{section:sa_to_ev}}
    \State \hskip2em \textbf{for} $\mathsf{Integer}$ $\phi \gets 1$ to $\lceil \log_2(t) \rceil+1$:
        \State \hskip4em \textbf{let} $m \gets \beta \cdot 2^{\phi-1}$\label{line:auth_ba:pre_vba_start} \BlueComment{$m \in \Theta(\hat{k})$ such that there is a good group}
        \State \hskip4em \textbf{let} $L_i \gets [p_1, p_2, \dots, p_{2^{\phi-1} + 1 + m}]$
        \State \hskip4em \textbf{let} $\mathit{G}_i \gets \mathsf{M\_Grouping}(\Pi, m)$ \BlueComment{see \Cref{def:m_grouping}}
        \State \hskip4em \textbf{for} $\mathsf{Integer}$ $j \gets 1$ to $m$: \BlueComment{run in parallel}
            \State \hskip6em $\mathit{L}_i[2^{\phi-1}+1+j] \gets \mathsf{GroupLeaderElection}(\mathit{G}_i[j], \mathit{a}_i)$\label{line:auth_ba:pre_vba:end}

        \smallskip
        \State \hskip4em \textbf{let} $(\mathit{decision}_i, \mathit{decision\_proof}_i) \gets \mathsf{ValidatedAgreement}(\langle \mathit{certified\_value}_i, \mathit{cert}_i\rangle, \mathit{L}_i)$ 

        \State \hskip4em \textbf{if} received valid $\langle \textsc{decision}, \mathit{d}, \mathit{d\_proof} \rangle$ from the previous phase:\label{line:auth_ba:receive_prev_decision}
            \State \hskip6em $\mathit{decision}_i \gets \mathit{d}$; $\mathit{decision\_proof}_i \gets \mathit{d\_proof}$

        \State \hskip4em \textbf{if} $\mathit{decision}_i \ne \bot$:
            \State \hskip6em \textbf{broadcast} $\langle \textsc{decision}, \mathit{decision}_i, \mathit{decision\_proof}_i \rangle$\label{line:auth_ba:broadcast_decision}
            \State \hskip6em \textbf{let} $(v, cert) \gets \mathit{decision}_i$
            \State \hskip6em \textbf{return} $v$ and \textbf{halt} 
\end{algorithmic}
\end{algorithm}

We now prove the correctness of the algorithm. We start by proving that all honest processes will decide. In fact, we will prove a stronger statement: all honest processes decide by the end of phase $\lceil \log_2(f) \rceil+1$ (recall that $f \le t$ denotes the actual number of Byzantine processes).

\begin{lemma}
    \label{lemma:auth_ba:decide_next_phase}
    Suppose an honest process decides at phase $\phi < \lceil \log_2(t) \rceil+1$. Then, each honest process decides by the end of phase $\phi+1$.
\end{lemma}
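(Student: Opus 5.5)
The plan is to trace the decision dissemination mechanism in \Cref{algorithm:auth_ba}. Let $p_i$ be an honest process that decides at phase $\phi$; by the pseudocode it has $\mathit{decision}_i \neq \bot$ together with a valid $\mathit{decision\_proof}_i$. Right before halting, it broadcasts $\langle \textsc{decision}, \mathit{decision}_i, \mathit{decision\_proof}_i \rangle$ at line~\ref{line:auth_ba:broadcast_decision}. The broadcast happens at the end of phase $\phi$.

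The decision proof is valid in one of two ways. Either it is a threshold signature produced in some view of $\mathsf{ValidatedAgreement}$, or $p_i$ itself adopted it from a valid $\textsc{decision}$ message at line~\ref{line:auth_ba:receive_prev_decision}. In both cases it verifies under $\tssVerify{n-t}{\cdot}$, so every honest receiver will accept it as valid.

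Since the network is synchronous and every process goes through the same deterministic phase schedule, the message arrives at every honest process $p_j$ before $p_j$ runs the check at line~\ref{line:auth_ba:receive_prev_decision} in phase $\phi+1$. This requires that phase $\phi+1$ exists, which is exactly what the hypothesis $\phi < \lceil \log_2(t) \rceil + 1$ guarantees. Each phase has a fixed length ($\mathcal{O}(|L_i|)$ views of $\mathcal{O}(1)$ rounds each, by \Cref{lemma:vba:complexities}), and $|L_i|$ depends only on $\phi$, so all honest processes reach that check in the same round.

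Take any honest $p_j$ that has not yet halted. It has $\mathit{decision}_j \neq \bot$ after line~\ref{line:auth_ba:receive_prev_decision} in phase $\phi+1$. Hence it broadcasts, returns the first component of its decision, and halts by the end of phase $\phi+1$.

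The remaining point, which I expect to be the main subtlety, is that $p_j$ may have received several valid $\textsc{decision}$ messages, or already held its own $\mathit{decision}_j$ from $\mathsf{ValidatedAgreement}$. Termination does not depend on which one it keeps. For correctness, I would invoke \Cref{lemma:vba:agreement}: all decision proofs certify the same value, since commit certificates carry over across phases as remarked before the algorithm. Therefore no conflict arises. The lemma itself only requires that some decision is obtained, and this is immediate.
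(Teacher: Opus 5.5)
Your proposal is correct and follows the paper's argument. The deciding process broadcasts its decision and decision proof at line~\ref{line:auth_ba:broadcast_decision}, and every honest process that has not yet halted accepts this message at line~\ref{line:auth_ba:receive_prev_decision} in phase $\phi+1$ and decides there. Your points about synchronized phase lengths and the existence of phase $\phi+1$ are useful elaborations, but the closing discussion of conflicting decision proofs belongs to the agreement proof, not this termination claim.
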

\begin{proof}
    Before deciding, the honest process broadcasts its decision and a decision proof at line~\ref{line:auth_ba:broadcast_decision}. All processes will receive them at phase $\phi+1$ at line~\ref{line:auth_ba:receive_prev_decision} and so, will decide by the end of the phase.
\end{proof}

\begin{lemma}[Termination]
    \label{lemma:auth_ba:termination}
    All honest processes decide by the end of phase $\lceil \log_2(f) \rceil+1$.
\end{lemma}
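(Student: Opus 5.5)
The plan is to exhibit, in phase $\phi^* = \lceil \log_2(f) \rceil+1$ (or earlier), a view of $\mathsf{ValidatedAgreement}$ in which every honest process has the same honest leader, and then invoke \Cref{lemma:vba:conditional_termination}. In phase $\phi$, the leader vector $L_i$ begins with the fixed prefix $p_1, p_2, \dots, p_{2^{\phi-1}+1}$, which is identical for all honest processes because it does not depend on predictions. At $\phi^*$ we have $2^{\phi^*-1} \ge f$, so this prefix contains $2^{\phi^*-1}+1 \ge f+1$ distinct processes. Since only $f$ processes are Byzantine, at least one of them, say $p_j$ with $j \le 2^{\phi^*-1}+1$, is honest. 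Hence view $j$ of phase $\phi^*$ satisfies the hypothesis of \Cref{lemma:vba:conditional_termination}. We also need $\phi^* \le \lceil \log_2(t) \rceil+1$ so that the loop actually reaches this phase, and this holds because $f \le t$.

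Next, I conclude that every honest process that has not yet halted ends $\mathsf{ValidatedAgreement}$ in phase $\phi^*$ with $\mathit{decision}_i \ne \bot$. By \Cref{lemma:vba:conditional_termination}, each honest process obtains a decision and decision proof in view $j$. A later view could overwrite $\mathit{decision\_value}_i$ at line~\ref{line:vba:store_decision}, but only with another value carrying a valid decision proof. So the stored value stays non-$\bot$, and by \Cref{lemma:vba:agreement} it is the same value. The process therefore reaches the $\mathit{decision}_i \ne \bot$ branch and halts in phase $\phi^*$. Processes that halted earlier have already decided, so all honest processes have decided by the end of phase $\phi^*$.

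The main obstacle is justifying that \Cref{lemma:vba:conditional_termination} applies inside a phase even though some honest processes may already have halted in an earlier phase and no longer participate. If an honest process decided in phase $\phi < \phi^*$, then \Cref{lemma:auth_ba:decide_next_phase} already gives that every honest process decides by phase $\phi+1 \le \phi^*$, and we are done. So we may assume no honest process halts before phase $\phi^*$, and all honest processes participate in that phase. The degenerate case $f=0$ should be checked separately: then $\phi^*=1$, and the prefix $p_1,p_2$ consists only of honest processes.

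A secondary point is that commit certificates are carried across phases, as remarked before \Cref{algorithm:auth_ba}. This matters for the leader's proposal in view $j$ to be accepted by the check at line~\ref{line:vba:check_view}. I would argue this as in \Cref{lemma:vba:conditional_termination}: the honest leader selects the value with the highest commit view among the $n-t$ values it receives, and every honest process accepts that proposal.
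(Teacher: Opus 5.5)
Your proposal is correct and follows essentially the same route as the paper. Both first dispose of an earlier honest decision via \Cref{lemma:auth_ba:decide_next_phase}, and then observe that in phase $\phi^*$ the common round-robin prefix of $2^{\phi^*-1}+1 \ge f+1$ leaders contains an honest leader shared by all honest processes, so \Cref{lemma:vba:conditional_termination} applies. Your extra remarks (the loop bound $\phi^* \le \lceil \log_2(t) \rceil + 1$, the $f=0$ case, a decision not being overwritten by a conflicting value, and carried-over commit certificates) spell out details the paper leaves implicit rather than change the argument.
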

\begin{proof}    
    Suppose an honest process decides at some phase $\phi' < \lceil \log_2(f) \rceil+1$. Then, from \Cref{lemma:auth_ba:decide_next_phase}, all honest processes will decide by the end of phase $\phi'+1$. Thus, suppose that by the start of phase $\lceil \log_2(f) \rceil+1$, no honest process has decided yet. As $2^{\lceil \log_2(f) \rceil} \ge f$, there is $j$ with $1 \le j \le f+1$ where for each honest process $p_i$, $\mathit{L}_i[j] = p_\ell$ for some honest process $p_\ell$. From the conditional termination of the validated Byzantine agreement (\Cref{lemma:vba:conditional_termination}), all honest processes will obtain a decision and its decision proof from the sub-protocol and so, will decide by the end of the phase.
\end{proof}

Next, strong unanimity follows from the reduction to external validity and the employed predicate.

\begin{lemma}[Strong unanimity]
    If all honest processes propose the same value $v$, then all honest processes will decide $v$.
\end{lemma}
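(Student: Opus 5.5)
The plan is to trace an honest process's decision back through the two layers of \Cref{algorithm:auth_ba}. The first layer is the \textsf{StrongCertification} reduction (\Cref{section:sa_to_ev}). The second is the validated Byzantine agreement, run with predicate $\extValid{\cdot}$ that holds exactly on pairs $\langle v', cert\rangle$ where $cert$ certifies $v'$.

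\textbf{Step 1: the only certifiable value is $v$.} Suppose all honest processes propose $v$. I will invoke the guarantee of the reduction: when all honest inputs equal $v$, no certificate can exist for any $v'\neq v$. Intuitively, a certificate must be backed by enough signatures that at least one honest signer endorses $v'$, which is impossible under unanimity. So every pair satisfying $\extValid{\cdot}$ has first component $v$. In particular, every honest process's own $\langle \mathit{certified\_value}_i, \mathit{cert}_i\rangle$ is valid with value $v$, so honest inputs to \textsf{ValidatedAgreement} are externally valid as required.

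\textbf{Step 2: every decision satisfies the predicate.} An honest process decides $v''$ only after holding a decision $\langle v'', cert\rangle$ with a valid decision proof. This decision comes either from its own \textsf{ValidatedAgreement} instance at line~\ref{line:vba:decide} or from a $\textsc{decision}$ message received at line~\ref{line:auth_ba:receive_prev_decision}; in the latter case the decision proof is verified. In both cases a decision proof for $\langle v'', cert\rangle$ exists. By \Cref{lemma:vba:external_validity}, $\extValid{\langle v'', cert\rangle}=\mathit{true}$, so $cert$ certifies $v''$, and Step 1 gives $v''=v$.

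\textbf{Main obstacle.} The delicate point is that commit certificates are carried across phases, so external validity must hold for the whole chained execution and not just for a single instance. The induction in \Cref{lemma:vba:external_validity} extends over views of all phases, treating the full algorithm as one VBA instance as the paper remarks. Its base case needs every leader's proposal to be externally valid when it carries no commit certificate. This holds because honest processes only sign $\textsc{ack}$ for valid proposals. For this I will rely on the implicit check that a process acknowledges a proposal without a certificate only if it satisfies $\extValid{\cdot}$.
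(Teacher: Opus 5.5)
Your argument is essentially the paper's: a decision comes with a decision proof, \Cref{lemma:vba:external_validity} makes it externally valid, so it carries a strong-unanimity certificate, and under unanimous inputs that certificate forces the value to be $v$. You also make explicit that honest processes must check external validity of certificate-less proposals before sending \textsc{ack}, which the paper's proof of \Cref{lemma:vba:external_validity} assumes silently. The one piece you omit is that every honest process actually decides, which the paper gets by citing \Cref{lemma:auth_ba:termination}; add that citation to prove the full statement rather than only ``any decision equals $v$''.
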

\begin{proof}
    From \Cref{lemma:auth_ba:termination}, all honest processes will decide. Furthermore, due to the external validity of the validated Byzantine agreement (\Cref{lemma:vba:external_validity}) and the employed $\extValid{\cdot}$ predicate, each decision is a value $v'$ that has a certificate obtained from the strong unanimity to external validity reduction (see \Cref{section:sa_to_ev}). Finally, as a certificate proves that $v'$ is admissible under strong unanimity, $v' = v$.
\end{proof}

Finally, agreement follows from the agreement of the validated Byzantine agreement.

\begin{lemma}[Agreement]
    Each honest process decides the same value $v$.
\end{lemma}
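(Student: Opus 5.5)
The plan is to reduce agreement of \Cref{algorithm:auth_ba} to the agreement property of the validated Byzantine agreement, \Cref{lemma:vba:agreement}. There are two ways an honest process $p_i$ can obtain $\mathit{decision}_i$ together with a proof: from its own $\mathsf{ValidatedAgreement}$ invocation, or by adopting a valid $\langle \textsc{decision}, d, d\_proof\rangle$ message at line~\ref{line:auth_ba:receive_prev_decision}. In both cases the value comes with a decision proof. In the first case this holds because line~\ref{line:vba:store_decision} only stores $v$ together with a threshold signature $\mathit{dp}$ on $\langle \textsc{decide}, v\rangle$. In the second case the message is only accepted if $d\_proof$ verifies. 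So the first step is to note the invariant: whenever an honest process halts with output $v$, where $\mathit{decision}_i = \langle v, cert\rangle$, a decision proof for $\langle v, cert\rangle$ exists.

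The second step is to show that decision proofs are unique across the entire execution, not just within one invocation of \Cref{algorithm:vba}. Following the remark before \Cref{algorithm:auth_ba}, I would treat the full algorithm as a single validated agreement instance whose views are concatenated across phases, with commit certificates and commit values carried over. The argument of \Cref{lemma:vba:agreement} then applies verbatim. Take the first view in which a decision proof for some value $w$ is formed. At least $n-t-f>2\epsilon n$ honest processes forwarded the commit certificate for $w$ through the expander $\mathcal{G}$ of \Cref{lemma:expander_graph}, so more than $2t+1-f$ honest processes store a commit certificate for $w$. From then on the view check at line~\ref{line:vba:check_view} stops them from acknowledging any other value. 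By \Cref{lemma:vba:commit_cert_unique}, no commit certificate, and hence no decision proof, for $w'\neq w$ can arise in that view or any later one.

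The step I expect to be the main obstacle is justifying this carry-over across phases. The view indices $j$ in the signed messages $\langle \textsc{commit}, v, j\rangle$ restart in each phase, so I would make views globally indexed, for example as the pair $(\phi, j)$ ordered lexicographically. I would then check that the ``highest commit view'' rule and the check at line~\ref{line:vba:check_view} work with this ordering, so that the induction in \Cref{lemma:vba:agreement} goes through unchanged. I would also use the unforgeability of threshold signatures with $n-t$ shares, since $n - t - f > 0$ honest shares are needed, to rule out Byzantine processes fabricating a decision proof, including inside adopted $\textsc{decision}$ messages.

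Finally, combining the two steps: every honest decision $\langle v,cert\rangle$ carries a decision proof, and all decision proofs are for the same value. So all honest processes output the same $v$, the first component of that common value.
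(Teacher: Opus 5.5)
Your proposal is correct and follows the paper's proof. The paper likewise argues that every honest decision $\langle v, \mathit{cert}\rangle$ is backed by a decision proof, and that by \Cref{lemma:vba:agreement} only one value can have a decision proof; your treatment of the phases as one concatenated validated-agreement instance with globally ordered views $(\phi, j)$ simply spells out the commit-certificate carry-over that the paper only mentions in its remark before \Cref{algorithm:auth_ba}.
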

\begin{proof}
    From the agreement of the validated Byzantine agreement (\Cref{lemma:vba:agreement}), only a single value may have a decision proof. As each process decides $v$ such that there is a decision proof for $(v, \cdot)$, the lemma statement holds.
\end{proof}

Hence, the algorithm correctly solves Byzantine agreement. We now focus on the complexities.

\begin{lemma}
    \label{lemma:auth:num_phase}
    All honest processes decide within $\log_2(\min\{B/n, f\}) + \mathcal{O}(1)$ phases.
\end{lemma}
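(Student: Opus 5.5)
We prove the bound by showing termination in two ways, one giving $\log_2(f)+\mathcal{O}(1)$ phases and one giving $\log_2(B/n)+\mathcal{O}(1)$ phases, and then taking the minimum. The $f$ bound is exactly \Cref{lemma:auth_ba:termination}: all honest processes decide by the end of phase $\lceil \log_2(f)\rceil+1 = \log_2 f + \mathcal{O}(1)$.

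For the $B/n$ bound, let $k$ be the number of misclassified processes. Since $f<t<(\frac12-\epsilon)n$, the Observation following the definition of misclassified processes gives $k \le c' B/n$ for a constant $c'$ (roughly $1/\epsilon$, since each misclassified process needs at least $\epsilon n$ wrong honest bits). We choose $\beta$ in line~\ref{line:auth_ba:pre_vba_start} to satisfy $\beta > c_1 c'$, where $c_1,c_2$ are the constants of \Cref{lemma:exist_1/2_good_groups}. We also take the $m$-grouping to be the balanced disjoint partition required by that lemma.

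Consider the first phase $\phi$ with $2^{\phi-1} \ge B/n$; this gives $\phi = \log_2(B/n)+\mathcal{O}(1)$. In this phase $m=\beta 2^{\phi-1} > c_1 c' B/n \ge c_1 k$, so \Cref{lemma:exist_1/2_good_groups} yields a good group $G_j$. This holds provided $m < c_2 n$. If that fails, then $2^{\phi-1}=\Omega(n)$, so $B/n = \Omega(n) \ge \Omega(f)$, and the $f$ bound from \Cref{lemma:auth_ba:termination} already dominates. Handling this edge case cleanly is the one delicate step; we absorb it into the $\mathcal{O}(1)$ because $\min\{B/n,f\}$ is then within a constant factor of $f$.

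Given a good group $G_j$, the election in \Cref{lemma:group_leader_auth_coll} makes every honest $p_i$ set $L_i[2^{\phi-1}+1+j]$ to the same honest process. We must check that the one-time preprocessing guarantees vote proofs for all non-misclassified honest processes and for no non-misclassified Byzantine process. It does: a non-misclassified Byzantine process gets fewer than $n/2$ honest-plus-Byzantine votes, and a non-misclassified honest process gets more than $n/2$ honest votes. By \Cref{lemma:vba:conditional_termination}, every honest process then obtains a decision with a decision proof in phase $\phi$ and decides, unless it already decided earlier via \Cref{lemma:auth_ba:decide_next_phase}.

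Combining the two bounds, all honest processes decide within $\min\{\log_2 f,\log_2(B/n)\}+\mathcal{O}(1) = \log_2(\min\{B/n,f\})+\mathcal{O}(1)$ phases. The degenerate case $B<n$ is also covered, since then the first phase already works.
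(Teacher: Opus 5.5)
Your proof is correct and takes essentially the paper's approach: it uses the $f$ bound from \Cref{lemma:auth_ba:termination}, together with a phase $\phi' \le \log_2(B/n)+\mathcal{O}(1)$ in which \Cref{lemma:exist_1/2_good_groups} yields a good group, whose common honest leader triggers \Cref{lemma:vba:conditional_termination}. Your explicit handling of the case $m \ge c_2 n$ is a concrete replacement for the paper's informal split into $B \in \Omega(n^2)$ versus $B \in o(n^2)$, and your check that the one-shot vote-proof preprocessing certifies exactly the right processes fills in a step the paper leaves implicit.
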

\begin{proof}
    If $B \in \Omega(n^2)$, then this follows from \Cref{lemma:auth_ba:termination}. Thus, suppose $B \in o(n^2)$. From \Cref{lemma:auth_ba:termination}, honest processes decide in at most $\lceil \log_2(f) \rceil+1$ phases. Furthermore, since $B \in o(n^2)$, there exists a phase $\phi' \le \log_2(B/n) + \mathcal{O}(1)$ where, due to \Cref{lemma:exist_1/2_good_groups} (which we use to pick the number of groups $m$), there exists a good group from the $m$-grouping done in phase $\phi'$. Hence, from the properties of the group leader election, during phase $\phi'$, there is $j$ with $2^{\phi'-1}+1 <j \le 2^{\phi'-1}+1+m$ such that for each honest process $p_i$, $L_{i}[j] = p_\ell$ for some honest process $p_\ell$. Due to the conditional termination of the validated Byzantine agreement (\Cref{lemma:vba:conditional_termination}), all honest processes will decide by the end of phase $\phi'$. Therefore, all honest processes decide in $\log_2(\min\{B/n, f\}) + \mathcal{O}(1)$ phases.
\end{proof}

Next, observe that in phase $\phi$, the size of $\mathit{L}_i$ for each honest process $p_i$ is $2^{\phi-1}+1+\Theta(2^{\phi-1})$. Hence, during the validated Byzantine agreement at phase $\phi$, each process will go through that many views. Therefore,

\begin{lemma}
    \label{lemma:auth_ba:complexities}
    Assuming each group leader election has $\mathcal{O}(1)$ rounds, all honest processes decide in $\mathcal{O}(\min\{B/n, f\})$ rounds. Furthermore, ignoring the communications done in each group leader election, honest processes exchange $\mathcal{O}(n^2)$ messages and $\mathcal{O}(n^2\kappa)$ bits.
\end{lemma}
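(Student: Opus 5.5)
Since the number of phases is bounded by \Cref{lemma:auth:num_phase}, the proof reduces to bounding the cost of a single phase and summing over phases.

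\textbf{Rounds.} In phase $\phi$, each honest $p_i$ builds $L_i$ of length $2^{\phi-1}+1+m$ with $m=\beta 2^{\phi-1}$, so $|L_i|=\Theta(2^{\phi})$. By assumption, the group leader elections (run in parallel) take $\mathcal{O}(1)$ rounds. By \Cref{lemma:vba:complexities}, each view of $\mathsf{ValidatedAgreement}$ takes $\mathcal{O}(1)$ rounds, so the VBA in phase $\phi$ takes $\mathcal{O}(2^{\phi})$ rounds. The decision broadcast adds $\mathcal{O}(1)$ rounds, and the initial $\mathsf{StrongCertification}$ adds $\mathcal{O}(1)$ rounds (\Cref{section:sa_to_ev}). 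By \Cref{lemma:auth:num_phase}, the last phase is $\phi^*\le \log_2(\min\{B/n,f\})+\mathcal{O}(1)$. The geometric sum $\sum_{\phi\le\phi^*}\mathcal{O}(2^{\phi})=\mathcal{O}(2^{\phi^*})$ then gives $\mathcal{O}(\min\{B/n,f\})$ rounds, with an additive $\mathcal{O}(1)$ when the minimum is tiny. The process that decides last halts at the end of $\phi^*$ by \Cref{lemma:auth_ba:decide_next_phase}, so we only count up to that phase.

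\textbf{Communication.} By \Cref{lemma:vba:complexities}, each view costs $\mathcal{O}(n)$ messages and $\mathcal{O}(n\kappa)$ bits. Phase $\phi$ therefore costs $\mathcal{O}(n2^{\phi})$ messages and $\mathcal{O}(n2^{\phi}\kappa)$ bits for the VBA. The decision broadcast costs $\mathcal{O}(n^2)$ messages, but it happens only once per honest process, since a process broadcasts and then halts. This totals $\mathcal{O}(n^2)$ messages and $\mathcal{O}(n^2\kappa)$ bits. The reduction costs $\mathcal{O}(n^2\kappa)$ bits (\Cref{section:sa_to_ev}). Summing the VBA cost geometrically up to $\phi^*$ gives $\mathcal{O}(n\cdot 2^{\phi^*})$, and since $2^{\phi^*}=\mathcal{O}(\min\{B/n,f\})=\mathcal{O}(t)=\mathcal{O}(n)$, this is $\mathcal{O}(n^2)$ messages and $\mathcal{O}(n^2\kappa)$ bits.

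\textbf{Main obstacle.} The delicate point is that phases go up to $\lceil\log_2 t\rceil+1$, so $2^{\phi}$ could slightly exceed $t$ and $|L_i|$ could exceed $n$. It still stays $\mathcal{O}(n)$, because $m=\beta 2^{\phi-1}=\mathcal{O}(n)$; one should cap $m$ at $n$, consistent with the $m\le n$ requirement of groupings. The leader-proof bound of \Cref{lemma:vba:leader_count} keeps each view linear. Carrying commit certificates across phases adds only $\mathcal{O}(\kappa)$ bits per message, so it does not change these bounds.
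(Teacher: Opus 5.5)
Your proposal is correct and follows essentially the same route as the paper. You bound each phase by $\mathcal{O}(2^{\phi})$ rounds and $\mathcal{O}(n2^{\phi})$ messages via the per-view costs of \Cref{lemma:vba:complexities}, sum over phases up to the bound of \Cref{lemma:auth:num_phase}, and add the one-time $\mathcal{O}(n^2)$-message, $\mathcal{O}(n^2\kappa)$-bit costs of the certification step and of the single decision broadcast per process. Your extra remarks (the explicit geometric sum, $\min\{B/n,f\}\le t<n$, and keeping $|L_i|=\mathcal{O}(n)$ in the last phase) only make explicit details the paper leaves implicit.
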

\begin{proof}
    The size of $\mathit{L}_i$ in phase $\phi$ is $\Theta(2^{\phi})$. As honest processes decide in $\log_2(\min\{B/n, f\}) + \mathcal{O}(1)$ phases (\Cref{lemma:auth:num_phase}) and each view in the validated Byzantine agreement has $\mathcal{O}(1)$ rounds (\Cref{lemma:vba:complexities}), in total, $\mathcal{O}(\min\{B/n, f\})$ rounds are spent for the validated Byzantine agreement. Finally, the reduction to external validity has $\mathcal{O}(1)$ rounds (\Cref{lemma:sa_to_ev}), and each phase has additional $1$ round to broadcast the decision (at line~\ref{line:auth_ba:broadcast_decision}). Thus, all honest processes decide in $\mathcal{O}(\min\{B/n, f\})$ rounds.

    Next, as each view of the validated Byzantine agreement exchanges $\mathcal{O}(n)$ messages and $\mathcal{O}(n\kappa)$ bits, honest processes exchange $\mathcal{O}(n \cdot \min\{B/n, f\})$ messages and $\mathcal{O}(n \cdot \min\{B/n, f\}\kappa)$ bits in the sub-protocol. Then, the reduction to external validity exchanges $\mathcal{O}(n^2)$ messages and $\mathcal{O}(n^2\kappa)$ bits (\Cref{lemma:sa_to_ev}), and broadcasting the decision at line~\ref{line:auth_ba:broadcast_decision} incurs $\mathcal{O}(n)$ messages and $\mathcal{O}(n\kappa)$ bits per honest process. Therefore, $\mathcal{O}(n^2)$ messages and $\mathcal{O}(n^2\kappa)$ bits are exchanged.
\end{proof}

We finally conclude with our main authenticated result.

\authquad*
\begin{proof}
    We use the group leader election from \Cref{lemma:group_leader_auth_coll}. Next, recall that we use the reduction from \Cref{section:sa_to_ev}, validated Byzantine agreement from \Cref{subsection:authenticated:vba}, and \Cref{lemma:exist_1/2_good_groups} (to choose $m$). Therefore, the algorithm tolerates up to $t < (\frac{1}{2}-\epsilon)n$ faulty processes. Finally, combined with \Cref{lemma:auth_ba:complexities}, we obtain the complexities as stated.
\end{proof}

%% file: sections/conclusion.tex
\section{Conclusion}\label{section:conclusion}

In this paper, we present several implementations for Byzantine agreement with classification predictions that achieve optimal $\mathcal{O}(\min\{B/n, f\})$ round complexity for any number of prediction errors $B$. We first give an unauthenticated algorithm with optimal resiliency $t < n/3$ with bit complexity $\tilde{\mathcal{O}}(n^2)$. Then, in the authenticated setting, via threshold signature, we give an algorithm tolerating up to $t < (\frac{1}{2}-\varepsilon)n$ faults that exchanges optimal $\mathcal{O}(n^2\kappa)$ bits. 

Several questions remain. The first one would be to know whether the $\varepsilon$-gap in the authenticated setting is needed. We remark that current approaches based on misclassified processes require this gap, so an algorithm with optimal $t < n/2$ resilience would require some new techniques. 


Lastly, it would be interesting to see whether the classification predictions can help in weaker synchrony settings, such as partial synchrony.  Many of the techniques in this paper seem transferable to partially synchronous settings, and so there is clear room for future work.

%% file: appendix/missing_proof_using_class_pred.tex
\section{Missing Proofs in \Cref{section:using_predictions}}\label{section:missing_proofs_using_pred}

In this section, we provide the proofs for \Cref{lemma:exist_1/2_good_groups} and \Cref{lemma:exist_resilient}. We first prove the existence of good groups needed by our authenticated protocol.

\existgood*
\begin{proof}
    Recall that if a group is not good, then it either has a misclassified process or all of its members are Byzantine. Here, we set $c_1 = \frac{2}{\epsilon}$ and $c_2 = \frac{\epsilon}{1-\epsilon}$. Therefore, at least
    \[
        \begin{split}
        m-k-\frac{f}{\lceil \lfloor n/m \rfloor\rceil} & > m-k-\frac{\cdot(1/2-\epsilon)n}{n/m-1} \\
         & = m - k - \frac{(1/2-\epsilon)mn}{n-m} \\
         & > m - k - \frac{(1/2-\epsilon)mn}{\frac{(1-2\epsilon)n}{(1-\epsilon)}} \\
         & > m - \frac{\epsilon}{2} m - (1/2-\epsilon/2)m \\
         & = m/2
        \end{split}
    \]
    groups are good groups.
\end{proof}

Next, we prove the grouping used in our unauthenticated algorithm. A key part of the proof relies on \emph{extractor}. In particular, the extractor we use is a bipartite multigraph with good connectivity such that the vertices in the left set (corresponding to processes) have low degree (corresponding to the number of its groups).

\existresilient*
\begin{proof}

Our $m$-grouping that guarantees many resilient groups relies on a result from Dufay, Ghinea, and Paramonov \cite{dufay2026optimalca}:

\begin{theorem}[Rephrased from Theorem 36 of \cite{dufay2026optimalca}]\label{theorem:extractor-ca}
    Let $0 < \alpha < \beta < 1, \mu > 0$ be some constants. There exists $D = poly(\frac{1}{\alpha}, \frac{1}{\beta - \alpha}, \frac{1}{\mu}) > 0$ such that for any $1 \leq m \leq n$, there exists a bipartite multigraph $G = (L \cup R, E)$ such that $|L| = n$, $|R| = m$, nodes in $L$ have degree $D$, nodes in $R$ have degree $\Theta(\frac{n}{m})$. Moreover, for any $S \subset L$ such that $|S| \leq \alpha |L|$, let $Q = \{ v \in R, \text{at least a $\beta$-fraction of $v$'s neighbors comes from $S$} \}$, then $|Q| < \mu |R|$. Furthermore, $G$ can be built deterministically in polynomial time.
\end{theorem}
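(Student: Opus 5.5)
The plan is to prove the theorem directly via the expander mixing lemma, using an explicit constant-degree expander on the left side and a block partition to form the right side. This gives a polynomial dependence of $D$ on $\frac{1}{\alpha}, \frac{1}{\beta-\alpha}, \frac{1}{\mu}$. (Alternatively, one can simply invoke Theorem 36 of \cite{dufay2026optimalca}; the sketch below is how I would reprove it.)

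\emph{Construction.} Fix an explicit $d$-regular expander $H$ on vertex set $L$, $|L| = n$, whose second largest absolute eigenvalue satisfies $\lambda \le c_0 \sqrt{d}$ for an absolute constant $c_0$. Explicit families of this kind, e.g.\ Margulis/Gabber--Galil or zig-zag constructions, are computable in polynomial time. Partition $L$ into $m$ contiguous blocks $U_1, \dots, U_m$, each of size $\lfloor n/m \rfloor$ or $\lceil n/m \rceil$, and let $R = \{r_1, \dots, r_m\}$. For every edge $\{u, w\}$ of $H$ with $w \in U_j$, put a (multi)edge between $u \in L$ and $r_j$. Then every left node has degree exactly $D := d$, and every right node has degree $d|U_j| = \Theta(n/m)$. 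The construction is deterministic and polynomial time.

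\emph{Sampling property.} Fix $S \subseteq L$ with $|S| \le \alpha n$, and let $Q$ be the set of right nodes at least a $\beta$-fraction of whose neighbors lie in $S$. Set $U = \bigcup_{r_j \in Q} U_j$, so $|U| = \Theta(|Q| n/m)$. The number of bipartite edges between $S$ and $Q$ equals $e_H(S, U)$, counted with orientation and multiplicity. By the choice of $Q$ this is at least $\beta d |U|$. On the other hand, the expander mixing lemma gives
\[
e_H(S, U) \le \frac{d|S||U|}{n} + \lambda\sqrt{|S||U|} \le \alpha d|U| + \lambda\sqrt{\alpha n |U|}.
\]
Comparing the two bounds yields $(\beta-\alpha)\, d \sqrt{|U|} \le \lambda \sqrt{\alpha n}$, that is,
\[
\frac{|U|}{n} \le \frac{\alpha \lambda^2}{(\beta-\alpha)^2 d^2} \le \frac{c_0^2\, \alpha}{(\beta-\alpha)^2 d}.
\]
Since block sizes differ by at most one, $|Q|/m \le O(|U|/n)$. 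Choosing $d = \Theta\!\left(\frac{\alpha}{(\beta-\alpha)^2 \mu}\right)$, with a large enough hidden constant (and at least a constant), therefore forces $|Q| < \mu m$. This choice is polynomial in the stated parameters.

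\emph{Main obstacle.} The step I expect to need care is the availability of explicit expanders with $\lambda = O(\sqrt{d})$, or even just $\lambda/d$ below a prescribed constant, for \emph{every} $n$ and a degree $d$ that depends only on the constants. My first attempt is to take a standard family on sizes close to $n$, such as $\mathbb{Z}_q^2$ Margulis graphs, and raise it to a constant power to shrink $\lambda/d$. Since $(\lambda/d)^k$ is what matters, polynomial dependence suffices, and the degree $d^k$ stays constant. To handle arbitrary $n$, I would merge or duplicate $O(1)$ vertices; this changes degrees only by constant factors and keeps the spectral bound up to constants.

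A secondary technical point is that the mixing lemma above is applied to ordered pairs with multiplicity. Self-loops and parallel edges are harmless there, which is exactly why the theorem is stated for multigraphs.
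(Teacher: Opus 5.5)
The paper does not prove this statement: it imports it as a black box from Theorem~36 of \cite{dufay2026optimalca}, an extractor-based committee-assignment construction, and only applies it in the proof of \Cref{lemma:exist_resilient}. Your argument is therefore a genuinely different, self-contained route, and its core is correct.
\begin{itemize}
\item Giving $r_j$ the multiset $\bigcup_{w\in U_j}N_H(w)$ as neighbourhood makes left degrees exactly $d$ and right degrees $d|U_j|=\Theta(n/m)$.
\item The number of $S$--$Q$ edges is $\mathbf{1}_S^{\top}A_H\mathbf{1}_U$, which yields $\beta d|U|\le \alpha d|U|+\lambda\sqrt{\alpha n|U|}$.
\item Combined with $|U|\ge |Q|\lfloor n/m\rfloor\ge |Q|n/(2m)$, this gives $|Q|/m\le 2\alpha(\lambda/d)^2/(\beta-\alpha)^2$.
\end{itemize}
Your route gives an elementary proof in which one fixed expander serves every $m$; only the block partition depends on $m$. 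It also isolates exactly what is needed, namely $\lambda/d<(\beta-\alpha)\sqrt{\mu/(2\alpha)}$. The citation instead hides the one nontrivial ingredient, explicit spectral expanders on \emph{every} number of vertices, inside the imported construction.

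That ingredient is where your sketch is inaccurate.
\begin{itemize}
\item \emph{Size of the fold.} Margulis--Gabber--Galil graphs live on $q\times q$ tori, and consecutive squares are about $2\sqrt n$ apart. So you must fold $\Theta(\sqrt n)$ vertices, not $O(1)$. It is more convenient to go from some $n'\in[n,2n]$ down to $n$ by merging disjoint pairs.
\item \emph{Folding does not keep $\lambda/d$ small ``up to constants''.} Merging breaks regularity, which must be restored. If you restore it with self-loops, say $d$ loops on each unmerged vertex of the resulting $2d$-regular graph, then the mixing lemma with $S=T=\{v\}$ forces $\lambda\ge d(1-2/n)$. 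Concretely, a block of unmerged vertices inside $S$ then has half its neighbourhood in $S$, so for $\beta\le 1/2$ every such block lies in $Q$.
\item \emph{What folding does preserve.} The standard folding argument preserves only a constant spectral gap, via edge expansion and Cheeger's inequality.
\end{itemize}
So the operations must be done in this order:
\begin{enumerate}
\item Fold a constant-degree explicit expander to exactly $n$ vertices and restore regularity.
\item Add self-loops to make the graph lazy. Cheeger bounds only $\lambda_2$, while powering also needs the most negative eigenvalue controlled.
\item Only then take the $k$-th power, with $k=O(1+\log(1/\delta))$ and $\delta=(\beta-\alpha)\sqrt{\mu/(2\alpha)}$.
\end{enumerate}
This yields $D=\mathrm{poly}(1/\delta)$, which is polynomial in the stated parameters. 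It is not your $D=\Theta(\alpha/((\beta-\alpha)^2\mu))$, because powering makes $\lambda/d$ small without giving $\lambda=O(\sqrt d)$. That sharper bound would need explicit near-Ramanujan graphs of every degree and every size, for instance Alon's construction. With the steps in this order, your proof establishes the statement.
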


We apply \Cref{theorem:extractor-ca} with $\alpha = 1/3$, $\beta = 1/2$ and $\mu = 1/12$. Let $m \leq n$, there exists $D = poly(2, \frac{1}{1/2 - 1/3}, 12) \in \mathcal{O}(1)$ such that we can find a multigraph $G = (L \cup R, E)$ satisfying the conditions in the theorem. We turn it into an $m$-grouping by mapping each process to a vertex in $L$. Then we map each vertex in $R$ to a group, where the members of the group are the processes corresponding to vertices in $L$ adjacent to it. Because each node in $L$ has degree $D$, each process is a part of at most $D$ groups. 

We define $c = 12D$. Assuming $m \geq c \cdot k$, let us show that at least $5m / 6$ of the groups are resilient. Consider $C \subseteq R$ the groups in $R$ which are not resilient. A group is not resilient if it has at least one weakly misclassified process or a Byzantine majority. Therefore, we can express $C = C_1 \cup C_2$ where $C_1$ are committees with at least one weakly misclassified process and $C_2$ are committees with a Byzantine majority. Because there are $k$ weakly misclassified processes and a process is in at most $D$ groups, we have $|C_1| \leq D \cdot k$. Because $m \geq c \cdot k = 12 D \cdot k$, we get $|C_1| \leq m / 12$. We now consider $C_2$. Let $S$ be the set of Byzantine parties, then we have $|S| = f < n/3 = |L| / 3$. We remark that $C_2$ corresponds exactly to the set of vertices in $R$ with at least a $1/2$-fraction of neighbors from $S$. So, as a property of our graph given by \Cref{theorem:extractor-ca}, $|C_2| < \mu |R| = m / 12$.

Therefore, $|C| \leq |C_1| + |C_2| < m/12 + m/12 = m/6$. Thus, at most $m/6$ groups are not resilient, so at least $5m/6$ groups are resilient.

\end{proof}

%% file: appendix/strong_to_external.tex
\section{Strong Unanimity to External Validity}\label{section:sa_to_ev}

With cryptography, there is a well-known reduction from strong unanimity to external validity \cite{civit2024dolevtight, civit2024dare}. The idea is to produce a \emph{certificate}, a cryptographic object that proves a certain value is admissible according to the strong unanimity. Then, by employing Byzantine agreement with external validity that admits value in the form of $v = \langle v', cert \rangle$, where $cert$ is a certificate proving that $v'$ is admissible according to strong unanimity, we get Byzantine agreement with strong unanimity.

The implementation we use for the reduction is presented at \Cref{algorithm:strong-cert}. Note that this tolerates up to $t < n/2$ Byzantine processes. Here, there are two types of certificates. First, a threshold signature $\mu$ such that $\tssVerify{t+1}{\mu, \langle \textsc{certify}, \mathit{v}\rangle} = \mathit{true}$ for $v \ne \bot$, that proves $v$ is admissible according to strong unanimity. The last one is a threshold signature $\mu$ such that $\tssVerify{t+1}{\mu, \langle \textsc{certify}, \bot\rangle} = \mathit{true}$, which proves that any value is admissible according to strong unanimity.

\begin{algorithm} [h]
\caption{Strong Unanimity Certification: Pseudocode (for process $p_i$)}
\label{algorithm:strong-cert}
\begin{algorithmic} [1]
\footnotesize

\State \textbf{Input parameters:}
    \State \hskip2em $\mathsf{Value}$ $\mathit{proposal}_i$

\medskip
\State $\mathsf{StrongCertification}(\mathit{proposal}_i)$:
    \State \hskip2em \textbf{broadcast} $\langle \textsc{certify}, \mathit{proposal}_i, \tssShareSign{i}{t+1}{\langle \textsc{certify}, \mathit{proposal}_i\rangle} \rangle$

    \smallskip
    \State \hskip2em \textbf{if} received $t+1$ $\langle \textsc{certify}, \mathit{v}, \mathit{psig}\rangle$ for a value $\mathit{v}$:\label{line:certify:receive_t1}
        \State \hskip4em \textbf{let} $\mathit{cert} \gets \tssCombine{t+1}{\{ \mathit{psig} \ |\ \mathit{psig} \text{ is received from $t+1$ $\textsc{proposal}$ messages} \} }$
        \State \hskip4em \textbf{broadcast} $\langle \textsc{certified}, \mathit{v}, \mathit{cert} \rangle$
    \State \hskip2em \textbf{else}:
        \State \hskip4em \textbf{broadcast} $\langle \textsc{no-common}, \tssShareSign{i}{t+1}{\langle \textsc{certify}, \bot\rangle} \rangle$

    \smallskip
    \State \hskip2em \textbf{if} received valid $\langle \textsc{certified}, \mathit{v}, \mathit{cert} \rangle$:
        \State \hskip4em \textbf{return} $(\mathit{v}, \mathit{cert})$\label{line:certify:normal_return}
    \State \hskip2em \textbf{else} \BlueComment{must receive $t+1$ $\textsc{no-common}$}
        \State \hskip4em \textbf{let} $\mathit{cert} \gets \tssCombine{t+1}{\{ \mathit{psig} \ |\ \mathit{psig} \text{ is received from $t+1$ $\textsc{no-common}$ messages} \} }$
        \State \hskip4em \textbf{return} $(\mathit{proposal}_i, \mathit{cert})$\label{line:certify:no_common_return}
\end{algorithmic}
\end{algorithm}

We first prove that if $t < n/2$, any value with a certificate must be valid according to strong unanimity, and each honest process outputs a value along with a certificate supporting it.

\begin{lemma}
    If a certificate for a value $v$ exists, then $v$ is admissible according to strong unanimity.
\end{lemma}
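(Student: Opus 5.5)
The plan is a case analysis on the two kinds of certificate produced by \Cref{algorithm:strong-cert}, using the unforgeability of the $(t+1,n)$ threshold signature scheme. The key fact is that a valid threshold signature under threshold $t+1$ requires $t+1$ distinct partial signatures. Since at most $f \le t$ processes are Byzantine, at least one of those partial signatures must come from an honest process. The argument then reduces to what an honest process is willing to partially sign.

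\emph{Case $v \ne \bot$.} Suppose $\mu$ satisfies $\tssVerify{t+1}{\mu, \langle \textsc{certify}, v\rangle} = \mathit{true}$. Then some honest process $p_i$ produced $\tssShareSign{i}{t+1}{\langle \textsc{certify}, v\rangle}$. Inspecting the algorithm, an honest process only partially signs $\langle \textsc{certify}, x\rangle$ with $x \ne \bot$ for its own proposal $x = \mathit{proposal}_i$ in the first broadcast. Hence $v$ is the proposal of some honest process. If all honest processes propose the same value $v^*$, then $v = v^*$, so $v$ is admissible under strong unanimity.

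\emph{Case $\bot$.} Suppose $\mu$ certifies $\langle \textsc{certify}, \bot\rangle$. Then some honest process $p_i$ sent a $\textsc{no-common}$ message. By the code, this happens only if $p_i$ did not receive $t+1$ $\langle \textsc{certify}, v, \cdot\rangle$ messages for any single value $v$. I will show this is impossible when all honest processes share a proposal $v^*$. In that case all $n-f \ge n-t \ge t+1$ honest processes broadcast $\textsc{certify}$ for $v^*$, using $t < n/2$. By synchrony, $p_i$ receives them all in the same round, which contradicts the condition for sending $\textsc{no-common}$. Hence, when honest inputs are unanimous, no $\bot$-certificate exists. When they are not unanimous, strong unanimity imposes no constraint, so any value is admissible, which is exactly what a $\bot$-certificate claims.

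The main subtlety is this $\bot$ case. It relies on the bound $n - t \ge t+1$, which is where $t < n/2$ enters. It also relies on the honest process being unable to erroneously skip the threshold test. The remaining steps are direct appeals to the unforgeability of threshold signatures and to the protocol code.
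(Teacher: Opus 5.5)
Your proof is correct and follows essentially the same argument as the paper. A $(t+1)$-threshold certificate needs at least one honest partial signature, an honest process signs $\langle \textsc{certify}, x\rangle$ with $x \ne \bot$ only for its own proposal, and under unanimous honest inputs the $n-f \ge n-t \ge t+1$ honest $\textsc{certify}$ messages stop every honest process from signing for $\bot$; you also make explicit the admissibility reasoning in the non-unanimous case, which the paper leaves implicit.
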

\begin{proof}
    If $\mu$ such that $\tssVerify{t+1}{\mu, \langle \textsc{certify}, \mathit{v}\rangle} = \mathit{true}$ for $v \ne \bot$ exists, at least one honest process must have partially signed it. As $v$ is a proposal of some honest process, $v$ is admissible. Next, we argue that if all honest processes input the same value $v$, $\mu$ such that \\ $\tssVerify{t+1}{\mu, \langle \textsc{certify}, \mathit{\bot}\rangle} = \mathit{true}$ cannot exist. This is because at least $n-t \ge t+1$ honest processes will broadcast partial signatures for $v$, and so, there will be no honest process that partially signs for $\bot$.
\end{proof}

\begin{lemma}
    Each honest process outputs $(v, cert)$, where $v$ is admissible according to strong unanimity and $cert$ is a certificate for $v$.
\end{lemma}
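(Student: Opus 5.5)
The plan is to show two things: every honest process reaches one of the two return statements in \Cref{algorithm:strong-cert} with a valid certificate, and the returned value is admissible. Admissibility then follows directly from the previous lemma.

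First I fix an honest process $p_i$ and split on whether it received a valid $\langle \textsc{certified}, v, \mathit{cert}\rangle$ message in the third round.

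\emph{Case 1: it did.} Validity means $\tssVerify{t+1}{\mathit{cert}, \langle \textsc{certify}, v\rangle} = \mathit{true}$. So $\mathit{cert}$ is a certificate for $v$, and by the previous lemma $v$ is admissible under strong unanimity. The return at line~\ref{line:certify:normal_return} is therefore fine.

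\emph{Case 2: it did not.} I need to justify the comment in the pseudocode that $p_i$ must then have received $t+1$ $\textsc{no-common}$ messages.
\begin{itemize}
\item Suppose some honest process $p_j$ received $t+1$ $\textsc{certify}$ messages for a common value $v$ at line~\ref{line:certify:receive_t1}. It combines them and broadcasts a valid $\textsc{certified}$ message, so every honest process, including $p_i$, receives it. That would put $p_i$ in Case 1.
\item Hence in Case 2 no honest process reached that branch, and all $n-t$ honest processes broadcast $\textsc{no-common}$ with a partial signature on $\langle \textsc{certify}, \bot\rangle$.
\item Since $t < n/2$, we have $n-t \ge t+1$, so $p_i$ receives at least $t+1$ such partial signatures. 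It can combine them into a certificate for $\bot$.
\item By the previous lemma, this certificate can exist only if the honest processes did not all propose the same value. So every value is admissible, in particular $\mathit{proposal}_i$.
\end{itemize}
Thus the output $(\mathit{proposal}_i, \mathit{cert})$ at line~\ref{line:certify:no_common_return} is admissible and certified. I would note that a certificate for $\bot$ is interpreted as certifying any value.

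The main obstacle is the synchrony argument in Case 2. I must check that the $\textsc{certified}$ broadcast and the $\textsc{no-common}$ broadcast happen in the same round, so that "no valid $\textsc{certified}$ received" really implies "every honest process took the else branch." This is exactly where honest-to-honest delivery in a synchronous round is used.
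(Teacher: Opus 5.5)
Your proof is correct and is essentially the paper's argument. The paper splits on whether some honest process received $t+1$ \textsc{certify} messages for a common value. In that case everyone receives the \textsc{certified} message; otherwise all $n-t \ge t+1$ honest processes send \textsc{no-common}. You recover exactly this split by contrapositive from $p_i$'s local view, and your version is slightly more careful: it also covers an honest process that returns at the normal branch because a Byzantine sender delivered a valid \textsc{certified} message even though no honest process took the first branch, a case the paper's ``otherwise'' clause glosses over and your appeal to the previous lemma handles.
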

\begin{proof}
    Suppose that at line~\ref{line:certify:receive_t1}, an honest process received $t+1$ $\textsc{certify}$ messages for a value $v$. Then, all honest processes will receive $v$ and its certificate at line~\ref{line:certify:normal_return}. Otherwise, all honest processes will broadcast $\textsc{no-common}$ messages, and since $n-t \ge t+1$, all honest processes will obtain a certificate for any value at line~\ref{line:certify:no_common_return}. 
\end{proof}

Finally, the complexities are straightforward from the algorithm.

\begin{lemma}
    \label{lemma:sa_to_ev}
    The algorithm terminates in $2$ rounds, exchanges $\mathcal{O}(n^2)$ messages, and exchanges $\mathcal{O}(n^2\kappa)$ bits.
\end{lemma}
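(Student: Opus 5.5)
The plan is to read off the three complexity claims of \Cref{lemma:sa_to_ev} directly from \Cref{algorithm:strong-cert}. The protocol has exactly two communication steps. In the first, every honest process broadcasts a single $\langle \textsc{certify}, \mathit{proposal}_i, \mathit{psig}\rangle$ message. In the second, every honest process broadcasts either a $\langle \textsc{certified}, v, \mathit{cert}\rangle$ message or a $\langle \textsc{no-common}, \mathit{psig}\rangle$ message; exactly one branch is taken, so it sends one message to each process. The decision at lines~\ref{line:certify:normal_return} and \ref{line:certify:no_common_return} is purely local computation on what was received in the second round. The round bound of $2$ follows.

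For the message count, each of the at most $n$ honest processes sends $n$ messages per round over $2$ rounds. That gives at most $2n^2 = \mathcal{O}(n^2)$ messages.

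For the bit count, bound the size of each message. Each message carries a constant-size tag, a value of constant size (we assumed inputs have constant size, and $\bot$ is constant size too), and exactly one partial signature or combined threshold signature. By the threshold scheme in the preliminaries, the signature has $\mathcal{O}(\kappa)$ bits. So each message has $\mathcal{O}(\kappa)$ bits, and the total is $\mathcal{O}(n^2\kappa)$.

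The only point needing care is that an honest process sends exactly one message per recipient in round two: it never sends several $\textsc{certified}$ messages even if it received $t+1$ matching $\textsc{certify}$ messages for more than one value. We fix that the process picks a single such $v$, for instance the smallest. Byzantine traffic is not counted, since the complexity measures only count messages sent by honest processes. I expect no real obstacle beyond stating these observations.
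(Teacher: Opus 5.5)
Your proposal is correct and matches the paper, which just says the complexities are straightforward from the algorithm. Your argument is that reading: two all-to-all broadcast rounds, each message carrying a constant-size value and one $\mathcal{O}(\kappa)$-bit (partial or combined) threshold signature. Your point that an honest process should send only one $\textsc{certified}$ message, even if several values reach $t+1$ support, is a worthwhile detail the paper's pseudocode leaves implicit; it matters for the bit bound when $t$ is small.
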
